\def\wedge{\mathcal{W}}
\def\obj{\mathcal{O}}
\newcommand{\flag}{\mathord{\it flag}}
\newcommand{\Ob}{\mathcal{O}}
\newcommand{\TRUE}{\mathord{\it true}}
\newcommand{\FALSE}{\mathord{\it false}}
\title{Probing Convex Polygons with a Wedge\thanks{This work was supported by NSERC and FQRNT.}}
\author{Prosenjit Bose and Jean-Lou De Carufel and Alina Shaikhet and Michiel Smid}
\institute{Carleton University}
\begin{document}

\newtheorem{observation}[question]{Observation}

\newtheorem{algorithm}[question]{Algorithm}

\maketitle

\abstract{Minimizing the number of probes is one of the main challenges in reconstructing geometric objects with probing devices. In this paper, we investigate the problem of using an \emph{$\omega$-wedge} probing tool to determine the exact shape and orientation of a convex polygon. An $\omega$-wedge consists of two rays emanating from a point called the \emph{apex} of the wedge and the two rays forming an angle $\omega$. To probe with an $\omega$-wedge, we set the direction that the apex of the probe has to follow, the line $\overrightarrow L$, and the initial orientation of the two rays. A \emph{valid $\omega$-probe} of a convex polygon $\obj$ contains $\obj$ within the $\omega$-wedge and its outcome consists of the coordinates of the apex, the orientation of both rays and the coordinates of the closest (to the apex) points of contact between $\obj$ and each of the rays.

We present algorithms minimizing the number of probes and prove their optimality. In particular, we show how to reconstruct a convex $n$-gon (with all internal angles of size larger than $\omega$) using $2n-2$ $\omega$-probes; if $\omega = \pi/2$, the reconstruction uses $2n-3$ $\omega$-probes. We show that both results are optimal. Let $N_B$ be the number of vertices of $\obj$ whose internal angle is at most $\omega$, (we show that $0 \leq N_B \leq 3$). We determine the shape and orientation of a general convex $n$-gon with $N_B=1$ (respectively $N_B=2$, $N_B=3$) using $2n-1$ (respectively $2n+3$, $2n+5$) $\omega$-probes. We prove optimality for the first case. Assuming the algorithm knows the value of $N_B$ in advance, the reconstruction of $\obj$ with $N_B=2$ or $N_B=3$ can be achieved with $2n+2$ probes,- which is optimal. 
}

\section{Introduction}
\label{sec:introduction}

Geometric probing is a branch of computational geometry looking into how to determine shapes of unknown geometric objects by using special measurements - \emph{probes}. It is an area of research with various applications in robotics, automated manufacturing and tomography to mention only a few. Geometric probing was introduced by Cole and Yap~\cite{DBLP:journals/jal/ColeY87} in 1983. Many probing tools, together with reconstruction algorithms, have been developed - finger probes~\cite{DBLP:journals/jal/ColeY87}, hyperplane (or line) probes~\cite{DBLP:conf/stoc/DobkinEY86,DBLP:journals/ipl/Li88a}, diameter probes~\cite{DBLP:journals/ijrr/RaoG94}, x-ray probes~\cite{DBLP:journals/siamcomp/EdelsbrunnerS88,Gardner92}, histogram (or parallel x-ray) probes~\cite{MeijerSkiena96}, half-plane probes~\cite{DBLP:journals/jal/Skiena91}, composite probes~\cite{DBLP:journals/amai/BrucksteinL91,DBLP:journals/ipl/Li88a,Skiena88Dissertation} among others. 
Another closely related area, called geometric testing, studies a verification problem: find a set of probes to determine if a given set of geometric objects contains one which is equivalent to a certain query object (see~\cite{RomanikSurvey} for a survey).

\begin{table}[ht]
\caption{Summary of results for reconstructing convex polygons. $N_B$ stands for the number of angles smaller than $\omega$ in the convex polygon.}
\centering
\begin{tabular}{l|c|c|c|c}
\hline\hline 
Probing Tool & Notes & Lower Bound & Upper Bound & References \\[0.5ex]
\hline
finger & & $3n-1$ & $3n$ & ~\cite{DBLP:journals/jal/ColeY87} \\[0.3ex]
\hline
hyperplane/line & & $3n+1$ & $3n+1$ & ~\cite{DBLP:journals/ipl/Li88a} \\[0.3ex]
\hline
diameter & Reconstruction is not possible & * & * & ~\cite{DBLP:journals/ijrr/RaoG94} \\[0.3ex]
\hline
x-ray & & $2n$ & $5n+19$ & ~\cite{DBLP:journals/siamcomp/EdelsbrunnerS88} \\[0.3ex]
\hline
half-plane & & $2n$ & $7n+7$ & ~\cite{DBLP:journals/jal/Skiena91}\\[0.3ex]
\hline
\multirow{5}{*}{$\omega$-wedge} & $\omega = \pi/2$, $N_B = 0$ & $2n-3$ & $2n-3$ & \multirow{5}{*}{our results} \\[0.5ex]
  & $0 < \omega < \pi/2$, $N_B = 0$ & $2n-2$ & $2n-2$ \\[0.3ex]
  & $0 < \omega \leq \pi/2$, $N_B = 1$ & $2n-1$ & $2n-1$ \\[0.3ex]
  & $0 < \omega \leq \pi/2$, $N_B = 2$ & $2n+2$ & $2n+3$ \\[0.3ex]
  & $0 < \omega \leq \pi/2$, $N_B = 3$ & $2n+2$ & $2n+5$ \\[0.3ex]
\hline
\end{tabular}
\label{table:table1}
\end{table}

Various probing schemes along with upper and lower bounds on performance are summarized in the upper part of Table~\ref{table:table1} (see the review paper of Skiena~\cite{DBLP:journals/algorithmica/Skiena89}, discussing the methodology as well the challenges of geometric probing).

\vspace{0.5em}
For a given probing tool, we seek to reconstruct an object with as few probes as possible; hence much effort has gone to finding an algorithm with sufficiently tight upper and lower bounds or, ultimately, an \emph{optimal} algorithm. We introduce a new probing tool - an $\omega$-wedge. It consists of two rays emanating from a point called the {\it apex} of the wedge, forming an angle $\omega$, where $\omega$ is a fixed real number with $0 < \omega \leq \pi/2$. To probe a convex polygon $\obj$ we choose a directed line 
$\overrightarrow L$ and position the apex of the $\omega$-wedge on $\overrightarrow L$ such that $\obj$ and the positive direction of $\overrightarrow L$ are contained in the wedge. Initially, $\obj$ does not touch the rays of the wedge. Informally, imagine the $\omega$-wedge moving from its initial position along and in the direction of $\overrightarrow L$ until both rays contact $\obj$ in which case the wedge cannot move any further. A valid $\omega$-probe of a convex polygon $\obj$ is a placement of an $\omega$-wedge, such that $\obj$ is contained in the wedge and touches both of its rays. An outcome of a valid $\omega$-probe consists of the coordinates of the apex, the orientation of both rays and the coordinates of the points of contact between $\obj$ and each of the rays closest to the apex. Our probing tool generalizes previous work (see e.g.,~\cite{DBLP:conf/cccg/BoseC10},~\cite{DBLP:conf/isaac/FleischerW09}). Our method of angular probing was inspired by problems on enclosing triangles~\cite{DBLP:conf/cccg/BoseC10}, namely searching for all $\omega$-angle triangles of minimum area which enclose a  given set of points in the plane. At the same time, the way of probing with a wedge can be thought of as a photographer locating himself and orienting his camera such that $\obj$ fits exactly in the field of view of his camera~\cite{DBLP:conf/isaac/FleischerW09}.

\vspace{0.5em}
We present an algorithm that reconstructs a convex $n$-gon with all internal angles of size bigger than $\omega$ using $2n-2$ $\omega$-probes (for $0 < \omega < \pi/2$). When $\omega = \pi/2$, the reconstruction uses $2n-3$ $\omega$-probes. We prove optimality for both cases. We show that $2n-1$ probes are necessary and sufficient to reconstruct a convex polygon with exactly one vertex whose internal angle is at most $\omega$. The main difficulty that is common to reconstruction problems, both in theory and in practice, is the reconstruction of sharp corners (or sharp edges if we talk about $3$ dimensions); see, e.g.,~\cite{DBLP:conf/siggraph/AmentaBK98}. Those areas of so called high-curvature cause difficulties for our probing tool as well. When the polygon has $2$ or $3$ angles smaller than or equal to $\omega$, our results are \emph{almost} optimal. In particular, for polygons with exactly $2$ (respectively, $3$) angles of size at most $\omega$, we show a reconstruction strategy that uses $2n+3$ (respectively, $2n+5$) $\omega$-probes, while $2n+2$ probes are necessary.  Our results are summarized in Table~\ref{table:table1}. Our reconstruction strategies are adaptive, i.e., the choice of parameters for each probe depends on all previous outcomes. We also prove lower bounds for all of our algorithms using adversarial arguments.

\vspace{0.5em}
Our probing tool is similar to finger probing~\cite{DBLP:journals/jal/ColeY87}, where each probe is formally defined to be a directed line $\overrightarrow L$. In finger probing there is a point moving along and in the direction of $\overrightarrow L$ until it contacts $\obj$. In our research the point is equipped with two rays. The movement of the apex $q$ along $\overrightarrow L$ terminates when both rays contact $\obj$ ($q$ is not necessarily touches $\obj$). Our model of probing achieves better results than finger probing (consider Table~\ref{table:table1}) mainly because the outcome of a valid $\omega$-probe contains one or two points of contact with $\obj$ (which are vertices of $\obj$), while in finger probing the outcome of a successful probe is a single point (not necessarily a vertex of $\obj$). Moreover, almost every edge of a polygon, reconstructed via finger probing, contains two points of contact in its interior. In our model only one probe is sufficient to verify an edge.
 
\vspace{0.5em}
Our model of probing should not be confused with diameter or "parallel-jaw gripper`` model of probing~\cite{DBLP:journals/ijrr/RaoG94}, where the shape of a polygon is determined from a sequence of projections. Rao and Goldberg showed that reconstruction of $\obj$  with a diameter probing tool is not possible, -- for a given set of diameter measurements, there is an infinite set of polygonal shapes consistent with these measurements. We define the angle between the rays of our probing tool to be strictly bigger than $0$. Thus, the rays of an $\omega$-wedge are never parallel. Another difference is that a valid $\omega$-probe returns points of contact with $\obj$, while a diameter probe does not,- it returns only the distance between supporting lines.

\vspace{0.5em}
Our probing tool is quite simple compared to other probing devices such as x-ray or histogram. It can be manufactured from two tactile whisker-like sensors and does not require complicated software.
 
\vspace{0.5em}
The organization of the paper is as follows. In Section~\ref{sec:definitions} we present definitions related to our probing method. In Section~\ref{sec:preliminaries} we give some observations, review known facts and prove some fundamental properties related to $\omega$-probing. We concentrate on the reconstruction of convex polygons in Sections~\ref{sec:Polygons_without_B-vertices} and~\ref{sec:Polygons_with_B-vertices}. In Section~\ref{sec:Polygons_without_B-vertices} we present an optimal algorithm for reconstructing convex polygons that do not have \emph{narrow vertices} (refer to Definition~\ref{def:B-vertex}). Reconstruction of general convex polygons together with the discussion on upper and lower bounds on the performance are given in Section~\ref{sec:Polygons_with_B-vertices}. Concluding remarks and possible directions for future research are presented in Section~\ref{sec:Conclusion}.

\section{Definitions}
\label{sec:definitions}

Given a convex polygon $\obj$ in the plane, the goal is to probe $\obj$ to determine its exact shape and orientation with a minimum number of probes.  We probe $\obj$ with a wedge $W$ consisting of two infinite arms making an angle of $\omega$ (for some fixed angle $0 < \omega \leq \pi/2$). An $\omega$-probe of $\obj$ returns the coordinates of the apex of $W$ together with the orientation of the arms and one point of contact of each arm with $\obj$, such that the returned point of contact is the point of $\obj$ that is closest to the apex. This is formalized in the following definitions.
For the rest of this paper, $\obj$ is a convex polygon in the plane and $\omega$ is an angle such that $0 < \omega \leq \pi/2$. At the end of the paper, we discuss the case where $\pi/2 < \omega < \pi$ (refer to Section~\ref{sec:Conclusion}).

\begin{definition}[$\omega$-Wedge]
\label{def:omega-wedge}
Let $q$ be a point in the plane. Let
$H_1$ and $H_2$ be two rays emanating from $q$ such that the angle
between $H_1$ and $H_2$ is $\omega$ and $(q, H_1, H_2)$ forms a left-turn.  The closed set formed
by $q$, $H_1$, $H_2$ and the points between $H_1$ and $H_2$
is an \emph{$\omega$-wedge}, denoted $\wedge(\omega,q,H_1,H_2)$.
The point $q$ is the \emph{apex} of the $\omega$-wedge. (refer to Figure~\ref{fig:omega_wedge}).
\end{definition}

\begin{figure}
    \begin{center}
        \subfigure[]{
            \label{fig:omega_wedge}
            \includegraphics[width=0.15\textwidth]{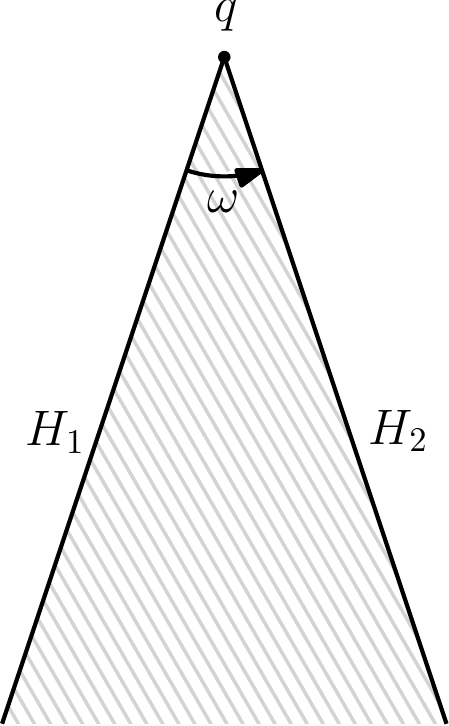}
        }%
         \subfigure[]{
            \label{fig:probe_example}
            \includegraphics[width=0.4\textwidth]{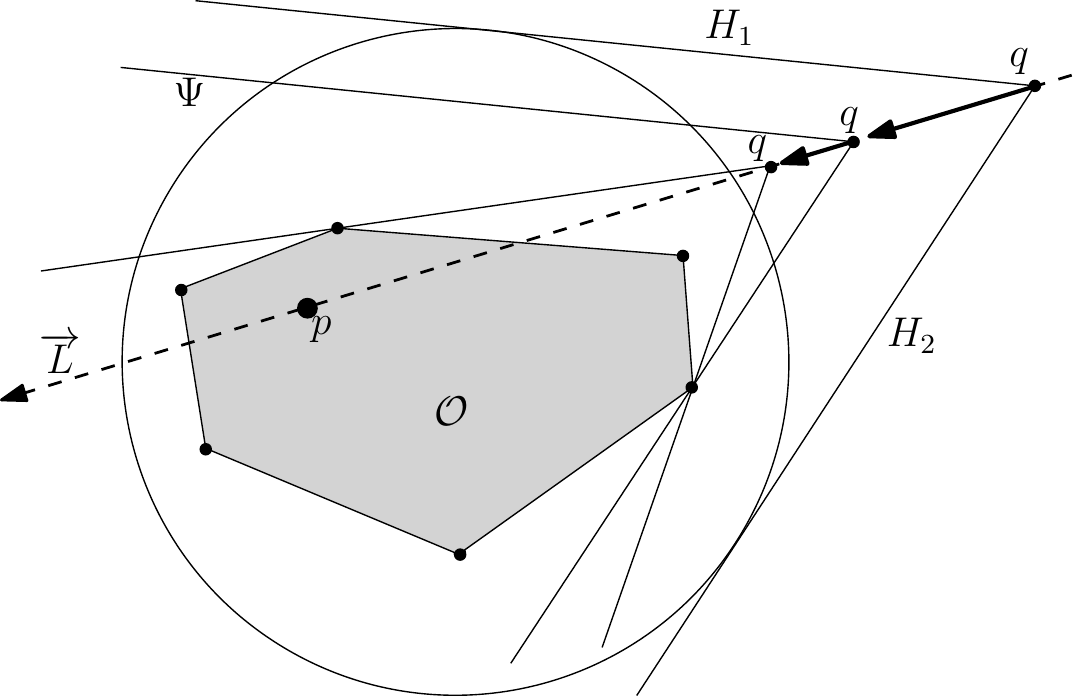}
        }%
        \subfigure[]{
            \label{fig:miss_O}
            \includegraphics[width=0.4\textwidth]{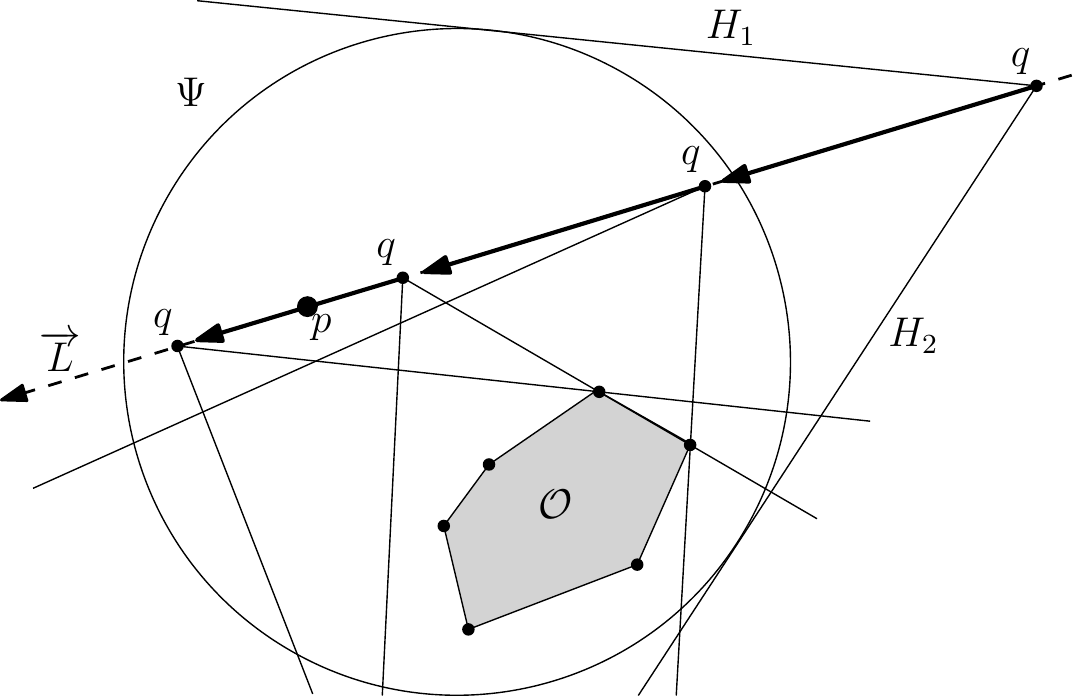}
        }%
    \end{center}
    \vspace{-1.5em}
    \caption{Definition and overview of the probing tool. \textbf{(a)} $\omega$-wedge; \textbf{(b)} Probing along a directed line $\protect \overrightarrow{L}$ that intersects $\obj$; \break \textbf{(c)} Despite the fact that $\obj$ is inside $W$, the probe will miss $\obj$ because $p \notin \obj$.}
%    \vspace{-1.0em}
\end{figure}

To probe with an $\omega$-wedge, we set the direction that the apex $q$ of the probe has to follow, the line $L$, and the initial orientation of the probe. As the apex $q$ of the probe moves along $L$, the arms can move freely provided that they maintain an angle of $\omega$ and $\obj$ stays between the arms. We do not have any control over rotation of the arms.
When $q$ cannot move anymore, then we have a valid $\omega$-probe of $\obj$ (refer to Definition~\ref{def:omega_probe}).

\begin{definition}[valid $\omega$-Probe of $\obj$]
\label{def:omega_probe}
A valid $\omega$-probe of $\obj$ is a quadruple ($\omega$, $q$, $H_1$, $H_2$) such that $W=\wedge(\omega,q,H_1,H_2)$ is an $\omega$-wedge; $\obj \subseteq W$; $H_1$ (respectively $H_2$) contains at least one point of $\obj$.
\end{definition}

\begin{definition}[Probing along a directed line $\overrightarrow L$]
\label{def:probe_along_L}
Let $\overrightarrow L$ be a directed line and ($\omega$, $q$, $H_1$, $H_2$) be a valid $\omega$-probe of $\obj$  such that the apex $q$ is on $\overrightarrow L$, and the ray from $q$ in the direction of $\overrightarrow L$ intersects $\obj$. We say that ($\omega$, $q$, $H_1$, $H_2$) is a valid $\omega$-probe of $\obj$ \emph{with respect to $\overrightarrow L$}. Refer to Figure~\ref{fig:probe_example}.
\end{definition}

We assume that we know a point $p$ and a circle $\Psi$ such that $p \in \obj$ and $\obj \subset \Psi$. The hypothesis $\obj \subset \Psi$ guarantees that $\obj$ is inside $W$ (refer to~\cite{DBLP:conf/isaac/FleischerW09} for instance). The hypothesis $p\in \obj$ guarantees that $H_1$ and $H_2$ will eventually touch $\obj$ as $q$ approaches $p$ (refer to~\cite{DBLP:journals/jal/ColeY87} for instance). Otherwise, we could miss the object.
Indeed, suppose that for some probe we choose a point $p \in \Psi \setminus \obj$ to aim at. Suppose that we choose a direction $\overrightarrow L$ containing $p$ that does not intersect $\obj$
(refer to Figure~\ref{fig:miss_O}). As $q$ moves along $\overrightarrow L$,
$H_2$ will eventually touch $\obj$. Then $H_2$ will turn counter-clockwise around $\obj$ as $q$ goes further. But $q$ can go forever along $\overrightarrow L$ and $H_1$ will never touch $\obj$. So we miss $\obj$ even though $\obj$ is inside $W$. This shows that for every probe, we need to aim towards a point in $\obj$.

\begin{definition}[Outcome of probing along a directed line $\overrightarrow L$]
\label{def:outcome_of_probing}
The outcome of probing with an $\omega$-wedge along a directed line $\overrightarrow L$ consists of the following values: $q$, $H_1$, $H_2$, $p_1$ and $p_2$, such that $\wedge(\omega,q,H_1,H_2)$ is a valid $\omega$-probe and the point $p_1$ (respectively $p_2$) is the point of $\obj \cap H_1$ (respectively $\obj \cap H_2$) that is closest to $q$. 
\end{definition}

Notice that $p_1$ is on or to the right of $\overrightarrow{L}$, $p_2$ is on or to the left of $\overrightarrow{L}$ and we can also have $p_1 = p_2 = q$ (refer to Figure~\ref{fig:points_of_contact_3}).

Consider three consecutive vertices $v_{i+1}$, $v_i$ and $v_{i-1}$ of $\obj$ such that $\angle (v_{i+1}, v_i, v_{i-1}) < \omega$.
(refer to Figure~\ref{fig:points_of_contact_3}).

If $\obj$ is contained in the $\omega$-wedge $W$ and the apex $q$ of $W$ is equal to $v_i$, then there are infinitely many different probes with $q = v_i$. We say that $W$ encloses $\obj$ even if $H_1$ and $H_2$ touch $\obj$ only at $q=v_i$. This is precisely the situation that causes difficulty since a probe returning a value of $q=v_i$ gives no additional information.

\begin{figure}[h]
    \begin{center}
        \subfigure[The probe $W$ touches the polygon $\obj$ at its vertices.]{
            \label{fig:points_of_contact_1}
            \includegraphics[width=0.29\textwidth]{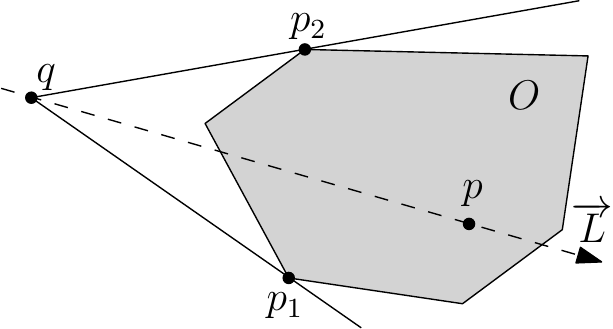}
        }%
        \hspace{1cm}%
        \subfigure[One or more polygon edges coincide with $H_1$ and $H_2$. Only the points of contact $p_1$ and $p_2$ closest to $q$ are reported.]{
            \label{fig:points_of_contact_2}
            \includegraphics[width=0.32\textwidth]{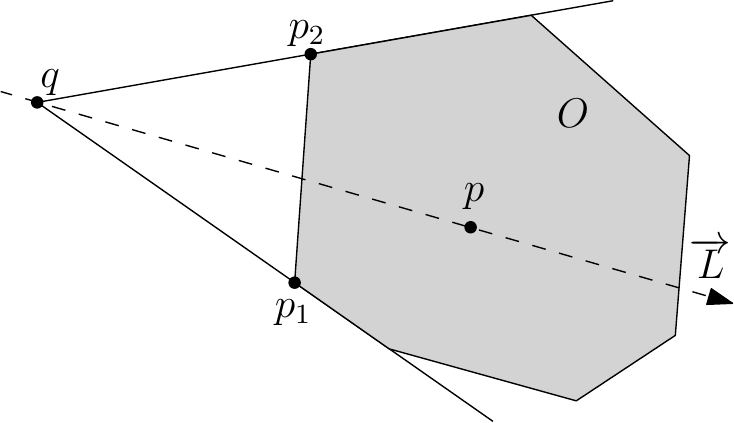}
        }%
        \hspace{0.5cm}%
        \subfigure[The apex $q$ of the probe touches $\obj$. In these cases $q$ is considered to be the only point of contact with the polygon.]{
            \label{fig:points_of_contact_3}
            \includegraphics[width=0.21\textwidth]{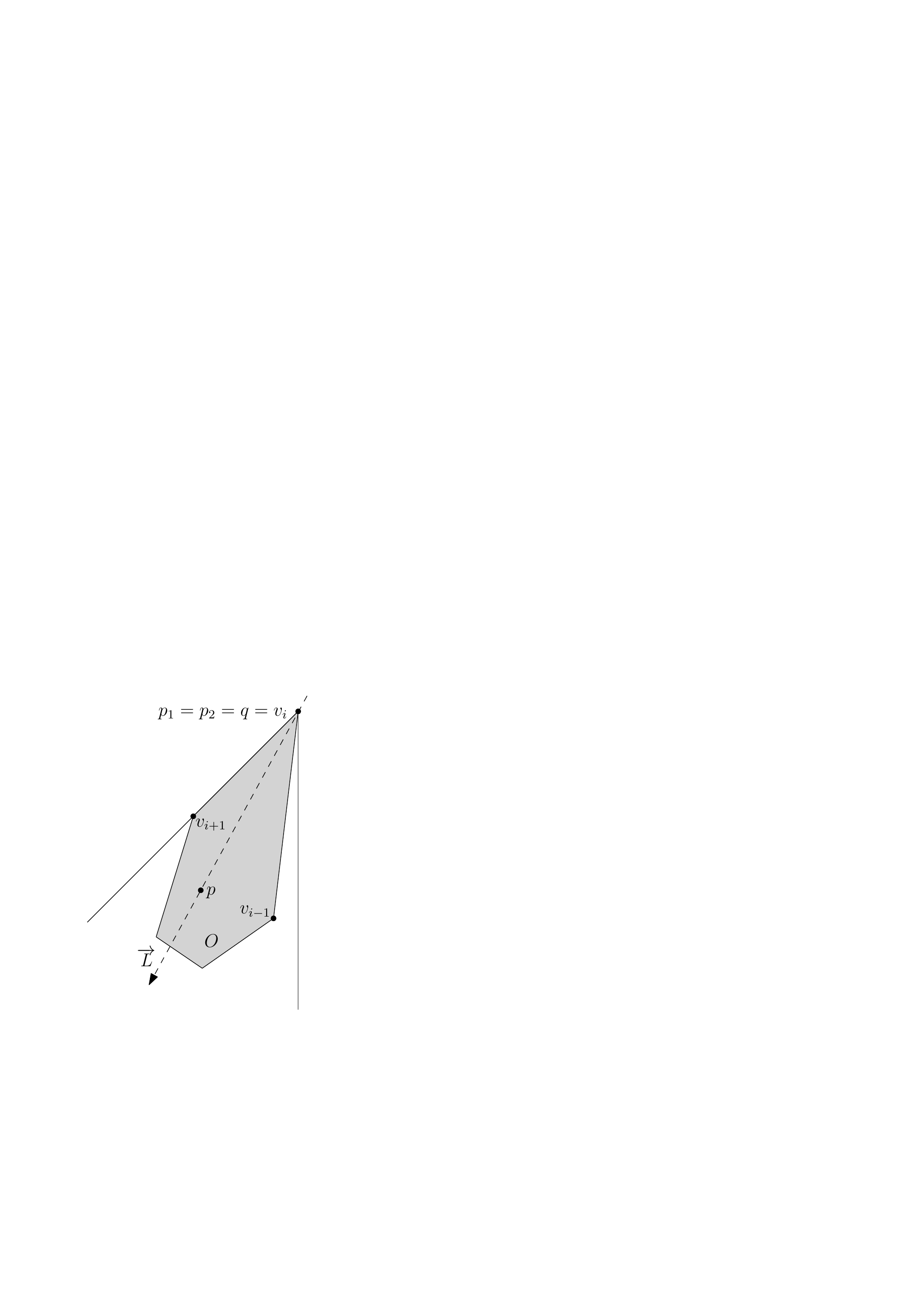}
            \hspace{0.3cm}%
            \includegraphics[width=0.21\textwidth]{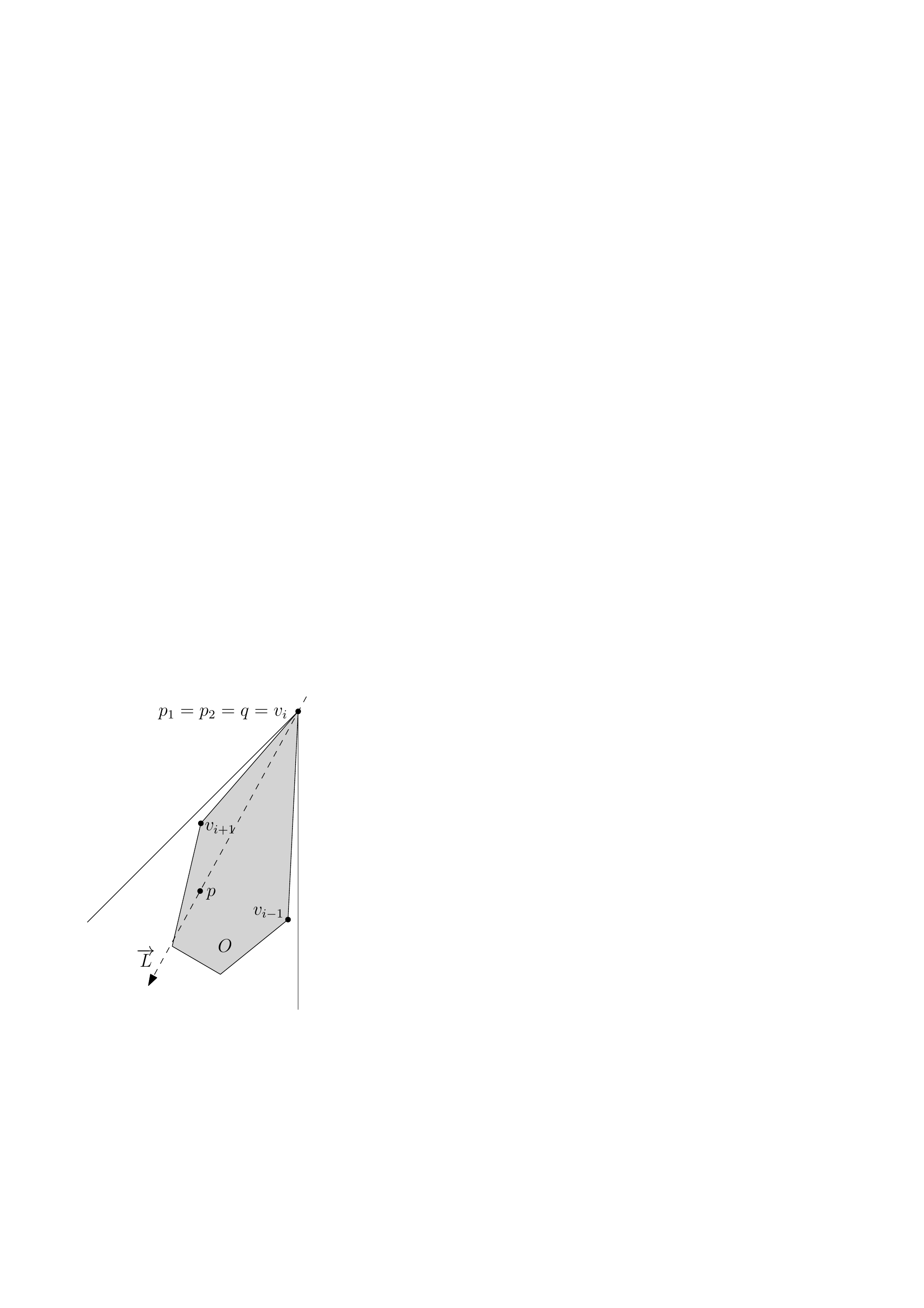}
            \hspace{0.3cm}%
            \includegraphics[width=0.21\textwidth]{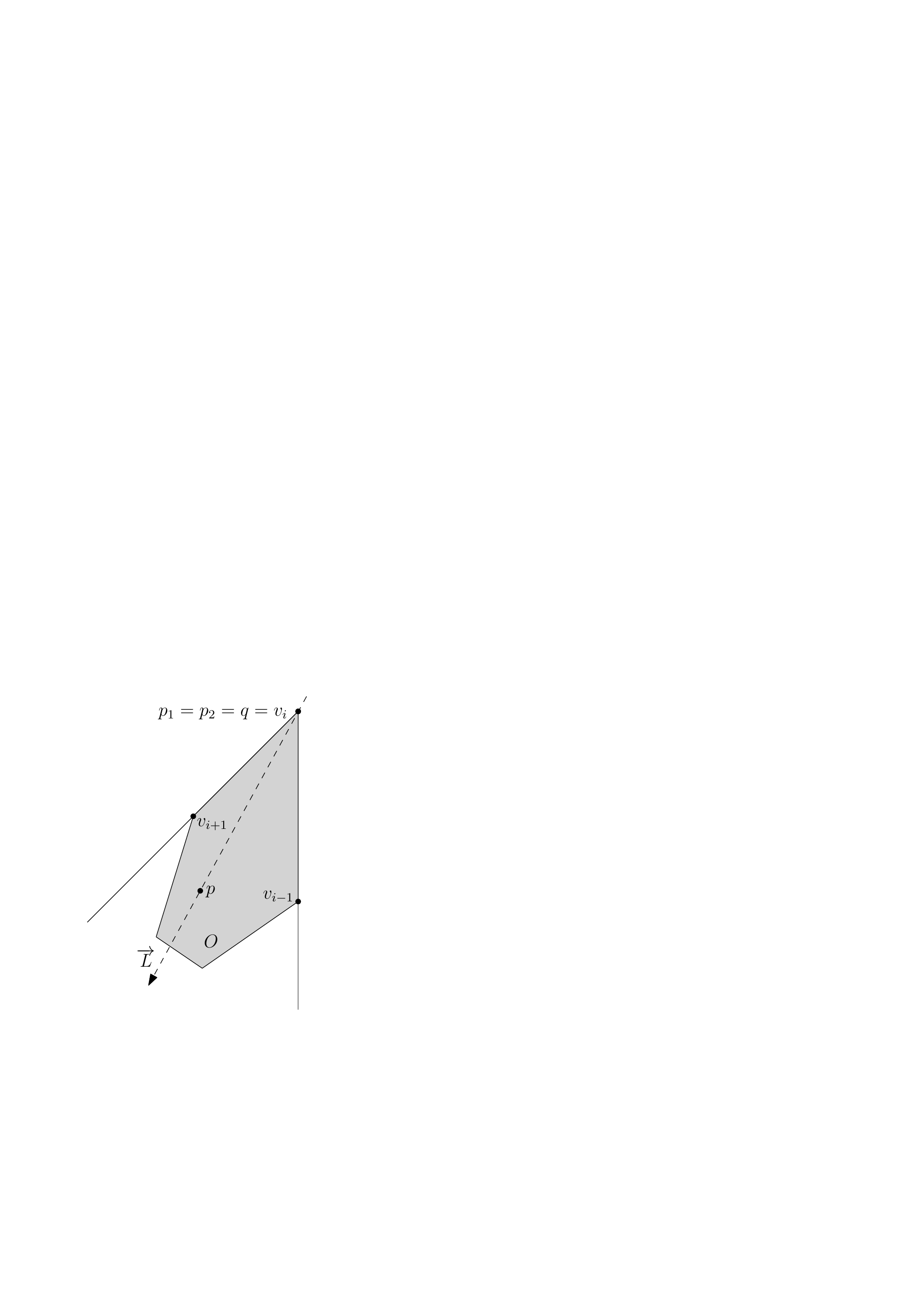}
        }%       
    \end{center}
    \vspace{-1.5em}
    \caption{Possible outcomes of the probing regarding orientation and shape of the polygon.}
   \vspace{-0.8em}
\label{fig:points_of_contact}
\end{figure}

In this paper we study the reconstruction of $\obj$ by using an $\omega$-wedge model of probing. When it is clear from the context, we write ``probe" instead of $\omega$-probe.

\begin{definition}[$\omega$-Cloud]
\label{def:omega_cloud}
By rotating an $\omega$-wedge around $\obj$, such that it contains $\obj$ and both its rays are touching $\obj$, the apex traces a sequence of circular arcs. This sequence is called an \emph{$\omega$-cloud}, denoted $\Omega$ (refer to Figure~\ref{fig:simple_example}). In other words, an \emph{$\omega$-cloud} is the set of apices of all the valid $\omega$-probes of $\obj$. The circular arcs of $\Omega$ are labelled in counterclockwise order by $\Gamma_j$ for $0\leq j \leq n'-1$. We note that $n'=O(n)$~\cite{journals/ijcga/BoseMSS09}. The intersection point of a pair of consecutive circular arcs is called a \emph{pivot point}.
\end{definition}

Each point $x$ on the $\omega$-cloud has the property that there exists a valid $\omega$-probe with the apex placed at $x$. Moreover, if $x$ is not on the polygon, then this probe is unique (refer to Lemma~\ref{lem:uniqueness_lemma_2}).

\begin{figure}
\centerline{\resizebox{!}{7.0cm}{\includegraphics{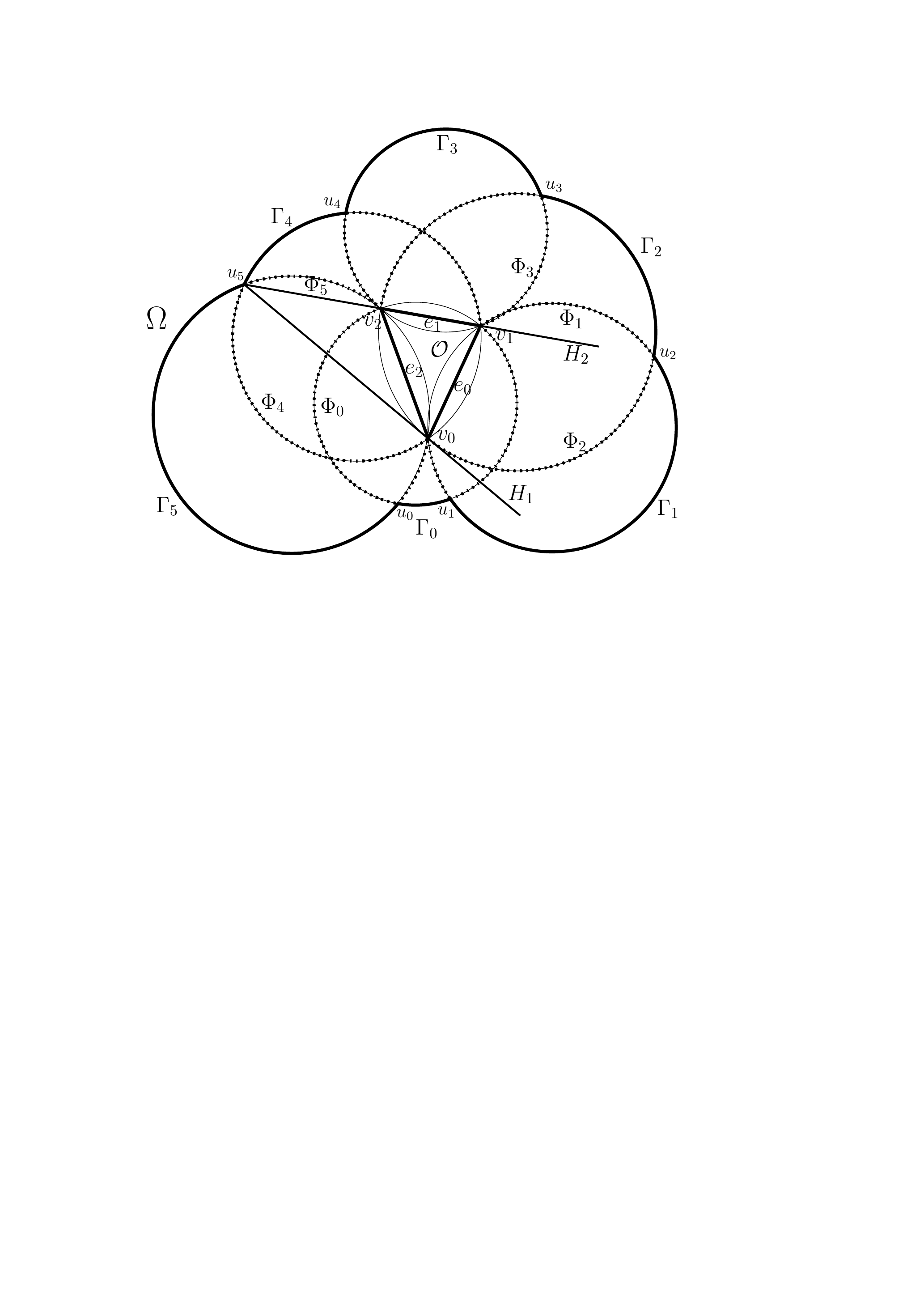}}}
\caption{$\Omega$ is the $\frac{\pi}{6}$-cloud of $\obj = \triangle( v_0, v_1, v_2)$.
A black line represents a circular arc of $\Omega$.
A dotted line represents the remaining part of the corresponding $\frac{\pi}{6}$-arc.
A grey line represents the remaining part of the corresponding circle.}
%\vspace{-1.0em}
\label{fig:simple_example}
\end{figure}

 If a convex polygon $\obj$ is initially contained within $W$, the movement of $q$ on $\overrightarrow L$ will stop when both $H_1$ and $H_2$ have a non-empty intersection with $\obj$. There are two cases to consider:

\vspace{0.5em}
\emph{Case $1$}: $q$ is not on $\obj$. In this case, both arms $H_1$ and $H_2$ touch the convex polygon $\obj$ at points $p_1 \in H_1$ and $p_2 \in H_2$, such that $p_1 \neq p_2$ (which implies: $p_1 \neq q$ and $ p_2 \neq q$). The apex of the probe $q$ cannot move any further and the probe outputs the result together with the coordinates of the apex (which is a point of the $\omega$-cloud of $\obj$).

\vspace{0.5em}
\emph{Case $2$}: $q$ is on $\obj$. In the second case, when $q$ touches $\obj$ ($q$ can touch $\obj$ only at its vertices), the arms $H_1$ and $H_2$ of $W$ may or may not touch $\obj$ at points other than $q$. In fact, since the arms $H_1$ and $H_2$ are free to rotate around $q$, if the internal angle of $\obj$ at the vertex touched by $q$ is smaller than $\omega$, then there are infinitely many different positions for $W$ (refer to the first two examples of Figure~\ref{fig:points_of_contact_3}). If $q$ touches a vertex whose internal angle is $\omega$, there is only one position for $W$ (refer to the last example of Figure~\ref{fig:points_of_contact_3}).
Since we know neither $\obj$ nor the $\omega$-cloud $\Omega$ of $\obj$, it is not clear whether the arms $H_1$ and $H_2$ touch $\obj$ at points other than $q$.
We explain how to deal with this technicality in Section~\ref{sec:Polygons_with_B-vertices}. In this case, since $0 < \omega \leq \pi/2$, $q$ is a vertex of $\obj$ and a pivot point of the $\omega$-cloud $\Omega$ of $\obj$. 

\begin{definition}[Narrow Vertex]
\label{def:B-vertex}
A vertex of the convex polygon $\obj$ is called a \emph{narrow vertex} if its internal angle is at most $\omega$. 
\end{definition}

If the ray in the direction of an $\omega$-probe enters $\obj$ via a narrow vertex $v_B$, then the apex of this probe is on a vertex $v_B$ of $\obj$. Refer to Figure~\ref{fig:points_of_contact_3}. However, there can be no more than 3 vertices on the polygon where this happens (refer to Observation~\ref{observ:N_B_general}). 

\begin{definition}[$\omega$-Arc]
\label{def:omega_arc}
Let $a$ and $b$ be two points on a circle $\Phi$.  Denote by $\Gamma = \widehat{ba}$ the counterclockwise circular arc from $b$ to $a$.  By elementary geometry, for any point $q\in\Gamma$, $\angle (a, q, b )= \frac{\widehat{ab}}{2}$. (Here and throughout the paper we use counterclockwise angle naming convention). If $\angle (a, q, b) =
\frac{\widehat{ab}}{2} = \omega$, we say that $\Gamma$ is an
\emph{$\omega$-arc with respect to $a$ and $b$}.  The points $a$ and $b$ are called the \emph{supporting points} of $\Gamma$.
\end{definition}

Therefore, given two distinct points $a$ and $b$, there are two possible $\omega$-arcs: the one with respect to $a$ and $b$ and the one with respect to $b$ and $a$.

\begin{definition}[Feasible]
\label{def:feasible}
Let $Q$ be a convex polygon. Let $e$ be an edge of $Q$. Let $\mathcal{W}$ be the set of all $\omega$-wedges, each one containing $Q$. The feasible region $F_Q$ of $Q$ is the intersection of all wedges in $\mathcal{W}$. 
For any edge $e$ of $Q$, let $H_e$ be the half-plane containing $e$ on its boundary that does not contain $Q$. The feasible region $F_{Q,e}$ is defined as $F_Q \cap H_e$. (refer to Figure~\ref{fig:feasible}).
\end{definition}

\begin{figure}[h]
\centerline{\resizebox{!}{4.9cm}{\includegraphics{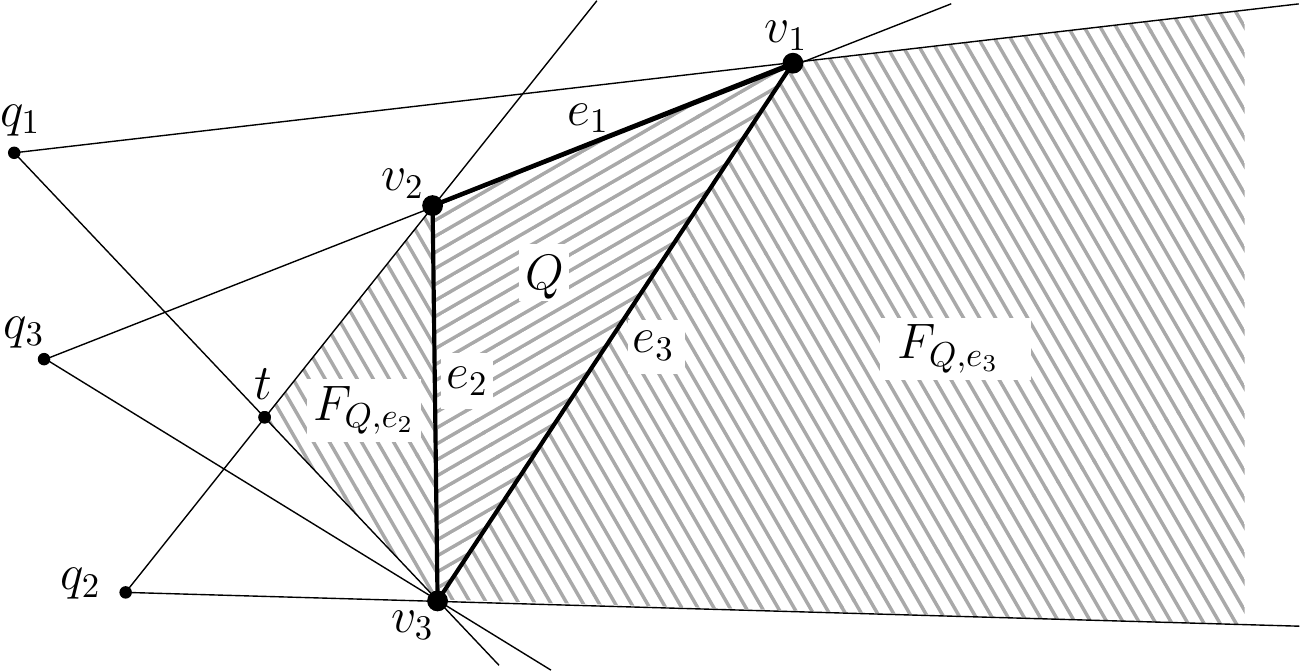}}}
\caption{$W$ consists of three $\omega$-wedges, each one containing $Q = \{ v_1, v_2, v_3\}$. The feasible region $F_Q$ is shown in a tiling pattern. The feasible region $F_{Q, e_2}$ of the edge $e_2$ is the triangle $\triangle(v_2, t, v_3)$. $F_{Q, e_1}$ degenerates into the line segment $\overline{v_1 v_2}$; in other words, $F_{Q, e_2} = e_1$ and thus $e_1 \in \obj$. $F_{Q, e_3}$ is unbounded.}
    \vspace{-1.0em}
\label{fig:feasible}
\end{figure}

\section{Preliminaries}
\label{sec:preliminaries}

When an $\omega$-probe is shot along a directed line that enters $\obj$ via a narrow vertex $v_B$, the probe stops only when the apex touches $v_B$ (refer to Figure~\ref{fig:points_of_contact_3}). However, there can be no more than 3 vertices on a convex polygon, that is not a rectangle, where this happens (refer to Observation~\ref{observ:N_B_general}), since the angle of an $\omega$-probe is $0 < \omega \leq \pi/2$. The number of acute angles is $4$ when $\obj$ is a rectangle and cannot exceed 3 otherwise.

\begin{observation}
\label{observ:N_B_general}
If $\obj$ is a convex polygon that is not a rectangle, then the number of narrow vertices cannot exceed 3.
\end{observation}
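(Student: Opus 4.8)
The plan is to argue via the exterior angles of $\obj$. Recall that a narrow vertex has internal angle at most $\omega$, and since $\omega \le \pi/2$ this means every narrow vertex has internal angle at most $\pi/2$. Equivalently, the exterior angle at a narrow vertex --- namely $\pi$ minus its internal angle --- is at least $\pi/2$. I would rely on the standard fact that, for any convex polygon traversed counterclockwise, the exterior angles are all strictly positive and sum to exactly $2\pi$.

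First I would let $k$ denote the number of narrow vertices and sum the exterior angles over all vertices. Each narrow vertex contributes at least $\pi/2$, and every exterior angle is positive, so $k\cdot\tfrac{\pi}{2} \le 2\pi$, giving $k \le 4$ immediately. This already bounds $N_B$ by $4$; it then remains only to rule out the case $k=4$ unless $\obj$ is a rectangle.

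The main (and essentially only) delicate step is the equality analysis. Suppose $k=4$. Then the four narrow vertices alone contribute at least $4\cdot\tfrac{\pi}{2} = 2\pi$ to the exterior-angle sum, while the total over all vertices is exactly $2\pi$ and each exterior angle is strictly positive. This forces each of the four narrow vertices to have exterior angle exactly $\pi/2$ (hence internal angle exactly $\pi/2$), and it leaves no room for any further vertex. Thus $\obj$ has exactly four vertices, all with right internal angles, i.e.\ $\obj$ is a rectangle. Consequently, whenever $\obj$ is not a rectangle we must have $k \le 3$, which is the claim.

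I do not expect a genuine obstacle here; the only point requiring care is to invoke that the exterior angle at each genuine vertex of a convex polygon is strictly positive, so that four right-angle narrow vertices already saturate the full $2\pi$ and admit no additional corner. I note that when $\omega < \pi/2$ the conclusion is even more direct: a narrow vertex then has exterior angle strictly greater than $\pi/2$, so $k\,(\pi-\omega) \le 2\pi$ with $\pi-\omega > \pi/2$ already yields $k < 4$, and the rectangle case arises only in the boundary situation $\omega = \pi/2$.
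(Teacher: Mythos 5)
Your proof is correct and uses the same core idea as the paper: the exterior angles of a convex polygon are strictly positive and sum to $2\pi$, while each narrow vertex forces an exterior angle of at least $\pi/2$. In fact, your version is slightly more complete than the paper's. The paper's proof only argues that at most $3$ \emph{acute} angles (internal angle strictly less than $\pi/2$) are possible, deriving a contradiction from the strict inequality $4\cdot\pi/2 > 2\pi$; it never addresses narrow vertices whose internal angle equals exactly $\pi/2$, which can occur when $\omega = \pi/2$, and hence never explains why the rectangle is the unique exception. Your equality analysis---four exterior angles of at least $\pi/2$ saturate the total $2\pi$, forcing exactly four vertices, each with exterior angle exactly $\pi/2$, i.e.\ a rectangle---supplies precisely this missing step, so your argument proves the observation as stated (with the rectangle exclusion doing real work) rather than only the strictly-acute variant.
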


\begin{proof}
We prove that the maximum number of acute angles of a convex polygon is 3.
The sum of the exterior angles of any convex polygon is $2\pi$. If an inner angle is acute, then the corresponding exterior angle is greater than $\pi/2$. Suppose that 4 or more acute angles are possible. Since the polygon is convex, all the exterior angles are positive, so the sum of the exterior angles is at least the sum of the exterior angles supplementary to the four or more acute angles, which is strictly greater than $4 \pi/2 = 2\pi$, which is a contradiction.
\qed
\end{proof}

\begin{observation}
\label{observ:N_B_small_omega}
If $\obj$ is a convex polygon that is not a rectangle and $\omega < \pi / 3$, then the number of narrow vertices is at most 2.
\end{observation}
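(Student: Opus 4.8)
The plan is to mirror the exterior-angle counting argument used in the proof of Observation~\ref{observ:N_B_general}, but with the sharper threshold supplied by the hypothesis $\omega < \pi/3$. Recall that a narrow vertex is one whose internal angle is at most $\omega$; under the assumption $\omega < \pi/3$, every narrow vertex therefore has internal angle strictly less than $\pi/3$. Consequently its supplementary exterior angle is strictly greater than $\pi - \pi/3 = 2\pi/3$. This single estimate is the engine of the whole argument.

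First I would suppose, for contradiction, that $\obj$ has at least $3$ narrow vertices. Since $\obj$ is convex, all of its exterior angles are positive and, by the standard fact used in Observation~\ref{observ:N_B_general}, they sum to exactly $2\pi$. The three exterior angles associated with the three narrow vertices are each strictly greater than $2\pi/3$, so their sum alone already exceeds $3 \cdot (2\pi/3) = 2\pi$. Because every remaining exterior angle is positive, the total would then strictly exceed $2\pi$, contradicting the fact that it equals $2\pi$. Hence $\obj$ can have at most $2$ narrow vertices, which is what we want.

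The only point requiring care is the bookkeeping of strict versus non-strict inequalities: the \emph{definition} of a narrow vertex uses $\leq \omega$, but it is precisely the strict bound $\omega < \pi/3$ that upgrades the exterior-angle estimate to a strict inequality, and this strictness is exactly what rules out three such vertices (with the weaker bound $\omega = \pi/3$ three narrow vertices of internal angle exactly $\pi/3$ would be borderline). I would also note that the ``not a rectangle'' hypothesis is vacuous under $\omega < \pi/3$: a rectangle has all internal angles equal to $\pi/2 > \pi/3 > \omega$, so it has no narrow vertices at all, and the bound of $2$ holds trivially. I do not anticipate any genuine obstacle here, as the statement is a direct strengthening of Observation~\ref{observ:N_B_general} obtained simply by replacing the threshold $\pi/2$ with $\pi/3$ and re-running the same sum-of-exterior-angles computation.
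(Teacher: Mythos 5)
Your proof is correct, but it takes a different route from the paper's. The paper does not re-run the exterior-angle count; instead it argues via the triangle $\triangle(v_{B_1}, v_{B_2}, v_{B_3})$ spanned by three hypothetical narrow vertices: at least one angle of that triangle is $\geq \pi/3$, and since the triangle is contained in $\obj$ and shares these vertices, the polygon's internal angle there is at least the triangle's angle, hence $\geq \pi/3 > \omega$, contradicting narrowness. Your exterior-angle argument is more uniform with the paper's proof of Observation~\ref{observ:N_B_general} and generalizes immediately to the bound that the number of vertices with internal angle at most $\omega$ is strictly less than $2\pi/(\pi-\omega)$ for any $\omega$; the paper's inscribed-triangle argument, on the other hand, makes geometrically transparent why $\pi/3$ is the exact breakpoint (the equilateral triangle realizes three narrow vertices as soon as $\omega = \pi/3$). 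Your remarks on the necessity of the strict inequality $\omega < \pi/3$ and on the vacuity of the ``not a rectangle'' hypothesis in this regime are both accurate.
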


\begin{proof}
Assume that $\obj$ is not a rectangle and has three narrow vertices $v_{B_1}$, $v_{B_2}$ and $v_{B_3}$, that define a triangle $\triangle ( v_{B_1}, v_{B_2}, v_{B_3})$. At least one of the angles of the triangle is bigger or equal to $ \pi / 3$. Since the triangle completely resides inside $\obj$, at least one of the internal angles of $\obj$ at $v_{B_1}$, $v_{B_2}$ or $v_{B_3}$ should be bigger or equal to $ \pi / 3$. Thus if $\omega < \pi / 3$ then it is not possible for $\obj$ to have three narrow vertices.
\qed
\end{proof}

\begin{lemma}
\label{lem:uniqueness_lemma_1}
Assume that we are given a directed line $\overrightarrow L$. The apices of all valid $\omega$-probes of $\obj$ along $\overrightarrow L$ are located at the same point in the plane.
\end{lemma}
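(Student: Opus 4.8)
The plan is to argue by contradiction: assume two valid $\omega$-probes along $\overrightarrow L$ have distinct apices, and derive that one of them cannot in fact contain $\obj$. The first step is to reinterpret validity in terms of the angular width of $\obj$ as seen from the apex. For a point $q$, let $\alpha(q)$ denote the angle of the smallest wedge with apex $q$ that contains $\obj$. Because $\obj\subseteq W$ forces $\alpha(q)\le\omega$, while the requirement that both arms meet $\obj$ forces the wedge to be a minimal enclosing one (so $\alpha(q)\ge\omega$), a valid probe can only occur at a point $q\in\overrightarrow L$ with $\alpha(q)=\omega$. Thus it suffices to show that $\overrightarrow L$ contains at most one such point whose positive ray also meets $\obj$.

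Next I would fix two such apices and order them along $\overrightarrow L$, writing $q_b$ for the one farther from $\obj$ (``behind'') and $q_a$ for the one nearer (``ahead''), with $q_a\neq q_b$. Using the defining condition of Definition~\ref{def:probe_along_L} — that the positive ray from each apex meets $\obj$ — together with the convexity of $\obj$, I would first check that both apices lie strictly before the point $e$ where $\overrightarrow L$ enters $\obj$; otherwise their positive rays would point away from $\obj$. In particular both apices lie outside $\obj$, so by the case analysis preceding Definition~\ref{def:B-vertex} (Case~$1$) the behind-probe has two \emph{distinct} contact points $p_1\in H_1$ and $p_2\in H_2$ on $\obj$, and since these lie on the two arms, $\angle(p_1,q_b,p_2)=\omega$.

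The heart of the argument is then an inscribed-angle comparison. The locus of points that see the segment $\overline{p_1p_2}$ under angle exactly $\omega$ is precisely the $\omega$-arc through $q_b$ (Definition~\ref{def:omega_arc}); points lying in the lens bounded by this arc and the chord $\overline{p_1p_2}$ see the chord under a strictly larger angle. I would show that $q_a$ lies in this lens: it is inside the wedge $W_b$ (being on the positive ray from $q_b$, which leads into $\obj$), and it lies strictly on the near side of $\overline{p_1p_2}$ because $q_a$ is outside $\obj$ whereas the chord lies inside the convex set $\obj$. Hence $\angle(p_1,q_a,p_2)>\omega$. But $p_1,p_2\in\obj$, so any wedge with apex $q_a$ that contains $\obj$ would have to contain both points and therefore have angle exceeding $\omega$; this contradicts the validity (in particular the containment $\obj\subseteq W_a$) of the probe at $q_a$.

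I expect the main obstacle to be the geometric bookkeeping in the lens step — certifying that $q_a$ is genuinely strictly between the chord $\overline{p_1p_2}$ and the $\omega$-arc through $q_b$, rather than on the chord, beyond it, or outside the wedge. This is exactly where the positive-ray hypothesis and convexity are indispensable, and I would make it precise by tracing the ray from $q_b$ to the entry point $e$ and noting that it stays outside $\obj$ (hence on the near side of the chord) until it reaches $\partial\obj$. Finally, the degenerate situation where $\overrightarrow L$ enters through a narrow vertex $v_B$ needs only an easy separate remark: there every valid probe is forced to place its apex at $v_B$ itself (the apex touches $\obj$), so all apices coincide and the statement holds trivially.
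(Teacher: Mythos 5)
Your main argument --- the case in which both apices lie strictly outside $\obj$, before the entry point $e$ --- is correct, and it is a genuinely different proof from the paper's. The paper's proof is a two-line containment argument: both apices lie on $\overrightarrow{L}$, so the forward apex $q'$ lies inside the rear wedge; since both wedges have the same opening $\omega$, at least one arm of the rear wedge lies outside the interior of the forward wedge and therefore, the paper concludes, cannot touch $\obj$. You instead characterize validity via the tangent cone (so the two contacts $p_1,p_2$ of the rear probe subtend exactly $\omega$ at its apex) and run an inscribed-angle argument: the forward apex lies in the triangle $(q_b,p_1,p_2)$, off the chord, hence inside the lens, hence sees $\overline{p_1p_2}$ at an angle strictly greater than $\omega$, which is incompatible with $p_1,p_2\in\obj\subseteq W_a$. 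Your route is longer, but it is sharper exactly where the paper is loose: since $\obj\subseteq W'$ forces $\obj$ to touch the \emph{boundary} of $W'$, an arm avoiding the interior of $W'$ could a priori still meet $\obj$ at a point of $\partial W'$ (for instance at the apex $q'$), so the paper's last inference is too quick as written (though repairable in the main case, where one arm of the rear wedge is in fact disjoint from all of $W'$). Your ``lens bookkeeping'' is also right: the open segment from $q_b$ to $e$ avoids $\obj$, hence avoids the chord, hence stays inside the triangle.

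The genuine gap is your closing remark. The assertion that when $\overrightarrow{L}$ enters $\obj$ through a narrow vertex ``every valid probe is forced to place its apex at $v_B$ itself'' is not an easy remark --- it is the entire content of the lemma in that case, and you give no argument. What must be excluded is a valid probe with apex $q_b$ strictly before $e=v_B$ coexisting with a valid probe whose apex is $v_B$. Your lens argument does extend whenever $e$ differs from both contacts $p_1,p_2$ of the rear probe: then either $e$ is interior to the chord, giving $\angle(p_1,e,p_2)=\pi>\omega$, or $e$ lies in the lens, giving $\angle(p_1,e,p_2)>\omega$; either way no valid probe can have apex $e$. But the leftover subcase $e=p_1$, where one arm of the rear probe runs along $\overrightarrow{L}$, can genuinely occur, and there your claim --- and, read literally against Definitions~\ref{def:omega_probe} and~\ref{def:probe_along_L}, the lemma itself --- fails. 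Take $\omega=\pi/2$, let $\obj$ be the triangle with vertices $(0,0)$, $(-1,2)$, $(1,1)$, and let $\overrightarrow{L}$ be the $x$-axis directed rightward: the wedge with apex $(-1,0)$ and arms in the $+x$ and $+y$ directions is a valid probe along $\overrightarrow{L}$ (contacts $(0,0)$ and $(-1,2)$), and so is any $\frac{\pi}{2}$-wedge with apex at the narrow vertex $(0,0)$ (internal angle $\arccos(1/\sqrt{10})\approx 71.6^{\circ}$) whose wedge contains the tangent cone of $\obj$ there; the two apices differ. The statement is rescued by the hypothesis, used implicitly throughout the paper, that probes are aimed at an interior point $p\in\obj$: if the forward ray meets the interior of $\obj$, then $e=p_1$ would force the arm $H_1$ to run along $\overrightarrow{L}$ into the interior of $\obj$, contradicting $H_1\subseteq\partial W_b$ and $\obj\subseteq W_b$. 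With that hypothesis made explicit, your own lens argument closes the narrow-vertex case; without it, the case is not trivial --- it is false. (The paper's proof silently breaks at exactly the same configuration, so this is as much a criticism of the paper as of your write-up, but you singled the case out and declared it trivial, and it is not.)
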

\begin{proof}
Assume to the contrary that there are two different valid $\omega$-probes of $\obj$ along $\overrightarrow{L}$ with apices located at different places. We denote them by ($q$, $H_1$, $H_2$, $p_1$, $p_2$) and ($q'$, $H'_1$, $H'_2$, $p'_1$, $p'_2$) and we have $q \neq q'$. Since $q \in \overrightarrow L$ and $q' \in \overrightarrow L$, the apex of one of the probes is contained in the $\omega$-wedge of the other probe. Assume, without loss of generality, that the apex $q'$ of the second  probe is contained in the $\omega$-wedge of the first probe, rooted at $q$. Both probes are taken along the same line $\protect \overrightarrow{L}$ (refer to Figure~\ref{fig:lemma1_proof}). Since the angle at both $\omega$-wedges is the same, at least one arm of the $\omega$-wedge at $q$ is completely outside of the interior of the $\omega$-wedge at $q'$ and thus does not touch $\obj$. This contradicts the definition of a valid $\omega$-probe. 
\qed
\end{proof}

\begin{figure}[h]
\centerline{\resizebox{!}{4.5cm}{\includegraphics{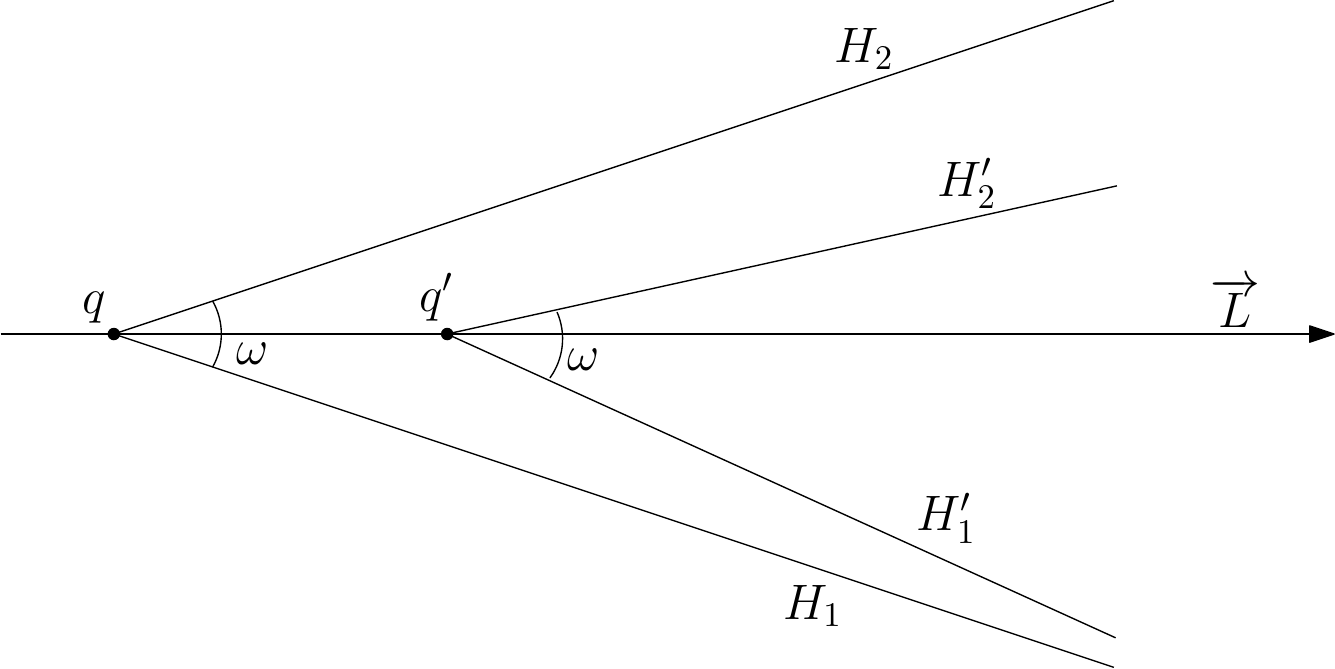}}}
\caption{Illustration of the proof of Lemma~\ref{lem:uniqueness_lemma_1}. The arm $H_2$ is completely outside of the interior of the $\omega$-wedge at $q'$ and thus does not touch $\obj$.}
\vspace{-1.5em}
\label{fig:lemma1_proof}
\end{figure}

\begin{lemma}
\label{lem:uniqueness_lemma_2}
Assume that we are given a directed line $\overrightarrow L$. If the apex of a valid $\omega$-probe of $\obj$ along $\overrightarrow L$ is not a narrow vertex, then this probe is unique.
\end{lemma}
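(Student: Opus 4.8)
The plan is to fix the apex first and then show that the two arms are forced. By Lemma~\ref{lem:uniqueness_lemma_1}, every valid $\omega$-probe of $\obj$ along $\overrightarrow{L}$ has its apex at one and the same point; call it $q$. So it suffices to prove that, once $q$ is fixed and $q$ is not a narrow vertex, the orientation of the two arms is uniquely determined. Accordingly, I would take two valid probes $(\omega,q,H_1,H_2)$ and $(\omega,q,H_1',H_2')$ along $\overrightarrow{L}$ sharing this common apex $q$ and argue that they coincide.

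The first substantive step is to locate $q$ relative to $\obj$. Since $\obj\subseteq W$ and the wedge angle satisfies $\omega\le\pi/2<\pi$, the apex cannot lie in the interior of $\obj$ (where $\obj$ surrounds $q$ through a full angle of $2\pi$), nor in the relative interior of an edge (where $\obj$ spans an angle of $\pi$ at $q$); in either situation no wedge of angle at most $\pi/2$ could contain $\obj$. Hence if $q\in\obj$ then $q$ is a vertex $v_i$, and the two edges of $\obj$ incident to $v_i$ lie inside $W$ and span the internal angle at $v_i$, forcing that internal angle to be at most $\omega$, i.e.\ $v_i$ is a narrow vertex (Definition~\ref{def:B-vertex}). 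As $q$ is assumed not to be a narrow vertex, this case is ruled out, and we conclude $q\notin\obj$.

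With $q\notin\obj$ I would invoke the defining property of a valid probe (Definition~\ref{def:omega_probe}) that \emph{both} arms meet $\obj$. Let $\ell_1$ be the line supporting $H_1$. Because $\obj\subseteq W$, all of $\obj$ lies on one side of $\ell_1$, so $\ell_1$ is a supporting line of $\obj$; and since $H_1$ contains a point of $\obj$ while $q\notin\obj$, the line $\ell_1$ is one of the two tangent lines from the external point $q$ to the convex body $\obj$. The same holds for $H_2$ and its supporting line $\ell_2$. A convex region with nonempty interior admits exactly two tangent lines through a fixed external point, so $\{\ell_1,\ell_2\}$ depends only on $q$ and $\obj$; applying the identical reasoning to $H_1',H_2'$ gives $\{\ell_1',\ell_2'\}=\{\ell_1,\ell_2\}$. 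Finally, on each tangent line the arm must be the ray emanating from $q$ toward $\obj$ (the opposite ray misses $\obj$), and the split of the two tangent lines between $H_1$ and $H_2$ is pinned down by the orientation conventions, namely that $(q,H_1,H_2)$ is a left turn and that $p_1$ lies on or to the right and $p_2$ on or to the left of $\overrightarrow{L}$. Thus $H_1=H_1'$ and $H_2=H_2'$, and the probe is unique.

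The step I expect to be the crux is the case analysis locating $q$: the whole argument hinges on excluding $q\in\obj$ by combining the hypothesis that $q$ is not narrow with the bound $\omega\le\pi/2$, since once $q\notin\obj$ is established the uniqueness of the two tangent lines from an external point makes the remainder routine.
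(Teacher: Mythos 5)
Your proof is correct, and it differs from the paper's at the decisive step. The paper also starts from Lemma~\ref{lem:uniqueness_lemma_1} to force the two apices to coincide at a point $q$, but then argues by contradiction: if $H_1 \neq H_1'$, the two wedges are rotations of one another about the common apex, so at least one arm of each probe lies outside the other wedge's interior and hence cannot touch $\obj$, contradicting validity; the contact points then follow from the definition of the outcome. You instead characterize the arms positively: after showing $q \notin \obj$, each arm's supporting line is a supporting line of $\obj$ through the external point $q$, of which there are exactly two, the arm on each line must be the ray from $q$ toward $\obj$, and the left-turn convention pins down which ray is $H_1$ and which is $H_2$. Two remarks. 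First, your handling of the hypothesis is more careful than the paper's: the paper's proof silently restates ``the apex is not a narrow vertex'' as ``the apex does not touch $\obj$'', whereas you actually prove this reduction (a point interior to $\obj$ or interior to an edge sees $\obj$ through an angle of $2\pi$ or $\pi$, both exceeding $\omega \leq \pi/2$, and an apex at a vertex would force that vertex's internal angle to be at most $\omega$, i.e., narrow). Second, the trade-off: the paper's rotation argument is shorter and needs no external facts, while your tangent-line argument yields an explicit description of the unique probe (the two supporting lines from $q$), which makes the final step --- that $p_1$ and $p_2$ are then determined as the contact points closest to $q$ --- entirely mechanical, and it would also be the natural starting point if one wanted to compute the probe rather than merely prove it unique.
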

\begin{proof}
Assume to the contrary that there are two different valid $\omega$-probes of $\obj$ along $\overrightarrow{L}$ such that the apices of those probes do not touch $\obj$: ($q$, $H_1$, $H_2$, $p_1$, $p_2$) and ($q'$, $H'_1$, $H'_2$, $p'_1$, $p'_2$). By Lemma~\ref{lem:uniqueness_lemma_1} we have $q = q'$. 

If $H_1 \neq H'_1$ (and thus $H_2 \neq H'_2$) then at least one arm of every probe does not touch $\obj$. This is a contradiction to the validity of the probes. 

If $q = q'$, $H_1 = H'_1$ and $H_2 = H'_2$, then $p_1 = p'_1$ and $p_2 = p'_2$ according to the definition of a valid $\omega$-probe.
\qed
\end{proof}

\begin{lemma}
\label{lem:feasible_triangle}
Let $Q$ be a convex polygon. Let $e$ be an edge of $Q$.
Suppose, $F_{Q,e}$ (see Definition~\ref{def:feasible}) is not a subset of the line through $e$. Then, $F_{Q,e}$ contains a triangle with base $e$.
\end{lemma}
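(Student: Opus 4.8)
The plan is to reduce everything to convexity. First I would observe that every $\omega$-wedge is a convex set: since $0 < \omega \le \pi/2 < \pi$, the planar region bounded by two rays emanating from a common apex and spanning an angle $\omega$ is convex. Consequently $F_Q$, being the intersection $\bigcap_{W \in \mathcal{W}} W$ of a family of convex sets, is itself convex. The half-plane $H_e$ is convex as well, so $F_{Q,e} = F_Q \cap H_e$ is convex, being the intersection of two convex sets.

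Next I would verify that the edge $e$ itself is contained in $F_{Q,e}$, not merely in $F_Q$. Because every wedge in $\mathcal{W}$ contains $Q$, we have $Q \subseteq F_Q$, and in particular $e \subseteq Q \subseteq F_Q$. Since $e$ lies on the bounding line of $H_e$, it also satisfies $e \subseteq H_e$, and therefore $e \subseteq F_Q \cap H_e = F_{Q,e}$. Writing $a$ and $b$ for the two endpoints of $e$, this establishes $a, b \in F_{Q,e}$.

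Finally I would invoke the hypothesis. Let $\ell$ denote the line through $e$. By assumption $F_{Q,e} \not\subseteq \ell$, so there is a point $x \in F_{Q,e}$ with $x \notin \ell$. Then $a$, $b$, $x$ are not collinear and hence span a non-degenerate triangle $\triangle(a,b,x)$ whose base is $e$. Since $a, b, x \in F_{Q,e}$ and $F_{Q,e}$ is convex, the convex hull of $\{a,b,x\}$ — which is exactly $\triangle(a,b,x)$ — lies entirely in $F_{Q,e}$. This produces a triangle with base $e$ contained in $F_{Q,e}$, as required.

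The argument is short, and I do not anticipate a serious obstacle once the convexity of $F_{Q,e}$ is in place; the only points that genuinely need care are confirming that an $\omega$-wedge is convex (this is where the bound $\omega \le \pi/2$ enters) and checking that $e$ itself belongs to $F_{Q,e}$. The substance of the lemma is the promotion of a single point lying off $\ell$ into an entire triangle on the base $e$, which is precisely what convexity of the feasible region delivers.
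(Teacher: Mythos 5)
Your proof is correct, and it is worth noting that the paper itself states Lemma~\ref{lem:feasible_triangle} without any proof --- it is only motivated intuitively (as the possibility of hiding an extra vertex beyond an unconfirmed edge) and then invoked in the lower-bound arguments --- so there is no official argument to compare against; your argument stands as a complete, self-contained proof. The three ingredients you isolate are each sound: an $\omega$-wedge with $0 < \omega \leq \pi/2$ is the intersection of two closed half-planes through the apex, hence convex (in fact any angle up to $\pi$ would do, so the bound $\omega \leq \pi/2$ is more than is needed here); $F_{Q,e} = F_Q \cap H_e$ is then convex as an intersection of convex sets; and $e \subseteq F_{Q,e}$ follows from $e \subseteq Q \subseteq F_Q$ together with the fact that $e$ lies on the boundary of the closed half-plane $H_e$. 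Given these, the hypothesis supplies a point $x \in F_{Q,e}$ off the line through $e$, and convexity promotes the three points $a$, $b$, $x$ to the full non-degenerate triangle $\triangle(a,b,x) \subseteq F_{Q,e}$ with base $e$, which is exactly the claim. One small stylistic remark: since the lemma is used by the adversary to place a \emph{new} vertex strictly off the line $\ell$, it is convenient that your triangle automatically has nonempty interior on the far side of $e$; your construction delivers this, so the downstream uses of the lemma in Section~\ref{subsec:lower_bound_model1} are fully supported.
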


In a more intuitive way, Lemma~\ref{lem:feasible_triangle} states the following: Let $Q$ be the convex hull of the vertices of $\obj$ that have been discovered so far, and let $e = (v_1, v_2)$ be an edge of $Q$. As long as $e$ has not been confirmed to be an edge of $\obj$, there is a possibility for an additional vertex to exist between $v_1$ and $v_2$ on the boundary of $\obj$. 

\vspace{0.5em}
To demonstrate the difficulty of the reconstruction problem, and to present a thorough study of our probing tool, we start with polygons without narrow vertices.

\section{Polygons without narrow vertices}
\label{sec:Polygons_without_B-vertices}

In this section we investigate the reconstruction of convex polygons that do not have narrow vertices. This guarantees that during the reconstruction we will not encounter any case where the apex of the wedge touches the polygon. Let $\obj$ be a convex polygon whose exact shape and orientation are unknown. However, we are given a point $p$ and a circle $\Psi$ such that  $p \in \obj \subset \Psi$. Each probe is initialized in an orientation where it encloses $\Psi$. A probe returns the coordinates of the points of contact with the polygon $\obj$ (as shown in Figure~\ref{fig:points_of_contact}), the orientation of the arms and the coordinates of the apex $q$.

\subsection{Algorithm for $0 < \omega \leq \pi/2$}
\label{subsec:algorithm}

The algorithm, given below, knows that all the internal angles of $\obj$ are larger than $\omega$. We will explain how to deal with general convex polygons in Section~\ref{sec:Polygons_with_B-vertices}. Notice, that at the start of the algorithm, the algorithm does not know $n$ - the number of vertices of $\obj$.

As part of the input the algorithm requires a polygon $Q \subseteq \obj$, such that each vertex of $Q$ is a vertex of $\obj$ and $Q$ can contain as little as two vertices of $\obj$. The input for the algorithm depends on the value of $\omega$. In the subsection~\ref{subsec:ImprovedAnalysis} we describe a specific case when $\omega = \pi/2$. In this case our input can be a bit more complicated ($Q$ can be quadrilateral or pentagon), and we can reduce the number of spent probes by one. The input for the algorithm is constructed during initial probing, which can consist of one probe only (for general $\omega$) or three to four probes (for $\omega = \pi/2$). Each vertex $u$ of the convex polygon $Q$ has a Boolean variable $\flag(u)$ that is $\TRUE$ if $u$ and its counter-clockwise successor in $Q$ form an edge of $\Ob$. The algorithm uses variable $F$ that is equal to the number of vertices of $Q$ whose $\flag$ is $\FALSE$. The role of the $\flag$s together with the construction of the input for the algorithm is given below. Notice, this input is suitable for any $\omega$ in the range $0 < \omega \leq \pi/2$.

\vspace{0.5em}
\rule{0.88\textwidth}{1pt}

{\bf Input 1:}

\begin{quotation}
\noindent
{\bf Invariant:} 
\begin{enumerate} 
\item $Q$ is a convex polygon. 
\item Each vertex of $Q$ is a vertex of $\Ob$ (and, therefore, $Q$ is contained in $\Ob$). 
\item Each vertex $u$ of $Q$ stores a Boolean variable $\flag(u)$. If $\flag(u) = \TRUE$, then $u$ and its counter-clockwise successor in $Q$ form an edge of $\Ob$. 
\item $F$ is a variable such that $F = | \{ u \in Q : \flag(u) = \FALSE \} |$. 
\end{enumerate}  
\vspace{0.5em} 
\noindent
{\bf Initialization:} 
To initialize the algorithm, choose an arbitrary line 
$\overrightarrow{L}$ that contains $p$. Let $(q,H_1,H_2,p_1,p_2)$ be the outcome of the probe along $\overrightarrow{L}$. Initialize $Q$ as the polygon consisting of the vertices $p_1$ and $p_2$, set 
$\flag(p_1) = \FALSE$, $\flag(p_2) = \FALSE$, and $F = 2$. 
\end{quotation}
   \vspace{-0.5em}
\rule{0.88\textwidth}{1pt}
 
\vspace{0.7em}
Since $\obj$ has no narrow vertices, every probe returns two distinct points of contact $p_1$ and $p_2$ that are vertices of $\obj$. The input convex polygon $Q$ consists of already discovered vertices of $\obj$. In every iteration, the algorithm takes two consecutive vertices of $Q$, $v_1$ and $v_2$, and checks whether they share an edge in $\obj$ by shooting a probe along the line that contains $v_1$ and $v_2$. This probe either confirms an edge or, if there is no edge that connects $v_1$ and $v_2$ in $\obj$, the probe reveals a vertex of $\obj$ that belongs to the boundary between $v_1$ and $v_2$. This new vertex is then added to $Q$. In both cases the other arm of the probe may also reveal new information about $\obj$ or may return a vertex or edge of $\obj$ that have already been discovered. The polygon $Q$ grows from within $\obj$. When all the edges of $Q$ are confirmed, the algorithm stops, meaning that $Q$ is equal to $\obj$.

The basic algorithm for the reconstruction of the polygon without narrow vertices is given below.

%\vspace{1.0em} 
\rule{0.88\textwidth}{1pt}

{\bf Algorithm (Input):} 
\vspace{1.0em}

While $F \neq 0$, do the following:
\vspace{-1.0em}

\begin{quotation} 
~\begin{enumerate} 
\item Take an arbitrary vertex $u$ in $Q$ for which $\flag(u)=\FALSE$, and let $v$ be the counter-clockwise successor of $u$ in $Q$.  
\item Let $\overrightarrow{L}$ be the line through $u$ and $v$ that is directed from $u$ to $v$. Let $(q,H_1,H_2,p_1,p_2)$ be the outcome of the probe along $\overrightarrow{L}$.
\item Case 1: $p_1 \neq u$ and $p_2 \neq u$.
\begin{itemize}
\item[$\bullet$] Insert $p_1$ into $Q$ between $u$ and $v$, and set $\flag(p_1) = \FALSE$ and $F=F+1$.
\item[$\bullet$] If $p_2$ is not a vertex of $Q$, then replace $Q$ by the convex hull of $Q \cup \{p_2\}$, and set $\flag(p_2) = \FALSE$ and $F=F+1$.
\end{itemize}

 Case 2: Either $p_1 = u$ or $p_2 = u$. 
\begin{itemize}
\item[$\bullet$] If $p_1 = u$, then set $\flag(u) = \TRUE$ and $F=F-1$; if $p_2 \notin Q$, then replace $Q$ by the convex hull of $Q \cup \{p_2\}$, and set $\flag(p_2) = \FALSE$ and $F=F+1$.
\item[$\bullet$] If $p_2 = u$, then set $\flag(v) = \TRUE$, insert $p_1$ into $Q$ between $u$ and $v$, and set $\flag(p_1) = \FALSE$. 
\end{itemize}      
\end{enumerate}  
\end{quotation}
    \vspace{-0.5em}
\rule{0.88\textwidth}{1pt}

\vspace{1.0em} 
In \emph{Case 1} the pair of vertices of interest $u$ and $v$ are not connected by an edge in $\obj$ and thus neither $p_1 = u$ nor $p_2 = u$. In \emph{Case 2} $\overline{uv}$ is an edge of $\obj$. So, one arm of the probe coincides with $\overrightarrow{L}$. This results in either $p_1 = u$ or $p_2 = u$ (the latter can happen at most once and only during the first probe after the initialization). Because $\omega > 0$ the case when $p_1 = u$ and $p_2 = u$  is impossible.

\vspace{0.5em}
Let us analyze the performance of the algorithm on \emph{Input 1}. It follows from the algorithm that the invariant is correctly maintained. The following theorem shows that the algorithm terminates.

\begin{theorem}
\label{theo:model_1_sum-up_theorem_upper_bound}
Given $\omega$-wedge, with $0 < \omega \leq \pi/2$, and a convex polygon $\obj$ (whose angles are strictly bigger than $\omega$), the above algorithm reconstructs $\obj$ by using at most $2n - 2$ $\omega$-probes. 
\end{theorem}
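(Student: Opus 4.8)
The plan is to charge the probes against a single monotone integer potential and then to show that the very first loop iteration makes ``double'' progress, which is exactly what shaves the count from $2n-1$ down to $2n-2$. Let $V$ be the number of vertices currently in $Q$ and let $T$ be the number of vertices of $Q$ whose flag is $\TRUE$, and set $\Phi = V+T$. Because $\obj$ has no narrow vertices, the apex of every valid probe lies strictly outside $\obj$ (as observed at the start of this section); in particular no probe ever returns the apex $q$ as a point of contact, a fact I will use repeatedly.

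First I would check that the invariant is preserved and that $V$ and $T$ never decrease. A $\TRUE$ flag is never reset, since a confirmed edge of $\obj$ contains no vertex of $\obj$ in its interior while new vertices are only ever inserted between two consecutive vertices whose first flag is $\FALSE$; and no stored vertex is ever deleted, since all stored points are vertices of the convex polygon $\obj$, hence in convex position, so taking a convex hull only adds. Inspecting the bullets of the algorithm then shows that each loop iteration changes $\Phi$ by exactly $1$ or $2$: in Case~1 the point $p_1$ is always a genuinely new vertex lying strictly between $u$ and $v$ (it cannot already be in $Q$, since $v$ is the successor of $u$), so $V$ rises by $1$, plus $1$ more if $p_2$ is new, while $T$ is unchanged; in Case~2 exactly one flag turns $\TRUE$ and at most one new vertex is added. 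Since $V,T\le n$ throughout, this already yields termination, and the loop can only stop with $F=0$, i.e.\ with every edge of $Q$ confirmed, which forces $Q=\obj$ and $\Phi=2n$.

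The counting is then immediate. The single initialization probe leaves $\Phi=2$, and termination occurs at $\Phi=2n$, so the loop must raise $\Phi$ by $2n-2$. If $b$ loop probes raise $\Phi$ by $2$ and $a$ raise it by $1$, then $a+2b=2n-2$, and the total number of probes is $1+a+b=2n-1-b$. Hence it suffices to prove $b\ge 1$, i.e.\ that at least one loop probe makes double progress.

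The crux, and the step I expect to be the main obstacle, is to prove that the first loop iteration always raises $\Phi$ by $2$. At that moment $Q=\{u,v\}$ contains only the two initial contacts, and the probe is taken along the line $\overrightarrow L$ through $u$ and $v$. The only outcome giving a mere $+1$ would be a probe (Case~1, or the $p_1=u$ branch of Case~2) whose left contact $p_2$ equals the already-known vertex $v$; every other outcome, including the $p_2=u$ branch of Case~2, contributes $2$ to $\Phi$. I would rule this out geometrically: $v$ lies on $\overrightarrow L$, whereas $p_2$ lies on the arm $H_2$, and for these two outcomes $H_2$ is not collinear with $\overrightarrow L$ -- in the $p_1=u$ branch because $H_1$ is the arm flush with the edge $\overline{uv}$ and $\omega>0$, and in Case~1 because $\obj$ lies on both sides of the secant $\overrightarrow L$, so neither arm can run along it. Thus $H_2$ meets $\overrightarrow L$ only at the apex $q$, and a contact point $p_2$ lying on $\overrightarrow L$ would have to equal $q\notin\obj$, a contradiction. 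Therefore $p_2\neq v$, the first probe always contributes $2$ to $\Phi$, so $b\ge 1$ and the total is at most $2n-1-1=2n-2$.
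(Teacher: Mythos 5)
Your proof is correct and follows essentially the same route as the paper: your potential $\Phi = V + T$ is identical to the paper's potential $\Phi = 2|Q| - F$ (since $F = V - T$), and both arguments rest on the same two pillars, namely that every iteration increases $\Phi$ by at least one and that the first loop iteration increases it by exactly two. The only difference is one of detail: the paper simply asserts that the first iteration yields $\Phi = 4$ ``from the fact that $\obj$ has no narrow vertices,'' whereas you supply the geometric justification (a contact point equal to the known vertex $v$ would have to lie on $\overrightarrow{L}$, hence coincide with the apex, which never touches $\obj$ when there are no narrow vertices), so your write-up actually fills in the step the paper leaves implicit.
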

\begin{proof}
Let $n$ denote the number of vertices of $\Ob$. Consider the quantity 
\[ \Phi = 2|Q| - F , 
\]
where $|Q|$ denotes the number of vertices of $Q$. After initialization, we have $\Phi = 2$. After the first iteration of the algorithm, we have $\Phi = 4$. This follows from the fact that $\obj$ has no narrow vertices. In every subsequent iteration of the algorithm, the value of $\Phi$ increases by at least one. Notice that every vertex of $Q$ is a vertex of $\obj$. Since, at any moment, $\Phi \leq 2n$, it follows that the algorithm makes at most $2n-3$ iterations and thus, the algorithm terminates.  
After termination, we have $F=0$. It then follows from the invariant 
that $Q = \Ob$. Finally, the number of probes made by the algorithm together with the probe spent to create the input is at most $2n-2$.  
\qed
\end{proof}

\subsection{An improved analysis for $\omega = \pi/2$} 
\label{subsec:ImprovedAnalysis}
We now show how to reconstruct $\obj$ with one fewer probe when $\omega = \pi / 2$. The above probing algorithm with the \emph{Input 1} can be used successfully for $\omega = \pi / 2$. As we will show, in this particular case,  the input for the algorithm can be improved (by a complex initialization step) to save one probe during the whole reconstruction process. We present a modification of \emph{Input 1}, on which $\obj$ can be reconstructed by using at most $2n - 3$ $\omega$-probes. As before, we assume that all internal angles of the polygon $\obj$ are larger than $\omega = \pi / 2$. The initialization step is more complex than the initialization of \emph{Input 1} and may consist of two or three probes. The initialization step is followed by one special probe (which we call the \emph{hit probe}) that returns two new pieces of information about $\obj$: either two newly discovered vertices or one new vertex and one new edge of $\obj$. This is precisely where we save one probe compared to the previous technique. 

\begin{figure}[h]
    \begin{center}
        \subfigure[First case of the initialization step.]{
            \label{fig:omega90_alg_case1}
            \includegraphics[width=0.5\textwidth]{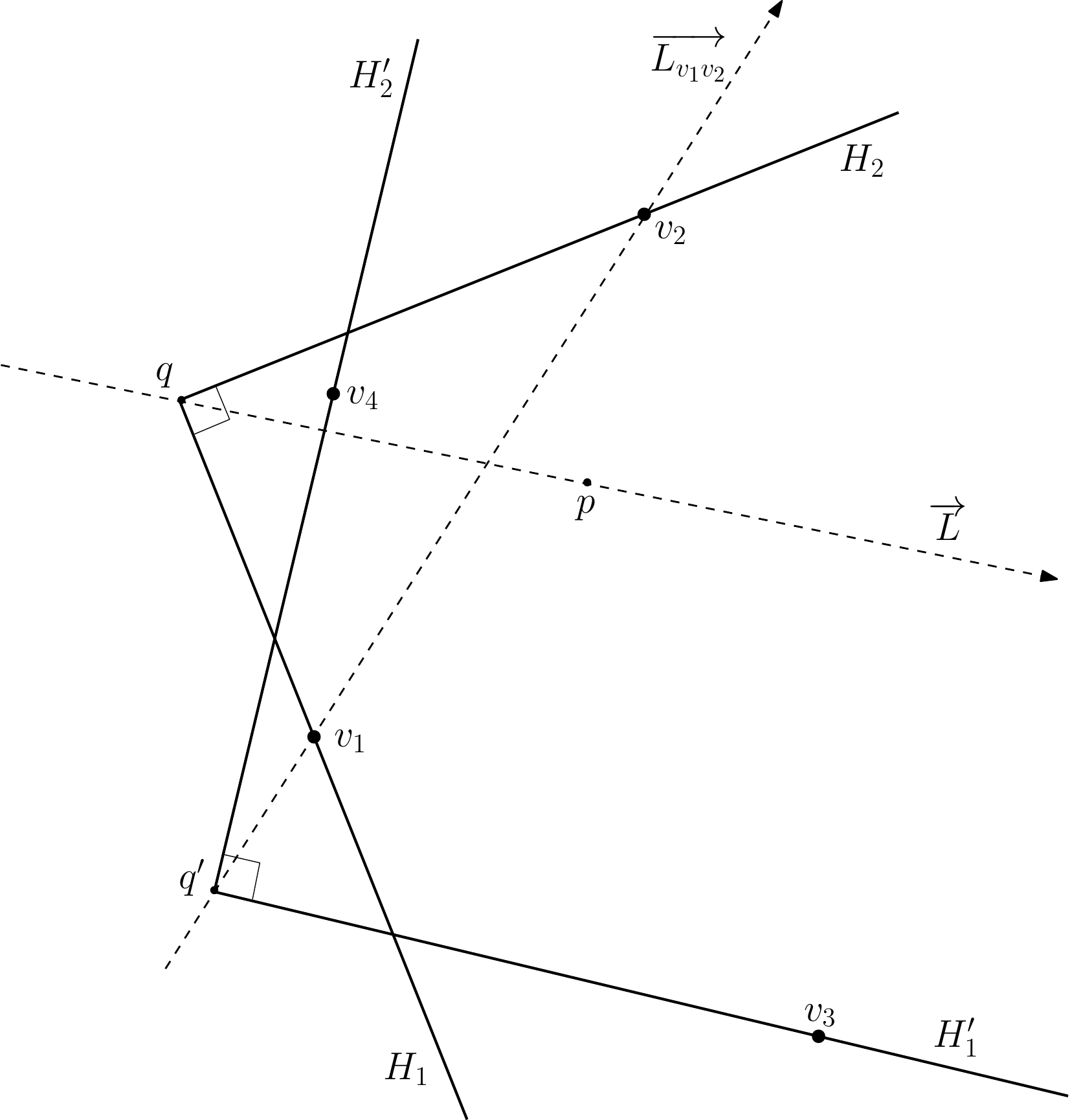}
        }%
        \hspace{0.4cm}%
        \subfigure[Second case of the initialization requires three probes. The figure shows $v = v_1$, but $v$ can be anywhere inside the black triangle.]{
            \label{fig:omega90_alg_case2}
            \includegraphics[width=0.45\textwidth]{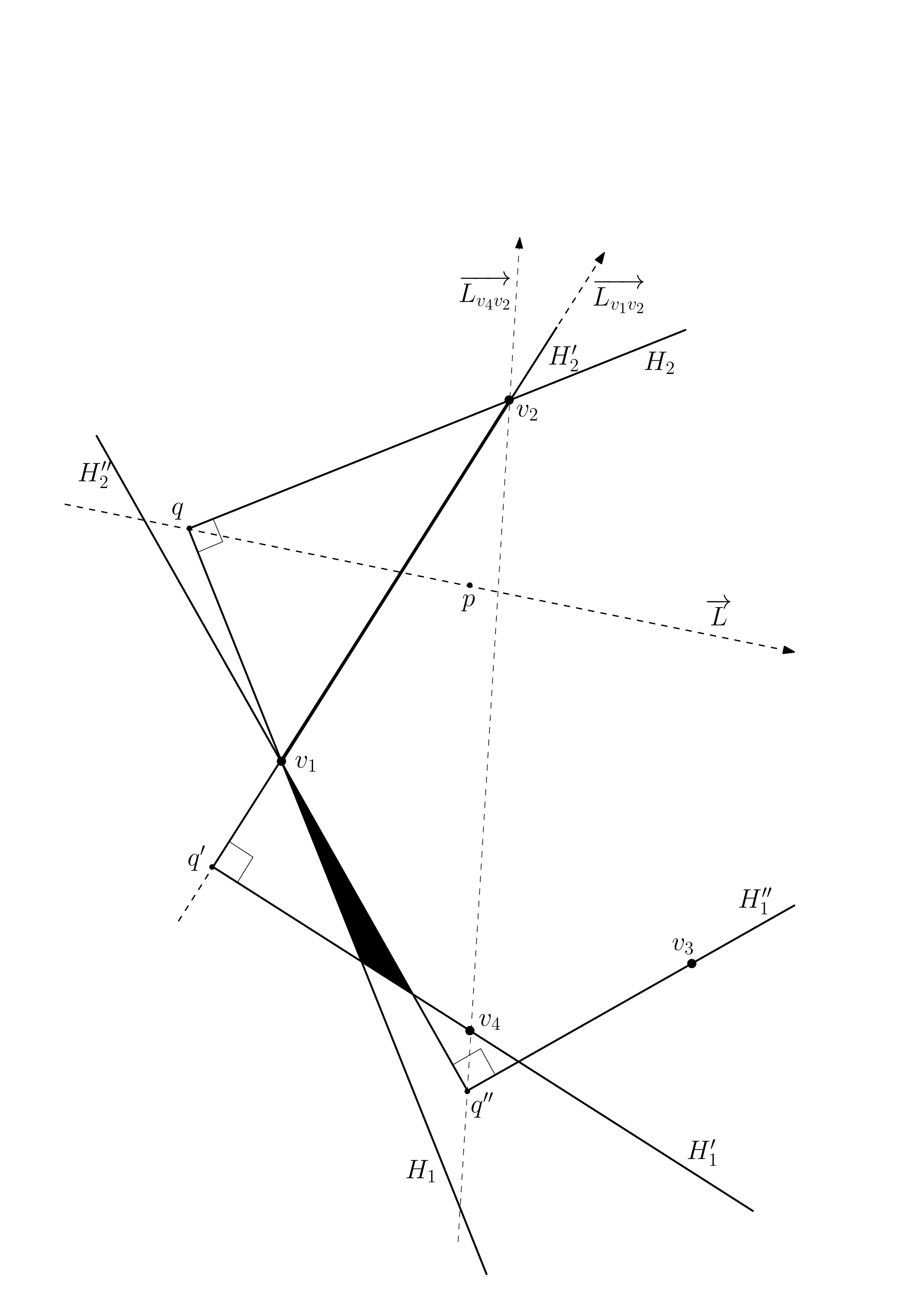}
        }%
        
    \end{center}
    \vspace{-1.5em}
    \caption{Initialization steps of the algorithm given $\omega = \pi / 2$.}

\label{fig:omega90_alg}
\end{figure}

\vspace{0.5em}

\rule{0.88\textwidth}{1pt}

{\bf Input 2:} 
\begin{quotation}
\noindent
{\bf Invariant:} Invariant of \emph{Input 1} with addition of $\omega = \pi / 2$.

\noindent
{\bf Initialization:} Choose an arbitrary line $\overrightarrow{L}$ that contains $p$. Let $(q,H_1,H_2,v_1,v_2)$ be the outcome of the probe along $\overrightarrow{L}$ (refer to Figure~\ref{fig:omega90_alg}). Shoot another probe along the line $\overrightarrow{L_{v_1 v_2}}$ (that is directed from $v_1$ to $v_2$). Since $\obj$ cannot be fully contained inside the triangle $\triangle (q, v_1, v_2)$, two different outcomes of this probe are possible:
\begin{itemize}
\item[$\star$] \textbf{Case $1$:} $(q',H'_1,H'_2,v_3,v_4)$ - the probe returns two new vertices $v_3$ and $v_4$. Refer to Figure~\ref{fig:omega90_alg_case1}. 

Initialize $Q$ as the polygon consisting of the vertices $v_1$, $v_2$, $v_3$ and $v_4$, set $\flag(v_1) = \FALSE$, $\flag(v_2) = \FALSE$, $\flag(v_3) = \FALSE$, $\flag(v_4) = \FALSE$ and $F = 4$.

\item[$\star$] \textbf{Case $2$:} $(q',H'_1,H'_2,v_4,v_1)$ - the arm $H'_2$ of the second probe is flush with the line $\overrightarrow{L_{v_1 v_2}}$ and thus confirms the edge $v_1 v_2$ of $\obj$. The other arm of the probe $H'_1$ reveals a new vertex $v_4$ of $\obj$. In this case, we proceed with a third probe (refer to Figure~\ref{fig:omega90_alg_case2}). We probe along the line $\overrightarrow{L_{v_4 v_2}}$ (directed from $v_4$ to $v_2$). The line segment $v_4 v_2$ cannot be an edge of $\obj$, otherwise $v_2$ would be a narrow vertex. So, the outcome $(q'',H''_1,H''_2,v_3,v)$ of the third probe returns a new vertex $v_3$ of $\obj$. The vertex $v$ can be a new vertex of $\obj$ or it can be an already discovered vertex $v_1$.  

Initialize $Q$ as the polygon consisting of the vertices $v_1$, $v_2$, $v_3$ and $v_4$, and set $\flag(v_1) = \FALSE$, $\flag(v_2) = \TRUE$, $\flag(v_3) = \FALSE$ and $\flag(v_4) = \FALSE$. If $v \neq v_1$, add $v$ to $Q$, set $\flag(v) = \FALSE$ and $F = 4$. Otherwise, if $v = v_1$, set $F = 3$.
\end{itemize}

\vspace{0.5em}
\noindent 
{\bf Hit Step:} 
We choose the direction $\overrightarrow{L}$ of the \emph{hit step} probe as follows:
\begin{itemize}
\item If the initialization step unfolded according to the first scenario (refer to Figures~\ref{fig:omega90_alg_case1} and~\ref{fig:omega90_alg_case1_hit_step}), then :
\begin{itemize}
\item If $\angle (v_2, v_3, q') < \angle (q, v_2, v_3)$, set $\overrightarrow{L} = \overrightarrow{ L_{v_3v_1}}$. 
\item If $\angle (v_2, v_3, q') \geq \angle (q, v_2, v_3)$, set $\overrightarrow{L} = \overrightarrow{ L_{v_2v_4}}$.
\end{itemize}
\item If the initialization step followed the second scenario and required three probes (refer to Figures~\ref{fig:omega90_alg_case2} and ~\ref{fig:omega90_alg_case2_hit_step}), then set $\overrightarrow{L} = \overrightarrow{ L_{v_3v_4}}$.
\end{itemize} 

\vspace{0.5em}
\noindent
Let ($q^*$, $H^*_1$, $H^*_2$, $p^*_1$, $p^*_2$) be the outcome of the \emph{hit step} probe. Three cases are possible:
\begin{itemize}
\item[$\star$] \textbf{Case 1:} $\overrightarrow{L} = \overrightarrow{ L_{v_3v_1}}$. Add $p^*_1$ to $Q$, set $\flag(p^*_1) = \FALSE$; 

if $p^*_2 = v_3$, then set $\flag(v_1) = \TRUE$; 

if $p^*_2 \neq v_3$, then add $p^*_2$ to $Q$, set $\flag(p^*_2) = \FALSE$ and $F = 6$.
\item[$\star$] \textbf{Case 2:} $\overrightarrow{L} = \overrightarrow{ L_{v_2v_4}}$. Add $p^*_2$ to $Q$, set $\flag(p^*_2) = \FALSE$; 

if $p^*_1 = v_2$, then set $\flag(v_2) = \TRUE$; 

if $p^*_1 \neq v_2$, then add $p^*_1$ to $Q$, set $\flag(p^*_1) = \FALSE$ and $F = 6$.
\item[$\star$] \textbf{Case 3:} $\overrightarrow{L} = \overrightarrow{ L_{v_3v_4}}$. Add $p^*_1$ to $Q$, set $\flag(p^*_1) = \FALSE$; 

if $p^*_2 = v_3$, then set $\flag(v_4) = \TRUE$; 

if $p^*_2 \neq v_3$, then add $p^*_2$ to $Q$, set $\flag(p^*_2) = \FALSE$ and $F = F+2$.
\end{itemize}
\end{quotation}
\rule{0.88\textwidth}{1pt}

\begin{lemma}[Justification of initialization steps]
\label{lemma:not_an_edge}
Given a convex polygon $\obj$ with $n\geqslant 5$ vertices and no narrow vertices; $\omega = \pi / 2$; polygon $Q = \{v_1, v_2, v_3, v_4$, and possibly $v \}$ of vertices of $\obj$ discovered during the initialization step of \emph{Input $2$}, the line segment $v_3v_2$ is \textbf{not} an edge of $\obj$.  
\end{lemma}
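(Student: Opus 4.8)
The plan is to argue by contradiction: assume that $v_3v_2$ is an edge of $\obj$ and derive that one of the endpoints $v_2$ or $v_3$ must be a narrow vertex, contradicting the standing hypothesis that $\obj$ has no narrow vertices. This mirrors the argument already indicated for $v_4v_2$ in the description of the three–probe initialization (``otherwise $v_2$ would be a narrow vertex''). I would treat the two configurations separately: Case~1, in which the initialization used two probes and $v_1v_2$ is not an edge, and Case~2, in which it used three probes and $v_1v_2$ is a confirmed edge.

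First I would fix coordinates exploiting $\omega=\pi/2$. For Case~1, rotate the plane so the first probe's wedge is symmetric about the vertical axis, with apex at the origin and arms along $y=x$ and $y=-x$; then $\obj\subseteq\{y\ge|x|\}$, $v_1$ lies on the right arm, $v_2$ on the left arm, and the line $y=-x$ is a supporting line of $\obj$ at $v_2$. For Case~2 it is cleaner to rotate so the confirmed edge $v_1v_2$ is horizontal. The two facts I would record for every probe are: since $\omega=\pi/2$ the two arms are perpendicular, and each arm is a supporting line of $\obj$ at its reported contact, so $\obj$ lies in the corresponding closed half-planes. In Case~1 I would note that the second probe's apex $q'$ lies on the line $v_1v_2$ and, because its wedge opens toward $v_2$ and contains $\obj$, $q'$ lies on the extension of $v_1v_2$ beyond $v_1$, outside the whole polygon; hence $v_3$ (above the chord) and $v_4$ (below it) are exactly the touch points of the two perpendicular tangent lines to $\obj$ drawn from $q'$. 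In Case~2 I would use that $v_4$ is the extreme vertex located by the second probe and that the third probe is taken along $v_4\to v_2$ with $v_3$ as its reported contact, together with the already-available fact that $v_4v_2$ is not an edge.

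With this setup, assuming $v_3v_2$ is an edge, I would identify the two edges meeting at the shared vertex and bound the angle between them. At $v_3$ the relevant probe arm (through $q'$, resp. $q''$) is a supporting line of $\obj$; at $v_2$ the supporting line is the first-probe arm $y=-x$ (resp. the known edge $v_1v_2$). Combining the directions these supporting lines force on the two edges at the shared vertex with the right-angle relation between the two arms of the probe, I would show that the interior angle there is at most $\pi/2=\omega$, i.e.\ the vertex is narrow. The hypothesis $n\ge 5$ enters to keep the configuration non-degenerate: for $n=4$ the discovered points can exhaust $\obj$ and $v_3v_2$ can be a genuine edge, so the existence of further vertices (and the exclusion of coincidences such as $v_3=v_1$ or $v_4=v_2$) must be invoked.

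The step I expect to be the main obstacle is pinning down the position of the contact $v_3$. Unlike $v_1,v_2$ or the edge $v_1v_2$, the point $v_3$ is not free: it is where a tangent line from $q'$ (resp. $q''$) touches $\obj$, and this tangency is coupled to $v_4$ (resp.\ to the other contact $v$) through the perpendicularity of the arms. A naive attempt to bound the interior angle at $v_3$ from a single supporting line fails, since undiscovered vertices lying between $v_1$ and $v_3$ can enlarge that angle; the argument must therefore use the two perpendicular tangents simultaneously, together with the fact that the apex sits just outside the polygon on the line $v_1v_2$, to force the non-adjacency. Handling the sub-cases of Case~2 (whether $v=v_1$ or $v$ is new) and checking that the degenerate tangency possibilities are excluded is the remaining bookkeeping.
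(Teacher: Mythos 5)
Your framing is sound: proving by contradiction that $v_2$ or $v_3$ would have to be narrow is logically the same contradiction the paper derives, your positional observations (the second apex $q'$ lies on the line through $v_1v_2$ beyond $v_1$, each arm is a supporting line at its reported contact) are correct, and you correctly anticipate that a single supporting line at $v_3$ cannot bound the interior angle there. But the proposal stops exactly at the decisive step, and you flag it yourself as the main obstacle without resolving it. The idea that closes this gap in the paper is a relation between the \emph{two} probes, not the perpendicularity of one probe's arms that you propose to exploit: form the quadrilateral $\{q, q', v_3, v_2\}$ (in the three-probe case, $\{q', q'', v_3, v_2\}$) whose vertices are the two probe apices and the two endpoints of the putative edge, and prove that the segment joining the two apices is disjoint from the interiors of both wedges (if it were not, the apex of the second probe would have to coincide with $v_1$, which is impossible because $\obj$ has no narrow vertices). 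This disjointness forces the two quadrilateral angles at the apices, $\angle (v_3, q', q)$ and $\angle (q', q, v_2)$, to each exceed $\pi/2$. Combined with the supporting-line bounds you do state --- the interior angle of $\obj$ at $v_2$ is at most $\angle (q, v_2, v_3)$ and at $v_3$ is at most $\angle (v_2, v_3, q')$ --- the angle sum $2\pi$ of the quadrilateral yields $\angle (q, v_2, v_3) + \angle (v_2, v_3, q') < \pi$, so one of the two interior angles is below $\pi/2$, i.e.\ a narrow vertex, the contradiction you were aiming for.

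Without that two-apex disjointness fact, your plan cannot be completed: the perpendicularity of $H_1'$ and $H_2'$ alone says nothing about how the edge $v_3v_2$ sits relative to the first probe's arm at $v_2$, and the tangency of $v_3$ being coupled to $v_4$ does not by itself yield any angle inequality. So what you have is a correct setup and a correct statement of what must be shown, with the central geometric lemma --- the one the paper supplies via the quadrilateral construction --- missing.
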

\begin{proof}
We prove this lemma by contradiction. Let us assume that the line segment $v_3v_2$ is an edge of $\obj$. We consider the two outcomes of the probe along the line $\overrightarrow{L_{v_1 v_2}}$ (the second probe of the initialization step) separately.

\begin{enumerate}
\item Assume that we are in the first case of the initialization (refer to Figure~\ref{fig:omega90_alg_case1}). The $\omega$-wedges at $q$ and $q'$ are valid $\omega$-probes of $\obj$. Thus the vertices of $\obj$ belong to the intersection of these wedges. By construction, the probe with apex $q'$ was shot along the line $\overrightarrow {L_{v_1v_2}}$. Moreover, $\obj$ has no narrow vertices. Therefore, $q \neq v_1$, $q \neq v_2$, $q' \neq v_1$ (and thus $q \neq q'$). The line segment $qq'$ is disjoint with the interiors of the $\omega$-wedges at $q$ and $q'$, otherwise, $v_1 = q'$. Therefore, the angles $\angle (v_3, q', q)$ and $\angle (q', q, v_2)$ are strictly bigger than $\pi / 2$. 

Since $\obj$ has no narrow vertices, the internal angles of $\obj$ at $v_2$ and $v_3$ are bigger than $\pi / 2$. Since $v_3v_2$ is an edge of $\obj$, the angles $\angle (q, v_2, v_3)$ and $\angle (v_2, v_3, q')$ are strictly bigger than $\pi / 2$. Consider the quadrilateral $\lbrace q,q',v_3,v_2 \rbrace$. All its internal angles are bigger than $\pi / 2$, which is a contradiction. Therefore, $v_3v_2$ is not an edge of $\obj$. 

\item Assume that we are in the second case of the initialization (refer to Figure~\ref{fig:omega90_alg_case2}). Note that, in this case, the arm $H_2''$ of the third probe (with apex $q''$) contains the vertex $v$ of $\obj$, that may or may not be equal to $v_1$. The vertex $v$ belongs to the intersection of the $\omega$-wedges at $q$ and $q'$ and is to the left of the line $\overrightarrow{ L_{q''v_1}}$ or on it. The region where $v$ belongs appears in black in Figure~\ref{fig:omega90_alg_case2}; the arm $H_2''$ intersects this area. Because $\obj$ has no narrow vertices, $v_1 \neq q$, $v_1 \neq q'$ and $v_4 \neq q''$. Therefore, the line segment $q'q''$ is disjoint with the interiors of the $\omega$-wedges at $q'$ and $q''$. Thus, the angles $\angle (v_3, q'', q')$ and $\angle (q'', q', v_2)$ are strictly bigger than $\pi / 2$. 

Since $\obj$ has no narrow vertices and $v_1v_2$ is an edge of $\obj$, the angles $\angle (q', v_2, v_3)$ and $\angle (v_2, v_3, q'')$ are strictly bigger than $\pi / 2$. Consider the quadrilateral $\lbrace q',q'',v_3,v_2 \rbrace$. All its internal angles are bigger than $\pi / 2$, which is a contradiction. Therefore, $v_3v_2$ cannot be an edge of $\obj$. 
\qed
\end{enumerate}
\end{proof}

\begin{figure}[h]
    \begin{center}
        \subfigure[\textit{Hit Step} after the first case of the initialization.]{
            \label{fig:omega90_alg_case1_hit_step}
            \includegraphics[width=0.5\textwidth]{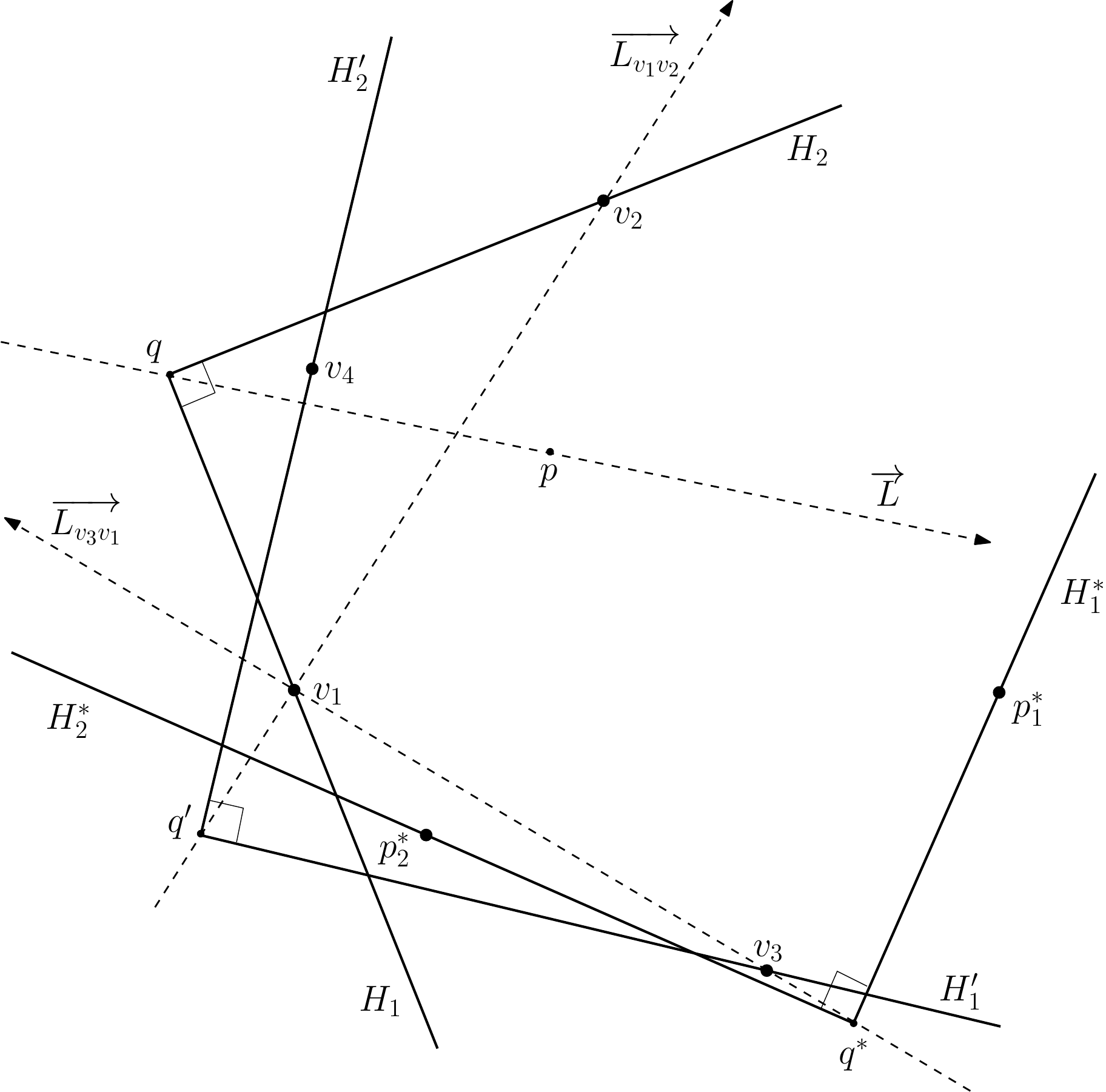}
        }%
        \hspace{0.3cm}%
        \subfigure[\textit{Hit Step} after the second case of the initialization. The vertex $v$ is shown to be equal to the vertex $v_1$.]{
            \label{fig:omega90_alg_case2_hit_step}
            \includegraphics[width=0.45\textwidth]{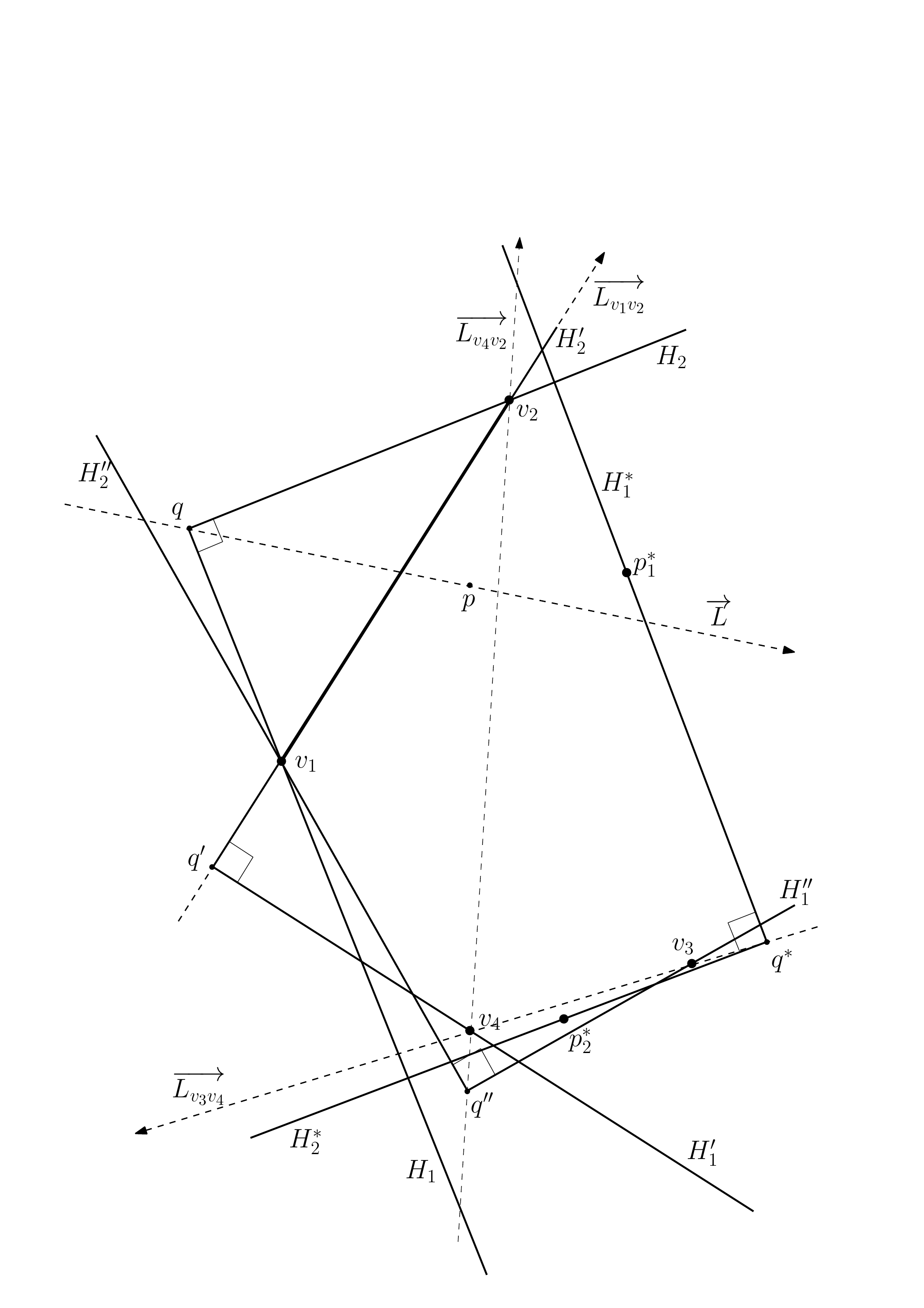}
        }%        
    \end{center}
    \vspace{-1.5em}
    \caption{Possible outcomes of the \textit{Hit Step} of the Input $2$.}
 %   \vspace{-1.0em}
\label{fig:omega90_alg_hit_step}
\end{figure}

\begin{lemma}[Hit Step justification]
\label{lemma:hit_step}
Given a convex polygon $\obj$ with $n\geqslant 5$ vertices and no narrow vertices; $\omega = \pi / 2$; polygon $Q = \{v_1, v_2, v_3, v_4$, and possibly $v \}$, of vertices of $\obj$ discovered during the initialization step of \emph{Input $2$}. The \textit{Hit Step} of Input $2$ (that consists of one probe only), results in \textbf{two} new pieces of information: either two newly discovered vertices of $\obj$, or one newly discovered vertex and one newly confirmed edge of $\obj$. 
\end{lemma}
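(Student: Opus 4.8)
The plan is to treat each of the three hit-step directions separately and to show, in every case, that one of the two contact points is forced to be a previously undiscovered vertex (the genuine ``hit''), while the second contact point either certifies an as-yet-unconfirmed edge of $\obj$ or is itself a second new vertex. Since the algorithm unconditionally inserts one contact point into $Q$ and only branches on the other, it is enough to justify exactly these two outputs for each choice of $\overrightarrow{L}$.

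First I would fix the combinatorial configuration left by the initialization: I would record the counterclockwise order of the known vertices $v_1,v_2,v_3,v_4$ (and possibly $v$) on $\partial\obj$ and classify each boundary arc between consecutive known vertices as \emph{confirmed} (already known to be a single edge) or \emph{open} (possibly still hiding undiscovered vertices). The decisive ingredient is Lemma~\ref{lemma:not_an_edge}: the arc of $\partial\obj$ joining $v_3$ and $v_2$ is not a single edge, so, $\obj$ being convex with $n\ge 5$, this arc must contain at least one vertex that is not yet in $Q$. Each of the three hit-step lines $\overrightarrow{L_{v_3v_1}}$, $\overrightarrow{L_{v_2v_4}}$ and $\overrightarrow{L_{v_3v_4}}$ is chosen so that its forward ray enters $\obj$ through precisely this open arc.

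Next I would analyze the arm meeting the open arc. Using Lemma~\ref{lem:uniqueness_lemma_1} to fix the apex, together with the openness of the arc and the ``room for a vertex'' guaranteed by Lemma~\ref{lem:feasible_triangle}, I would show that this arm touches $\obj$ at a point $p^*$ that is a vertex of $\obj$ distinct from every one of $v_1,v_2,v_3,v_4,v$; here the angle test $\angle(v_2,v_3,q')\lessgtr\angle(q,v_2,v_3)$ is exactly what forces the directed line to point into the open arc rather than to graze a confirmed edge or to return $v_2$ or $v_3$ themselves. This contact is the guaranteed first piece of information. For the other arm I would argue a clean dichotomy: the relevant segment ($v_1v_3$, $v_2v_4$, or $v_3v_4$, according to the case) either is an edge of $\obj$ -- in which case the arm is flush with it and, since the associated $\flag$ is still $\FALSE$ after initialization, a genuinely new edge is confirmed -- or is not an edge, in which case the closest contact point differs from the known endpoint and is a second new vertex. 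Together the two arms supply the claimed two new pieces of information in each of the three cases.

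I expect the main obstacle to be the open-arc argument, namely verifying that the angle-based choice of direction always makes the hit land strictly inside the open arc and never returns an already-discovered vertex or a degenerate flush contact carrying no new information. This requires a careful, scenario-by-scenario check of the relative positions of $q,q',q''$ and $v_1,\dots,v_4$, and it is precisely here that the hypotheses $\omega=\pi/2$ and ``no narrow vertices'' do the real work, through the same angle bounds already used in the proof of Lemma~\ref{lemma:not_an_edge}. By comparison, once the cyclic order and the confirmed/open status of the arcs are fixed, the second-arm dichotomy should follow routinely.
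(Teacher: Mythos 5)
Your overall skeleton does match the paper's proof: treat the hit-step direction case by case, show that the ``free'' arm's contact $p_1^*$ is a previously undiscovered vertex, and for the other arm use the dichotomy flush-with-the-line (newly confirmed edge) versus strictly-off-the-line contact (new vertex, because no discovered vertex of $\obj$ lies strictly on that side of the probe line). That second-arm dichotomy is essentially the paper's argument and is fine, provided you justify newness by ``nothing known lies on that side'' rather than merely ``differs from the known endpoint.''

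The genuine gap is in the first, and only hard, part. Your proposed mechanism is wrong in two ways. First, the geometric picture: each hit-step line $\overrightarrow{L_{v_3v_1}}$, $\overrightarrow{L_{v_2v_4}}$, $\overrightarrow{L_{v_3v_4}}$ passes through \emph{two already-known vertices} of $Q$, so its forward ray enters $\obj$ at a known vertex ($v_3$ or $v_2$), not ``through the open arc'' between $v_2$ and $v_3$; the hit step is exactly an edge-test of a hull edge of $Q$, and the new information comes from where the arms stop, not from where the ray enters. Second, and more seriously, the inference ``Lemma~\ref{lemma:not_an_edge} guarantees a hidden vertex on the arc between $v_2$ and $v_3$, hence the arm touches a new vertex'' is invalid: the arm is a supporting ray of $\obj$ through the apex $q^*$, and even when a hidden vertex exists on that arc, this supporting ray can stop at the known vertex $v_2$ (or $v_3$) if $q^*$ ends up far enough along the probe line --- the existence of a hidden vertex somewhere does not control where the tangent from $q^*$ touches. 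What actually rules this out in the paper is a constraint on the apex position: the segment joining $q^*$ to the previous apex $q'$ (resp.\ $q''$) is disjoint from the interiors of both wedges, which yields $\angle(p_1^*, q^*, q') > \pi/2$, followed by a vertex-by-vertex exclusion $p_1^* \neq v_1, v_2, v_3, v_4, v$; excluding $v_2$ is where the angle test $\angle(v_2,v_3,q') < \angle(q,v_2,v_3)$ is used in Case 1, and where the no-narrow-vertex hypothesis (via the auxiliary point $t = H_2' \cap H_2''$) is used in Case 3. Note also that the paper never shows, and never needs, that $p_1^*$ lies on the arc between $v_2$ and $v_3$ --- only that it is not a known vertex. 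Finally, the lemmas you invoke cannot do this work: Lemma~\ref{lem:feasible_triangle} concerns feasible regions of hypothetical polygons (it serves the adversary lower bounds and says nothing about where an actual probe's contact lands on the fixed polygon $\obj$), and Lemma~\ref{lem:uniqueness_lemma_1} only pins down the apex. So the heart of the lemma --- the per-case exclusion argument --- is precisely what your proposal defers as a routine check, and the route you sketch toward it would not succeed.
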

\begin{proof}
The \textit{Hit Step} of \emph{Input $2$} consists of one probe only. We choose the direction of this probe to be one of the following two:
\begin{enumerate}
\item $\overrightarrow{ L_{v_3x_1}}$, where $x_1$ is a clockwise neighbour
 of $v_3$ in $Q$.
\item $\overrightarrow{ L_{v_2x_2}}$, where $x_2$ is a counter-clockwise neighbour of $v_2$ in $Q$ (refer to Figure~\ref{fig:omega90_alg}).
\end{enumerate}

\vspace{0.5em}
There are two possible outcomes of the probe along the line $\overrightarrow{L_{v_1 v_2}}$ (the second probe of the initialization step of Input $2$). We consider those cases separately.

\begin{itemize}
\item[$\bullet$] Case 1: Assume, that the probe along $\overrightarrow{L_{v_1 v_2}}$ returns two new vertices $v_3$ and $v_4$ (refer to Figure~\ref{fig:omega90_alg_case1}). We showed in the proof of Lemma~\ref{lemma:not_an_edge}, that the angles $\angle (v_3, q', q)$ and $\angle (q', q, v_2)$ are strictly bigger than $\pi / 2$. Thus, in the quadrilateral $\lbrace q,q',v_3,v_2 \rbrace$ at least one of the angles $\angle (v_2, v_3, q')$ or $\angle (q, v_2, v_3)$ is strictly smaller than $\pi / 2$. Consider the smallest one, which is smaller than $\pi / 2$. If $\angle (v_2, v_3, q') < \angle (q, v_2, v_3)$, then we choose $\overrightarrow{ L_{v_3v_1}}$ as a direction for the \textit{Hit Step} probe. Otherwise, we choose $\overrightarrow{ L_{v_2v_4}}$.

We focus on the case where $\angle (v_2, v_3, q') < \angle (q, v_2, v_3)$. The other case is similar. We shoot our probe along the line $\overrightarrow{ L_{v_3v_1}}$. Let ($q^*$, $H^*_1$, $H^*_2$, $p^*_1$, $p^*_2$) be the outcome of the probe. Refer to Figure~\ref{fig:omega90_alg_case1_hit_step}. Since $v_3$ is not a narrow vertex, $q^* \neq v_3$. Similarly to what was shown in the proof of Lemma~\ref{lemma:not_an_edge}, the line segment $q' q^*$ is disjoint with the interiors of the $\omega$-wedges at $q'$ and $q^*$, leading to the conclusion that the angles $\angle (p^*_1, q^*, q')$ and $\angle (q^*, q', v_4)$ are strictly bigger than $\pi / 2$. The arm $H^*_1$ contains the vertex $p^*_1$ of $\obj$ and it is to the right of the line $\overrightarrow{ L_{v_3v_1}}$ (refer to Definition~\ref{def:probe_along_L}). The vertex $p^*_1$ satisfies the following properties:

\vspace{0.5em}
\begin{enumerate}
\item $p^*_1 \neq v_3$. Otherwise, the vertices $v_2$ and $v_4$ would not belong to the interior of the wedge at $q^*$, which is a contradiction to the validity of the probe. Similarly, $p^*_1 \neq v_1$.
\item $p^*_1 \neq v_2$.  We showed previously, that $\angle (p^*_1, q^*, q') > \pi / 2$. Assume, to the contrary, that $p^*_1 = v_2$. Thus, $\angle (v_2, q^*, q') > \pi / 2$. The vertex $v_3$ is inside the wedge induced by $q'$, $q^*$ and $v_2$ according to the construction. Therefore, the angle $\angle (v_2, v_3, q')$ is bigger than $\angle (v_2, q^*, q')$. This is a contradiction because $\angle (v_2, v_3, q') < \pi / 2$.
\item $p^*_1 \neq v_4$. Otherwise, the line segment $q' q^*$ would be completely inside the interior of the $\omega$-wedges at $q'$ and $q^*$, which is impossible according to the construction.
\end{enumerate}
Consequently, $p^*_1$ is a newly discovered vertex of $\obj$.

\vspace{0.5em}
The other arm $H^*_2$ of the probe contains the vertex $p^*_2$ of $\obj$ and it is to the left of the line $\overrightarrow{ L_{v_3v_1}}$ or on it (by Definition~\ref{def:probe_along_L}). If $p^*_2 \in \overrightarrow{ L_{v_3v_1}}$ then the arm $H^*_2$ coincides with the line $\overrightarrow{ L_{v_3v_1}}$ and as a result the edge $v_1v_3$ of $\obj$ is confirmed. If $p^*_2 \notin \overrightarrow{ L_{v_3v_1}}$ then $p^*_2$ must be a newly discovered vertex of $\obj$, since there were no vertices of $\obj$ discovered so far to the left of $\overrightarrow{ L_{v_3v_1}}$. In either way, the arm $H^*_2$ reveals new information about $\obj$.

\vspace{0.5em}
We conclude, that the probe of the \textit{Hit Step}, that follows the first case of the initialization step, contributes two new pieces of information.

\item[$\bullet$] Case 2: Assume, that the probe along $\overrightarrow{ L_{v_1 v_2}}$ confirms the edge $v_1 v_2$ of $\obj$ (refer to Figure~\ref{fig:omega90_alg_case2}). In this case the direction for the \textit{Hit Step} probe will be $\overrightarrow{ L_{v_3v_4}}$ (refer to Figure~\ref{fig:omega90_alg_case2_hit_step}). Let ($q^*$, $H^*_1$, $H^*_2$, $p^*_1$, $p^*_2$) be the outcome of this probe. Similarly to the previous case, we can show that the line segment $q'' q^*$ is disjoint with the interiors of the $\omega$-wedges at $q''$ and $q^*$, leading to the following conclusion: $\angle (p^*_1, q^*, q'') \geq \pi / 2$ and $\angle (q^*, q'', v) \geq \pi / 2$. Note, that in Figure~\ref{fig:omega90_alg_case2_hit_step}, the vertex $v$ is shown to be equal to the vertex $v_1$. Let $t$ be the intersection point between $H'_2$ and $H''_2$. Note, that if $v = v_1$, then  $t =  v_1$. The angle $\angle (q'', t, v_2) > \pi / 2$, otherwise $q''$ would be inside the $\omega$-wedge at $q'$ together with the line segment $q' q''$, which is a contradiction (refer to the proof of the Lemma~\ref{lemma:not_an_edge}).

The arm $H^*_1$ contains the vertex $p^*_1$ of $\obj$ and it is to the right of $\overrightarrow{ L_{v_3v_4}}$ (refer to Definition~\ref{def:probe_along_L}). The vertex $p^*_1$ satisfies the following properties:

\begin{enumerate}
\item $p^*_1 \neq v_3$. Otherwise, the vertices $v_2$, $v_1$ and $v$ would not belong to the interior of the wedge at $q^*$, which is a contradiction to the validity of the probe. Similarly, $p^*_1 \neq v_4$.
\item $p^*_1 \neq v_2$.   Assume, to the contrary, that $p^*_1 = v_2$. Consider the quadrilateral $\lbrace q'', q^*, v_2, t \rbrace$. We showed before, that $\angle (v_2, q^*, q'') = \angle (p^*_1, q^*, q'') \geq \pi / 2$, $\angle (q^*, q'', t) = \angle (q^*, q'', v) \geq \pi / 2$ and $\angle (q'', t, v_2) > \pi / 2$. This implies that the fourth internal angle $\angle (t, v_2, q^*) = \angle (v_1, v_2, q^*)$ of the quadrilateral must be smaller than $\pi / 2$. All the probes made so far are valid $\omega$-probes of $\obj$ and thus $\obj$ should reside inside the intersection of all $\omega$-wedges. Thus, the internal angle of $\obj$ at $v_2$ is inside the wedge induced by $q^*$, $v_2$ and $v_1$ and therefore it is smaller than $\pi / 2$. This is a contradiction, because $v_2$ is not a narrow vertex.

\item $p^*_1 \neq v$. Otherwise, the line segment $q'' q^*$ is completely inside the interiors of the $\omega$-wedges at $q''$ and $q^*$, which is impossible according to the construction.

\item $p^*_1 \neq v_1$.  The $\omega$-wedge at $q^*$ is a valid $\omega$-probe and thus must contain the edge $v_1v_2$. In order for the arm $H^*_1$ to contain the vertex $v_1$, the apex of the probe $q^*$ should be to the left of $\overrightarrow{ L_{v_1v_2}}$, which is impossible according to the construction.

\end{enumerate}
Consequently, $p^*_1$ is a newly discovered vertex of $\obj$.
Similarly to the previous case, the arm $H^*_2$ reveals new information about $\obj$ (it may be an edge or a vertex). We showed that the probe of the \textit{Hit Step}, that follows the second case of the initialization step, contributes two new pieces of information. 
\qed
\end{itemize}
\end{proof}

Let us analyze the performance of the algorithm on Input $2$. It follows from the algorithm that the invariant is correctly maintained. The following theorem shows that the algorithm terminates.

\begin{theorem}
\label{theo:model_1_sum-up_theorem_upper_bound_90}
Given $\omega$-wedge such that $\omega = \pi / 2$ and a convex polygon $\obj$ (whose angles are strictly bigger than $\omega$), the algorithm on Input $2$ reconstructs $\obj$ using at most $2n - 3$ $\omega$-probes. 
\end{theorem}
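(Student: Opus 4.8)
The plan is to reuse the potential function $\Phi = 2|Q| - F$ from the proof of Theorem~\ref{theo:model_1_sum-up_theorem_upper_bound} and to charge one probe per unit increase of $\Phi$, exactly as there. Throughout the run every vertex of $Q$ is a vertex of $\obj$ and $F \geq 0$, so $\Phi = 2|Q| - F \leq 2n$ at all times; and when the main loop halts we have $F = 0$, whence all edges of $Q$ are confirmed edges of $\obj$ and the invariant forces $Q = \obj$, i.e. $\Phi = 2n$. The two facts I will lean on are: (i) each iteration of the main loop (the generic \emph{Algorithm}) increases $\Phi$ by at least one, which is precisely the step already verified in Theorem~\ref{theo:model_1_sum-up_theorem_upper_bound}; and (ii) the initialization together with the \emph{Hit Step} are geometrically sound, which is exactly the content of Lemma~\ref{lemma:not_an_edge} and Lemma~\ref{lemma:hit_step}.

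The crux is to observe that the single \emph{Hit Step} probe raises $\Phi$ by exactly $2$, i.e. it does the work of two ordinary iterations at the cost of one probe; this ``two-for-one'' is the entire source of the saving. By Lemma~\ref{lemma:hit_step} the Hit Step returns two new pieces of information, and a routine bookkeeping of the flag and $F$ updates listed in \emph{Input $2$} confirms the $+2$: in Hit Step Case~$1$ the branch $p^*_2 = v_3$ gives $\Delta|Q| = 1$, $\Delta F = 0$ and the branch $p^*_2 \neq v_3$ gives $\Delta|Q| = 2$, $\Delta F = 2$, each contributing $\Delta\Phi = 2$; the same holds verbatim for Hit Step Case~$3$. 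The delicate point hiding inside this bookkeeping is not arithmetic but the claim that the chosen direction ($\overrightarrow{L_{v_3 v_1}}$, $\overrightarrow{L_{v_2 v_4}}$, or $\overrightarrow{L_{v_3 v_4}}$) cannot return an already-known vertex on \emph{both} arms, so that each arm really contributes a fresh vertex or a freshly confirmed edge; this is where $\omega = \pi/2$ enters and where Lemma~\ref{lemma:not_an_edge} (that $v_3 v_2$ is not an edge, so the base segment still hides unexplored boundary) and Lemma~\ref{lemma:hit_step} are invoked.

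With the $+2$ fact established I would record, for the three scenarios produced by \emph{Input $2$}, the number $P_0$ of probes spent through the Hit Step and the potential $\Phi_0$ immediately afterwards. In the first initialization scenario the Hit Step begins at $|Q|=4$, $F=4$, $\Phi=4$, so $P_0 = 3$ and $\Phi_0 = 6$. In the second scenario with $v \neq v_1$ it begins at $|Q|=5$, $F=4$, $\Phi=6$, giving $P_0 = 4$ and $\Phi_0 = 8$; with $v = v_1$ it begins at $|Q|=4$, $F=3$, $\Phi=5$, giving $P_0 = 4$ and $\Phi_0 = 7$. Now the count closes mechanically: by fact~(i) the number of subsequent main-loop probes is at most $2n - \Phi_0$, so the total is at most $P_0 + (2n - \Phi_0)$, which evaluates to $3 + (2n - 6) = 2n - 3$ in the first scenario, $4 + (2n - 8) = 2n - 4$ in the second with $v \neq v_1$, and $4 + (2n - 7) = 2n - 3$ in the second with $v = v_1$. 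In every case the total is at most $2n - 3$, and termination follows as in Theorem~\ref{theo:model_1_sum-up_theorem_upper_bound} since $\Phi$ strictly increases while staying bounded by $2n$.

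The main obstacle, as flagged above, is purely the geometric validity of the Hit Step (Lemma~\ref{lemma:hit_step}); the potential-function accounting is then routine. I would also dispose of the small cases $n \leq 4$ separately, since they lie outside the $n \geq 5$ hypothesis of Lemmas~\ref{lemma:not_an_edge} and~\ref{lemma:hit_step} and there the initialization probes already determine $\obj$ outright, so the bound $2n-3$ can be checked directly.
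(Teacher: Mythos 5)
Your proof is correct and follows essentially the same route as the paper's: the same potential $\Phi = 2|Q| - F$, the same per-scenario values after initialization ($\Phi = 4$, $5$, or $6$), the same $+2$ credit for the Hit Step justified by Lemma~\ref{lemma:hit_step}, and the same final counts $2n-3$, $2n-3$, $2n-4$. Your separate treatment of $n \leq 4$ is vacuous rather than necessary: a convex polygon with all internal angles strictly greater than $\pi/2$ has angle sum $(n-2)\pi > n\pi/2$, forcing $n \geq 5$, so no such small cases exist.
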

\begin{proof}
Let $n$ denote the number of vertices of $\Ob$. Consider the quantity 
\[ \Phi = 2|Q| - F , 
\]
where $|Q|$ denotes the number of vertices of $Q$. At the start of the 
algorithm, we have $\Phi = 2$.  At the end of the initialization step, if we are in the first case (the probe along $\overrightarrow{L_{v_1 v_2}}$ returns two new vertices), then $|Q| = 4$, $F = 4$ and thus $\Phi = 4$. Otherwise,
\begin{itemize}
\item if $v = v_1$, then $|Q| = 4$, $F = 3$ and thus $\Phi = 5$.
\item if $v \neq v_1$, then $|Q| = 5$, $F = 4$ and thus $\Phi = 6$.
\end{itemize}
The \textit{Hit Step} increases $\Phi$ by two. The \textit{hit probe} either discovers two new vertices (and thus $|Q|$ and $F$ increase by two), or one new vertex and one new edge ($|Q|$ increases by one and $F$ does not change).
In every iteration of the algorithm, the value of $\Phi$ increases by at least one. Notice, that every vertex of $Q$ is a vertex of $\obj$. Since, at any moment, $\Phi \leq 2n$, it follows that the algorithm makes at most
\begin{itemize}
\item[$\bullet$] $2n-6$ iterations, if the initialization unfolded according to the first case, and
\item[$\bullet$] $2n-7$ iterations, if the initialization followed the second case (at most $2n-7$ iterations if $v = v_1$, and at most $2n-8$ iterations if $v \neq v_1$).
\end{itemize} 
Thus, the algorithm terminates. After termination, we have $F=0$. It then follows from the invariant that, at that moment, $Q$ is equal to $\Ob$. Finally, the number of probes made by the overall reconstruction process is at most $2n-3$.  
\qed
\end{proof}

\subsection{Lower Bound}
\label{subsec:lower_bound_model1}

In this section, we prove a lower bound on the number of probes required to reconstruct an $n$-gon that is equal to the number of probes used by our algorithm (refer to Theorem~\ref{theo:model_1_sum-up_theorem_upper_bound} and~\ref{theo:model_1_sum-up_theorem_upper_bound_90}) in the worst case. We present an adversarial argument that forces any probing strategy to make a given number of probes to determine the exact shape and orientation of $\obj$. Our goal is to show that in order to be correct, any algorithm must confirm every edge and probe for every vertex of $\obj$. Given any sequence of probes by an algorithm, we show that the adversary can force the algorithm to spend at least $2n-2$ probes for any $\omega$ in the range $0 < \omega < \pi/2$, and at least $2n-3$ probes for $\omega = \pi/2$. We outline the adversary's strategy below.

\vspace{0.5em}
Consider an algorithm that wants to reconstruct an object $\obj$ that is ``hidden" by the adversary. The algorithm provides a direction of its next probe to the adversary. If the direction is valid, the adversary returns an output that is coherent with the information already given to the algorithm. This output consists of: two points of contact with $\obj$, the orientation of both arms and the coordinates of the apex. For a probe that is not valid, the adversary replies that the probe missed the object and gives no additional information. The strategy of the adversary is to answer each probe by revealing new information only if it is unavoidable. We showed in Lemma~\ref{lem:feasible_triangle} that as long as an edge $e$ of $Q \subseteq \obj$ is not confirmed (by a probe directed through that edge), the feasible region of $F_{Q,e}$ is not empty and does not degenerate into a line segment, meaning that the adversary has infinitely many points inside that region where it can place a vertex to deny the existence of the edge $e$. Similarly, when the algorithm aims for some place where, according to its strategy, a vertex of $\obj$ is most likely to be hidden, the adversary orients the arms of the probe to ``shrink" the feasible region of that place. The probe then reveals only known vertices and, possibly confirms an edge. For directions that do not intersect $Q$, the adversary lets the probe miss the object completely. 
While in some cases it is unavoidable for the adversary to reveal new information about $\obj$, it is always possible to reveal only one new piece of information per probe (except for the first two valid probes).

\vspace{0.5em}
We show that for every algorithm, there is a convex polygon, such that, except for the first two probes, in each probe the algorithm either discovers a new vertex or verifies an edge or does not discover any new information. So, at least $2n-2$ $\omega$-probes are necessary to determine a convex $n$-gon when $0 < \omega < \pi /2$. When $\omega = \pi /2$, the adversary forces at least $2n-3$ $\omega$-probes. 

We assume that the algorithm is deterministic and does not repeat the same probe during one probing session. 

\vspace{0.5em} 
The adversary begins by defining a circle $\Psi$ (where the polygon $\obj$ will reside), a point $p$ as the center of $\Psi$ and chooses $n \geq 4$ ($n \geq 5$ when $\omega = \pi /2$) to avoid narrow vertices. The adversary maintains a closed convex curve $A$ that is initially the circle $\Psi$. During queries the adversary will change the shape of the curve $A$ such that $A$ is the boundary of the intersection of all $\omega$-probes made so far with $\Psi$. Once the adversary reveals a vertex of the polygon $\obj$, it remains fixed on $A$, defining sections of the curve that cannot be changed. So, at the end of a query session the curve $A$ becomes the convex polygon $\obj$.

Let $\overrightarrow{L_x}$ be a direction of the probe, such that $\overrightarrow{L_x}$ does not intersect $\obj$. The probe along this line is considered to be non-valid (refer to the Definition~\ref{def:probe_along_L}) and reported as such to the algorithm. But, if $\overrightarrow{L_x}$ intersects $A$, we update $A$ to be the intersection of $A$ and the half-plane, that contains $\overrightarrow{L_x}$ on its boundary and the point $p$. 

\vspace{0.5em}
Notice that, during the construction of the polygon, the adversary avoids to locate vertices at positions that allow existence of two edges, whose extensions form an angle of $\omega$ (refer to Figure~\ref{fig:points_of_contact_2}). Let $Q$ be the convex hull of the vertices of $\obj$ that have been discovered, and let $e = (v_1, v_2)$ be an edge of $Q$ but not an edge of $\obj$ (meaning that the interior of $e$ is not on the boundary of $A$ and thus $F_{Q,e} \setminus l_e \neq \emptyset$ , where $F_{Q,e}$ is the feasible region (see Definition~\ref{def:feasible}) and $l_e$ is the line that contains $e$). To create a new vertex $v$ between $v_1$ and $v_2$ on the boundary of $\obj$, the adversary chooses the position for $v$ in $F_{Q,e} \setminus l_e$, such that for every edge $e_Q$ of $Q$ (but not $e$): $\angle (l_{e_Q}, l_{v_1, v}) \neq \omega$ and $\angle (l_{e_Q}, l_{v_2, v}) \neq \omega$ (where $l_{e_Q}$  is the line that contains $e_Q$ and $l_{v_1, v}$ (respectively $l_{v_2, v}$) is the line that contains $v_1$ and $v$ (respectively $v_2$ and $v$)). It is always possible because $Q$ has a finite number of edges (this number is smaller than $n$) and there are infinitely many positions for $v$ inside $F_{Q,e} \setminus l_e$. Refer to Lemma~\ref{lem:feasible_triangle}. 

\vspace{0.5em} 
The strategy of the adversary can be described by four different stages. The initial stage lasts until the algorithm discovers four distinct vertices $v_1$, $v_2$, $v_3$ and $v_4$ of $\obj$. We assumed at the beginning that $\obj$ has no narrow vertices. This means that all the internal angles of the polygon $\obj$ should be larger than $\omega$. To comply with this assumption, the adversary has to place $v_1$, $v_2$, $v_3$ and $v_4$ in such a way that there exists a convex polygon $\obj$ whose vertices include $v_1$, $v_2$, $v_3$ and $v_4$, that has no internal angle smaller or equal to $\omega$ (refer to the Figure~\ref{fig:adversary_initialization}). Any algorithm will require at least two probes to discover the vertices $v_1$, $v_2$, $v_3$ and $v_4$.

Assume that the direction of the first valid probe $\overrightarrow{L}$ made by the algorithm contains $p$. Otherwise, the adversary can make an object small enough for the probe to miss the object. (If there were non-valid probes that intersected $A$, then the adversary would create a circle inside $A$ with center $p$ and restricts its working space to this circle. For simplicity, we call this circle $\Psi$ again and update $A$ to be equal to $\Psi$.) The adversary stops the probe at any moment after the apex $q$ of the probe enters the interior of $\Psi$, but before it reaches $p$. Let $(q,H_1,H_2,p_1,p_2)$ be the outcome of the probe along $\overrightarrow{L}$ returned by the adversary, such that $H_1$ (respectively $H_2$) makes a negative (respectively positive) angle of $\omega / 2$ with $\overrightarrow{L}$ (refer to Figure~\ref{fig:adversary_initialization}). 

\begin{figure}[h]
    \begin{center}
        \subfigure[Initial steps of adversary's strategy for $0 < \omega < \pi /2$.]{
            \label{fig:square_initialization}
            \includegraphics[width=0.5\textwidth]{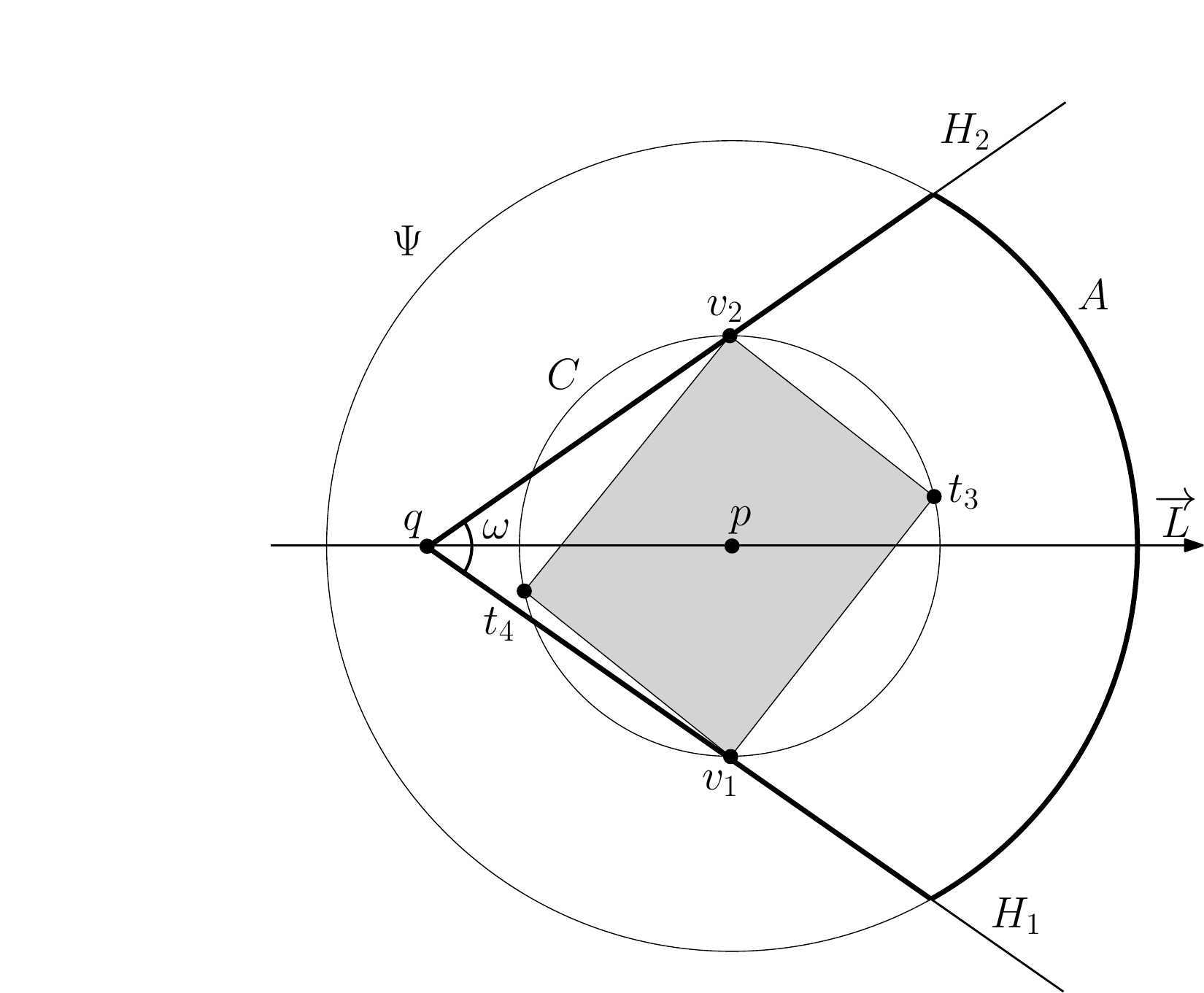}
        }%
        \hspace{0.3cm}%
        \subfigure[Initial steps of the adversary, given $\omega = \pi /2$.]{
            \label{fig:pentagon_initialization}
             \includegraphics[width=0.38\textwidth]{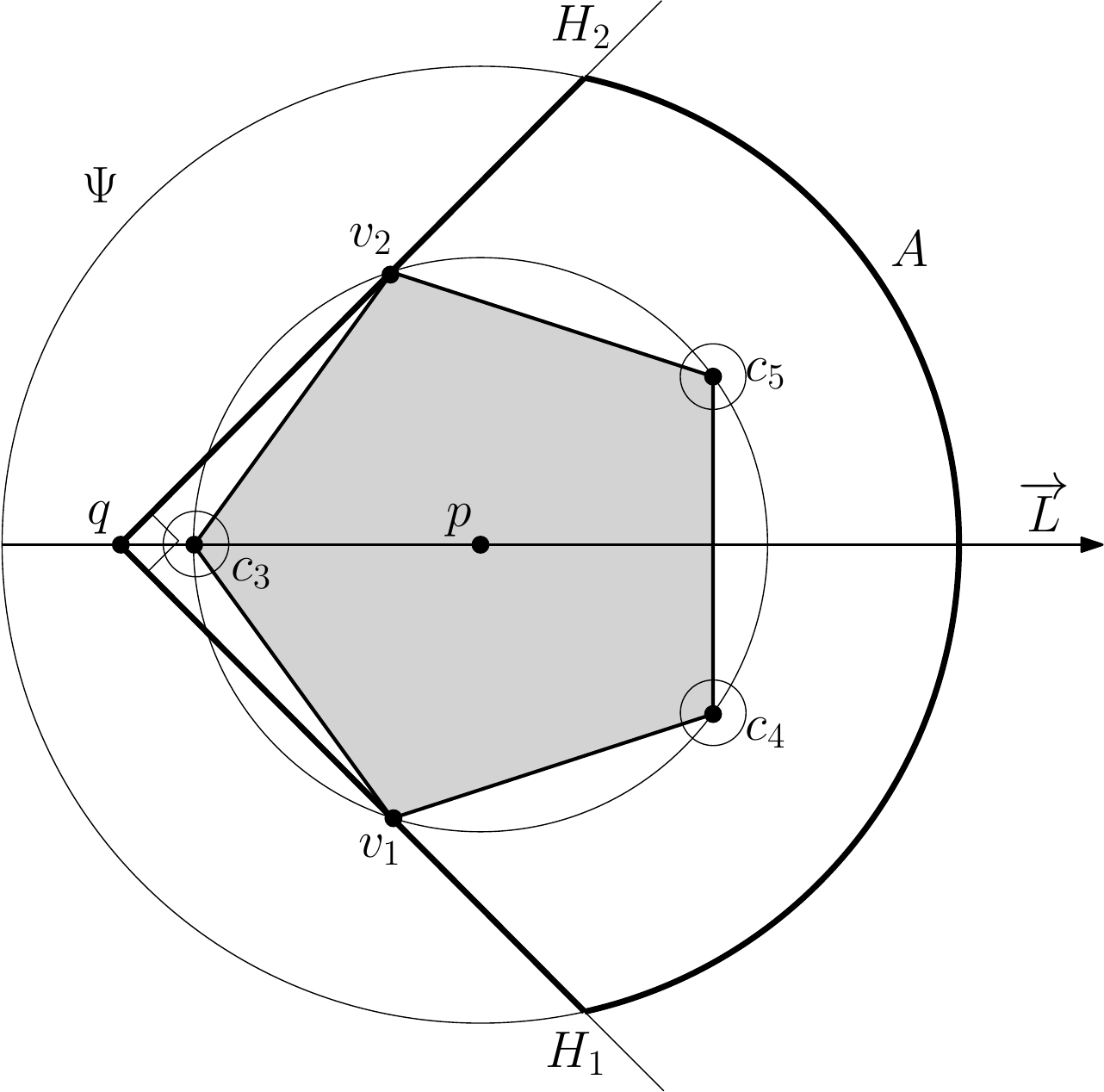}
        }% 
    \end{center}
    \vspace{-1.0em}
    \caption{Initialization. After the first valid $\omega$-probe, the points $v_1$ and $v_2$ are fixed. The convex curve $A$ is shown in bold.}
%    \vspace{-1.0em}
\label{fig:adversary_initialization}
\end{figure}

For now, assume that $0 < \omega < \pi /2$. Given $\overrightarrow{L}$, we define an $\omega$-wedge in such a way that there exists a rectangle $\{v_1, t_3, v_2, t_4 \}$ in the interior of the $\omega$-wedge such that it is inscribed into a circle $C$ with the center at $p$ and vertices $v_1$ and $v_2$ belong to the arms of the probe (without loss of generality assume that $v_1 \in H_1$ and $v_2 \in H_2$). The point $p$ is the midpoint of the line-segment $v_1v_2$ (refer to Figure~\ref{fig:square_initialization}). The adversary returns $p_1 = v_1$ and $p_2 = v_2$ as points of contact. 

The vertices $v_1$ and $v_2$ of the rectangle are fixed. The vertices $t_3$ and $t_4$ are not fixed yet. They can be anywhere in the interior of $A$ (excluding boundary), as long as all the internal angles of the quadrilateral $\{v_1$, $t_3$, $v_2$, $t_4 \}$ are equal to $\pi /2$. Thus, the vertices $t_3$ and $t_4$ have temporary positions. If the direction of the next probe contains any of the four edges of the rectangle $\{v_1$, $t_3$, $v_2$, $t_4 \}$, the points $t_3$ and $t_4$ move to avoid confirming an edge. If the probe reveals $t_3$ (respectively $t_4$), it becomes fixed and we rename it into $v_3$ (respectively $v_4$). If the probe reveals only one temporary vertex $t_3$ or $t_4$, then we only rename the revealed vertex, but \textbf{fix both vertices}. Then the probe outputs only the vertices it found (in this case the output will include vertex $v_1$ or $v_2$ already known by the algorithm) leaving the fourth, not revealed vertex of the rectangle, fixed by the adversary but still hidden from the algorithm. The algorithm, of course, can guess its position now. This concludes the initialization step for $0 < \omega < \pi /2$.

\vspace{0.5em}
Let us explore the case $\omega = \pi /2$. Given $\overrightarrow{L}$, we define an $\omega$-wedge in such a way that there exists a regular pentagon $\{v_1, c_4, c_5, v_2, c_3\}$ in the interior of the $\omega$-wedge such that it is inscribed into a circle with center at $p$ and exactly two non-adjacent vertices $v_1$ and $v_2$ belong to the arms of the probe (without loss of generality assume that $v_1 \in H_1$ and $v_2 \in H_2$) (refer to Figure~\ref{fig:pentagon_initialization}). The adversary returns $p_1 = v_1$ and $p_2 = v_2$ as points of contact. 

The vertices $v_1$ and $v_2$ of the pentagon are fixed. As in the previous case, the other three vertices can move (within some constraints) to avoid edge confirmation during probing. We create a circle $C_3$ (respectively, $C_4$, $C_5$) with center $c_3$ (respectively, $c_4$, $c_5$) and diameter $\frac{1}{2\sqrt{2}}dist(v_1,v_2)(1-\tan \frac{\pi}{5})$, where $dist(v_1,v_2)$ is the distance between $v_1$ and $v_2$. The vertex $v_3$ (respectively, $v_4$, $v_5$) is fixed during later probes of the algorithm and stays in the interior of $C_3$ (respectively, $C_4$, $C_5$). For any point $p_3'$ (respectively, $p_4'$, $p_5'$) within the interior of $C_3$ (respectively, $C_4$, $C_5$) all the interior angles of the pentagon $\{v_1, p_4', p_5', v_2, p_3'\}$ are strictly bigger than $\dfrac{\pi}{2}$, meaning that no narrow vertex can be created. None of the circles degenerates into a point since their diameter is strictly bigger than zero.

The adversary maintains an $\omega$-cloud $\Omega$ of the pentagon $\{v_1, c_4, c_5, v_2, c_3\}$. If, for example, the direction of the probe enters the cloud through the ark that is supported by $c_3$ and $c_4$, then the adversary fixes two new vertices $v_3$ and $v_4$, such that $v_3 = c_3$ and $v_4 = c_4$. Every query of the algorithm will be answered by the adversary (until the algorithm discovers two new vertices) considering $\Omega$, unless the apex of the probe coincides with the pivot point on the cloud (refer to Definition~\ref{def:omega_cloud}), meaning that the direction of the probe coincides with one of the edges of the pentagon. Several cases are possible in this situation:

\begin{enumerate}
\item Assume that only $v_1$ and $v_2$ are fixed and that the vertices $v_3$, $v_4$ and $v_5$ are unknown. There are several cases to consider. The direction of the probe $\overrightarrow{L}$ may contain:
\begin{itemize}
\item $\overrightarrow{v_1c_3}$ (respectively, $\overrightarrow{v_1c_4}$) - choose two vertices $v_3$ and $v_4$, such that $v_3 \in C_3$, $v_4 \in C_4$; $v_3$ and $v_4$ are in the interior of $A$ and $v_3 \notin \overrightarrow{L}$ (respectively, $v_4 \notin \overrightarrow{L}$). Delete $c_3$, $C_3$, $c_4$ and $C_4$, and update $\Omega$. As the two points of contact return $v_1$ and $v_3$ (respectively, $v_1$ and $v_4$).
\item $\overrightarrow{v_2c_3}$ (respectively, $\overrightarrow{v_2c_5}$) - similar to the previous case. 
\item $\overrightarrow{c_3v_1}$ (respectively, $\overrightarrow{c_3v_2}$) - choose a vertex $v_3$ furthest from the segment $v_1v_2$, such that $v_3 \in C_3$, $v_3$ is in the interior of $A$ and $v_3 \notin \overrightarrow{L}$. Delete $c_3$ and $C_3$, and update $\Omega$. Return $v_3$ and $v_1$ (respectively, $v_3$ and $v_2$) as the two points of contact.
\item $\overrightarrow{c_4c_5}$ - choose a vertex $v_4$ furthest from the segment $v_1c_5$, such that $v_4 \in C_4$, $v_4$ is in the interior of $A$ and $v_4 \notin \overrightarrow{L}$. Delete $c_4$ and $C_4$, and update $\Omega$. Return $v_4$ and $v_1$ as the two points of contact.
\item $\overrightarrow{c_5c_4}$ - similar to the previous case.
\item $\overrightarrow{c_4v_1}$ (respectively, $\overrightarrow{c_5v_2}$) - choose two vertices $v_4$ and $v_5$, such that $v_4 \in C_4$, $v_5 \in C_5$, $v_4$ and $v_5$ are in the interior of $A$ and $v_4 \notin \overrightarrow{L}$ (respectively, $v_5 \notin \overrightarrow{L}$). Delete $c_4$, $C_4$, $c_5$ and $C_5$, and update $\Omega$. Return $v_1$ and $v_4$ (respectively, $v_2$ and $v_5$) as the two points of contact. 
\end{itemize}

\item If  $\overrightarrow{L}$ contains two fixed vertices, meaning, that at least three vertices of the pentagon are known to the algorithm, then the adversary proceeds similarly to the second step of its protocol (read further) by revealing a new vertex of $\obj$, that will separate the two fixed vertices on the boundary of $\obj$. The second arm of the probe can touch an already known vertex. If this is not the case, then, since $\omega = \pi /2$, we always have freedom to adjust the wedge in such a way, that the second arm will intersect the circle of one of not yet revealed vertices of the pentagon. The adversary then fixes the vertex in the center of the corresponding circle. Note, that this probe may reveal up to two new pieces of information about $\obj$.
\end{enumerate}

After every probe the adversary updates the curve $A$, and if new vertices were discovered, it also updates the $\omega$-cloud $\Omega$. 

\vspace{0.5em} 
\begin{itemize}
\item If the algorithm spent $2$ probes and revealed $4$ vertices of the pentagon, then the fifth vertex is fixed by the adversary (as the center of the corresponding circle), but not revealed yet to the algorithm. The algorithm may guess its position though. 
\item If the algorithm spent $3$ probes but there is still at least one not fixed vertex of the pentagon, then the adversary fixes all of them.
\end{itemize}

Either one of the two cases terminates the initialization step, and the adversary proceeds to the second stage of its protocol.

\vspace{0.5em} 
During the second stage of the strategy the adversary answers each query with already revealed vertices unless the direction of the probe contains two vertices $u$ and $v$ known to the algorithm, that are adjacent on $A$. In this situation one new vertex (between $u$ and $v$ on the boundary of $\obj$) should be revealed or an edge $\overline{u v}$ must be confirmed. The adversary's strategy would not confirm any edge on this type of query until $n-1$ vertices have been revealed. Since we are not allowed narrow vertices, the adversary positions new vertices in such a way that their internal angle in the convex hull of already revealed vertices of $\obj$ is bigger than $\omega$. Note, that for the particular case, when $\omega = \pi /2$, fixed vertices that are unknown to the algorithm will be considered as revealed/known vertices.

\vspace{0.5em}
When the algorithm knows $n-1$ vertices of $\obj$ the adversary proceeds to the third stage of its strategy, during which the algorithm will be able to confirm edges. The adversary will not reveal the last, $n^{th}$ vertex, forcing the algorithm to inspect every consecutive pair of vertices on $A$ to check whether they are connected by an edge or there is a ``missing" vertex between them. As a result, the adversary will reveal already known vertices and confirm edges.

\vspace{0.5em}
By the final, fourth stage of the adversary's strategy, the algorithm knows $n-1$ vertices and $n-2$ edges of $\obj$. There is left only one pair of adjacent vertices $u$ and $v$ for which the edge $\overline{uv}$ has not been verified. When the algorithm sets the direction of the probe to be the line $L_{uv}$, the adversary has no choice but to reveal the last vertex. The algorithm should spend at least two more probes to complete the polygon's boundary, because the number of vertices $n$ is unknown to the algorithm. The query session then is complete and the convex polygon is found.

\vspace{0.5em}
The query for the adversary is a direction of the probe $\overrightarrow L$. The adversary asks the algorithm about its next probe and outputs the result.

\vspace{0.5em}
The input for the algorithm is the circle $\Psi$ and the point $p$. The algorithm also knows that the point $p$ is inside the unknown object $\obj$. It then performs probes (i.e., gives the adversary the direction of the probe and gets results) and outputs a convex polygon $\obj'$. If $\obj' = \obj$ then the output is correct.

\vspace{0.5em}
\rule{0.88\textwidth}{1pt}

\noindent 
{\bf Invariant (maintained by the adversary):}
\begin{enumerate}
\item $A$ is a closed convex curve that is the intersection of the circle $\Psi$ and all $\omega$-probes made so far by the algorithm. $A$ is a circle or it consists of circular arcs and line segments.
\item $P$ is a set of points on $A$.
\item Each line segment of $A$ contains at least one and at most two points of $P$.
%\item If $A$ is not a circle, then no point of $P$ is in the relative interior of any circular arc of $A$. 
\item $F$ is a variable with value $F = | \{ u \in P : u$ and $ccw(u)$ are not on the same line segment of $A \} |$, where $ccw(u)$ is the counter-clockwise successor of $u$ on $A$.
\item There exists a convex $n$-gon $\obj$, such that each point of $P$ is a vertex of $\obj$ and all internal angles of $\obj$ are larger than $\omega$. If $\obj$ is the object to be reconstructed, then the algorithm makes the same sequence of probes it has made so far.
\item After the initialization, $P$ contains at least four points (for $0< \omega < \pi /2$), and five or six points (for $\omega = \pi /2$). All the internal angles of the polygon that have points of $P$ as vertices, are larger than $\omega$.   
\end{enumerate}

\vspace{1em}
\noindent 
{\bf Adversary's strategy:}
\begin{enumerate}
\item Initialization: Choose $n \geq 5$, where $n$ denotes the number of vertices of the object to be reconstructed. On the first valid $\omega$-probe the adversary reveals two vertices $v_1$ and $v_2$, that are positioned as described above.
%at points of contact of the $\omega$-probe with $\Psi$.  
Refer to Figure~\ref{fig:adversary_initialization}.

We consider two cases:
\begin{enumerate}
\item $0< \omega < \pi /2$: If the direction of the probe contains any of the four edges of the rectangle $\{v_1$, $t_3$, $v_2$, $t_4 \}$, the points $t_3$ and $t_4$ move to avoid edge confirmation (as described above). If the probe reveals $t_3$ (or $t_4$), we fix both vertices and we rename $t_3$ into $v_3$ (and $t_4$ into $v_4$). We output only vertices found by the probe. Set $P = \{v_1, v_2, v_3, v_4\}$, $F=4$ and $A = \Psi \cap \omega$-wedge.

\item $\omega = \pi /2$: If the direction of the probe contains any of the five edges of the pentagon $\{v_1, c_4, c_5, v_2, c_3\}$, the points $c_3$, $c_4$ and $c_5$ move to avoid edge confirmation (as described above). If both points are fixed, reveal additional vertex $v$. Otherwise, answer queries with already known vertices or not fixed vertices of the pentagon until one of the following two happens:
\begin{enumerate}
\item If the algorithm spent $2$ probes and revealed $4$ vertices of the pentagon, then fix/rename the fifth vertex (as the center of the corresponding circle). Set $P = \{v_1, v_2, v_3, v_4, v_5\}$.
\item If the algorithm spent $3$ probes, then fix/rename all the vertices of the pentagon. Set $P = \{v_1, v_2, v_3, v_4, v_5\}$. If an additional vertex $v$ was revealed then add $v$ to $P$.
\end{enumerate}
Set $F=|P|$ and update $A$.
\end{enumerate}

To answer all subsequent queries the adversary will use the following protocol:

\item While $|P| < n-1$ do the following: 
\begin{quote}
Answer each query with already revealed vertices, unless the direction of the probe contains two known vertices $u, v \in P$, that are adjacent on $A$. In this situation the answer will be one new vertex $v'$ (between $u$ and $v$ on the boundary of $\obj$, such that $\angle (v, v', u) > \omega$) and one vertex $v''$ that is already known to the algorithm. 

Add $v'$ to $P$; set $F = F+1 $. Update $A$.
\end{quote}

\item While $F >1$ do the following: 
\begin{quote}
Answer each query with vertices from $P$, unless the direction of the probe contains $u, v \in P$, such that $v$ is a counter-clockwise successor of $u$ on $A$, and $u$, $v$ do not belong to the same straight-line edge of $A$. In the latter case confirm the edge $\overline{u v}$, set $F= F-1$ and update $A$. 
\end{quote}

\item Answer each query with vertices from $P$, unless the direction of the probe contains $u, v \in P$, such that $u$ and $v$ are adjacent on $A$ and they are not on the same straight-line edge of $A$ (in other words, the edge $\overline{uv}$ has not been verified). In this situation the answer would be one new vertex $v'$ (between $u$ and $v$ on the boundary of $\obj$) and one vertex $v''$ that is already known to the algorithm. 
\end{enumerate}

The algorithm should spend at least two more probes to complete the boundary of the polygon, because the number of vertices $n$ is unknown to the algorithm. The query session is then complete and the polygon is found.

\rule{0.88\textwidth}{1pt}

\vspace{0.5em}
It follows from this protocol that the invariant is correctly maintained. With the above strategy any algorithm will spend at least $2n-2$ probes for any $\omega$ in the range $0 < \omega < \pi/2$, and at least $2n-3$ probes for $\omega = \pi/2$. 

\begin{theorem}
\label{theo:model_1_sum-up_theorem_lower_bound}
Given an $\omega$-wedge, where $0 < \omega < \pi/2$, for every algorithm, there exists a convex polygon $\obj$ (whose angles are strictly bigger than $\omega$), such that $2n - 2$ $\omega$-probes are required to determine its shape. For the case when $\omega = \pi/2$, $2n-3$ $\omega$-probes are required to determine $\obj$.
\end{theorem}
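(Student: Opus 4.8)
The plan is to convert the adversary described above into a quantitative lower bound by tracking a potential that mirrors the quantity used in Theorems~\ref{theo:model_1_sum-up_theorem_upper_bound} and~\ref{theo:model_1_sum-up_theorem_upper_bound_90}. Concretely, I set $\Phi = 2|P| - F$, where $P$ is the set of vertices the adversary has committed to and $F$ is the number of cyclically consecutive pairs on $A$ whose connecting edge has not yet been confirmed. The two facts I must establish are: (i) the adversary's answers are always consistent with \emph{some} convex $n$-gon all of whose internal angles are strictly larger than $\omega$, so a correct algorithm genuinely cannot stop early; and (ii) $\Phi$ attains a fixed value after the initialization and increases by at most one per subsequent probe, whereas any correct algorithm must drive $\Phi$ all the way up to $2n$.

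For consistency (i), I would invoke Lemma~\ref{lem:feasible_triangle}: as long as an edge $e=(v_1,v_2)$ of the current hull of revealed vertices has not been probed through, $F_{Q,e}$ strictly contains a triangle with base $e$, so the adversary has infinitely many admissible positions at which to hide a new vertex $v$ strictly inside $\obj$. The placement rule stated above (choosing $v$ so that neither $l_{v_1 v}$ nor $l_{v_2 v}$ makes angle $\omega$ with any existing edge line, and so that the internal angle at $v$ exceeds $\omega$) guarantees that the resulting polygon still has no narrow vertex and avoids the degenerate flush-edge configuration of Figure~\ref{fig:points_of_contact_2}. This is the heart of the argument and the step I expect to be most delicate, since it must be verified to survive every stage of the protocol, including the extra pentagon bookkeeping when $\omega=\pi/2$.

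For the counting (ii), I would evaluate $\Phi$ after initialization. A single probe reveals two contact vertices, so after the first valid probe $|P|=2$, $F=2$, and $\Phi=2$. For $0<\omega<\pi/2$ the adversary commits to a rectangle and a second probe fixes all four vertices, giving $\Phi=4$ after two probes. For $\omega=\pi/2$ a rectangle is impossible, because a convex quadrilateral has interior angle sum $2\pi$ and hence cannot have four angles all strictly greater than $\pi/2$; the minimal admissible configuration is therefore a pentagon, forcing the adversary to commit to five vertices, so the (two-probe) initialization ends with $\Phi=5$ in the best case for the algorithm. I would then verify the per-probe increment: inserting a hidden vertex between two consecutive revealed vertices raises $|P|$ by one and $F$ by one (net $+1$), while confirming an edge lowers $F$ by one (net $+1$); the adversary's rule of returning an already-known vertex on the remaining arm ensures that no single probe achieves both. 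Hence every probe after the initialization adds at most one to $\Phi$.

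Finally, I would argue that termination forces $\Phi=2n$. By Lemma~\ref{lem:feasible_triangle}, any unconfirmed edge could still conceal a vertex, so a correct algorithm must confirm all $n$ edges, which yields $F=0$ and $|P|=n$, i.e.\ $\Phi=2n$. Since the total number of vertices is unknown to the algorithm, the final gap cannot be closed by a single probe: probing the last unconfirmed edge forces the adversary to reveal the $n$-th vertex (splitting one gap into two), after which two further probes are required to confirm the two resulting edges. Combining the pieces, the number of probes is at least $2+(2n-4)=2n-2$ when $0<\omega<\pi/2$, and at least $2+(2n-5)=2n-3$ when $\omega=\pi/2$ (any initialization consuming a third probe only raises $\Phi$ without improving this bound), which is exactly the claim.
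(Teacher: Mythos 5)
Your proposal is correct and follows essentially the same route as the paper's own proof: the identical potential $\Phi = 2|P| - F$, the rectangle-versus-pentagon initialization (with the same interior-angle-sum reason why $\omega=\pi/2$ forces a pentagon and hence costs the adversary one unit), the at-most-$+1$ increment per probe enforced by the adversary returning an already-known vertex on the second arm, the appeal to Lemma~\ref{lem:feasible_triangle} for consistency, and the final accounting that probing the last unconfirmed edge splits one gap into two so that three probes are needed to finish — which is exactly how the paper reaches $2+(2n-4)=2n-2$ and $2+(2n-5)=2n-3$ (the paper merely splits the middle count into its stage-2 and stage-3 portions, $(n-5)+(n-2)+3$, rather than stating it as a single $2n-4$). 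The termination argument you give ($F=0$ and $|P|=n$ are forced, else the adversary exhibits a second consistent polygon) is the content of the paper's Claims 1 and 2.
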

\begin{proof}
To prove this, consider the quantity 
\[ \Phi = 2|P| - F . 
\]
At the start of the strategy $P$ is empty and $F = 0$, so $\Phi = 0$. 

\vspace{0.5em}
Let us first analyze the case when $0 < \omega < \pi/2$. At the initial step any algorithm will spend at least 2 probes to reveal 4 vertices. After initialization $|P| = 4$ and $F = 4$, thus $\Phi = 4$.

Every iteration of the while-loop in the second step of the protocol either does not change the values of $|P|$ and $F$, or increases $|P|$ by $1$ and increases $F$ by $1$. Therefore, $\Phi$ increases by at most one on every probe made by the algorithm. This while-loop makes at least $n-5$ iterations (forcing the algorithm to spend at least $n-5$ probes) resulting in $\Phi = n-1$.

Every iteration of the while-loop in the third step of the protocol does not change the value of $|P|$, but decreases the value of $F$ by at most $1$. Therefore, $\Phi$ increases by at most one on every probe made by the algorithm. This while-loop makes at least $n-2$ iterations (forcing the algorithm to spend at least $n-2$ probes) resulting in $\Phi = 2n-3$.

In the final step of the protocol, any algorithm will spend at least $3$ probes to complete the boundary of the polygon. We prove below (see Claims $1$ and $2$) that, after termination we have $|P| = n$, $F = 0$ and $\Phi = 2n$.
  
To conclude, the number of probes made by any algorithm is at least $2 + (n-5) + (n-2) + 3 = 2n - 2$.  
  
\vspace{0.5em}
For the case when $\omega = \pi/2$, the initialization step can have two outcomes: 
\begin{enumerate}
\item At most two probes were spent, that resulted in $|P| = 5$, $F = 5$ and thus $\Phi = 5$.
\item At most three probes were spent, that resulted in $|P| = 6$, $F = 6$ and thus $\Phi = 6$.
\end{enumerate}
The first while-loop makes at least $n-6$ iterations for the first outcome (forcing the algorithm to spend at least $n-6$ probes), and at least $n-7$ iterations for the second outcome (forcing the algorithm to spend at least $n-7$ probes) resulting in $\Phi = n-1$. The analysis for the third and fourth step of the protocol are identical to the one given above. So, the number of probes made by any algorithm is at least $2 + (n-6) + (n-2) + 3 = 2n - 3$ (for the first outcome of the initialization), and $3 + (n-7) + (n-2) + 3 = 2n - 3$ (for the second outcome).
\qed
\end{proof} 

\begin{figure}[h]
\centerline{\resizebox{!}{5.5cm}{\includegraphics{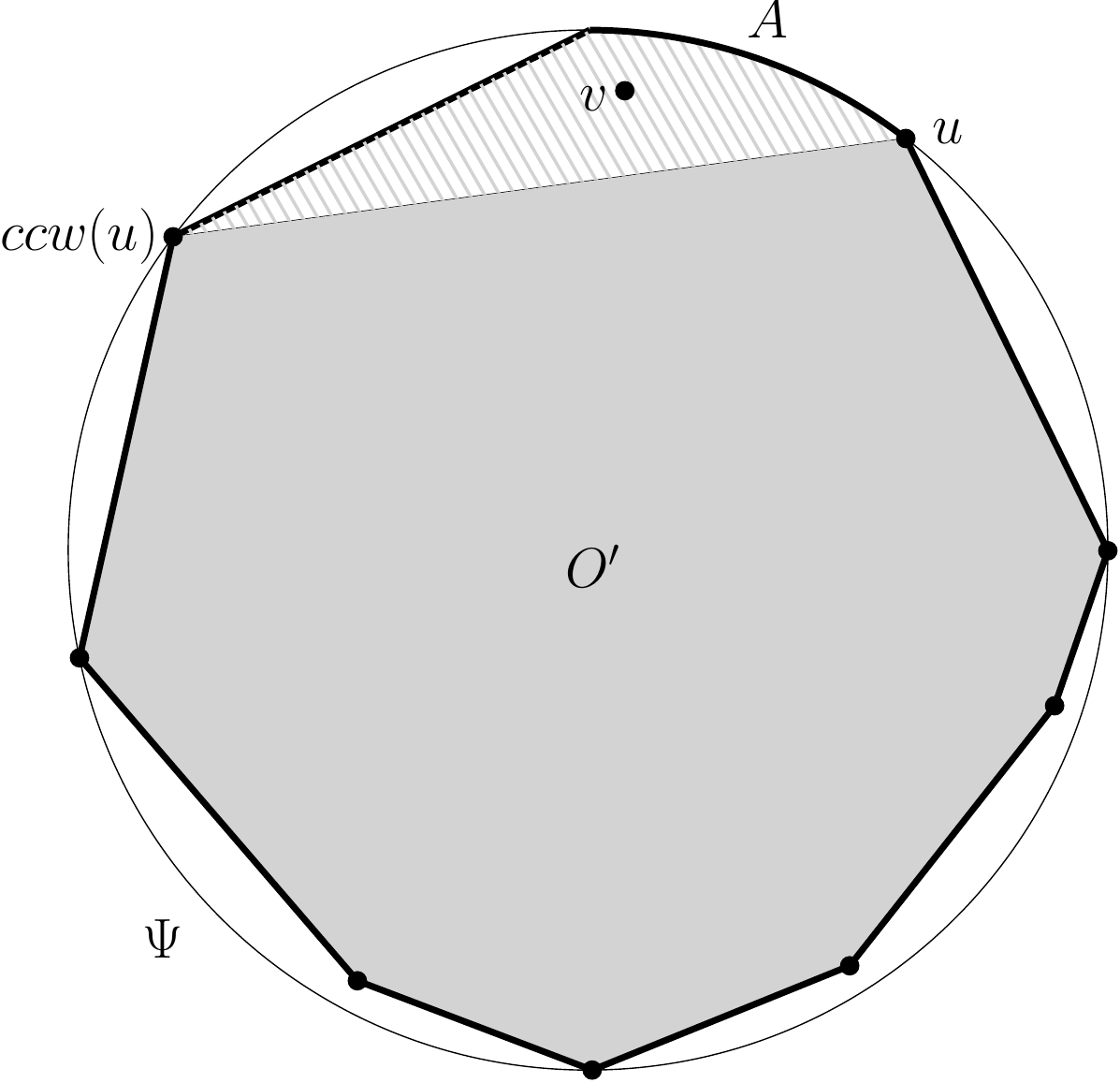}}}
\caption{The figure shows an area (tiled with grey lines) not explored by the algorithm. The curve $A$ is shown in bold; the object $\obj'$, returned by the algorithm, is depicted in gray.}
\vspace{-1.0em}
\label{fig:No_B_claim1}
\end{figure}  

\textit{Claim 1:} When the algorithm terminates: $F = 0$.
\begin{proof}
Assume to the contrary that the algorithm returned an object $\obj'$ (where $\obj'$ is a convex hull of $P$), but $F \geq 1$. When $F$ is not empty, there exists a vertex $u \in P$ such that $u$ and $ccw(u)$ are not on the same line segment of $A$. Since $A$ is convex, the line segment $\overline{u, ccw(u)}$ is completely inside $A$ and its intersection with $A$ is the set of points $\{u, ccw(u) \}$. Refer to Figure~\ref{fig:No_B_claim1}. There exists a vertex $v$ in the interior of the region, consisting of all points in $A$, that are to the right of $\overline{u, ccw(u)}$. The adversary can claim that the object $\obj$ is the convex hull of $P \bigcup \{v \}$. If the adversary gives the convex hull of $P \bigcup \{v \}$ to the algorithm as an object, the algorithm will make exactly the same set of probes as it did for the convex hull of $P$, and therefore will give the same output $O'$ for different objects $\obj$ and $\obj'$. Since $\obj \neq \obj'$ the algorithm is not correct.
\qed
\end{proof}

\vspace{0.5em} 

\textit{Claim 2:} When the algorithm terminates: $|P| = n$.
\begin{proof}
Assume that the algorithm returned an object $\obj'$ such that $|P| < n$ (notice, that $P$ cannot be bigger than $n$). According to the invariant, there exists a convex $n$-gon $\obj$ such that each point of $P$ is a vertex of $\obj$ and the algorithm would make the same sequence of probes on $\obj$ as it has made on $\obj'$. It is clear that $\obj \neq \obj'$ and thus the algorithm is not correct. 
\qed
\end{proof}

\section{Polygons with Narrow Vertices} 
\label{sec:Polygons_with_B-vertices}

This section addresses the reconstruction of convex polygons that have narrow vertices. We showed in Section~\ref{sec:preliminaries} that the number of narrow vertices cannot exceed $3$ (refer to Observation~\ref{observ:N_B_general}) and if $\omega <\pi/3$ then the number of narrow vertices can be at most $2$ (refer to Observation~\ref{observ:N_B_small_omega}).

\vspace{0.5em}
Recall that the result of a valid probe is the coordinates of the apex of the probe. Notice that if the ray in the direction of the probe enters $\obj$ via narrow vertex, then the apex of this probe belongs to the $\omega$-cloud $\Omega$ of the polygon. If the polygon does not have any narrow vertices then the $\omega$-cloud and the boundary of the polygon are disjoint.

\vspace{0.5em}
Let us first investigate the problem of reconstructing polygons that have exactly one narrow vertex.

\subsection{Polygons with Exactly One Narrow Vertex}
\label{subsec:Polygons_with_one_B-vertex}

Let $v_B$ be the only narrow vertex of $\obj$. Let $a$ and $b$ be two points and let $L_{ab}$ be the line through $a$ and $b$. Let $\overrightarrow{L_{ab}}$ be the direction of a probe $W$, directed from $a$ to $b$. Notice, that if we shoot a probe along $\overrightarrow{L_{v_Ba}}$, then the $\omega$-probe will stop with the apex $q$ touching the vertex $v_B$. The arms of the probe may or may not touch the boundary of the polygon (see Figures~\ref{fig:points_of_contact_3} and~\ref{fig:one_B-case}).

\begin{figure}[h]
    \begin{center}
        \subfigure[$v_B$ is the only narrow vertex of $\obj$.]{
            \label{fig:one_B-case}
            \includegraphics[width=0.4\textwidth]{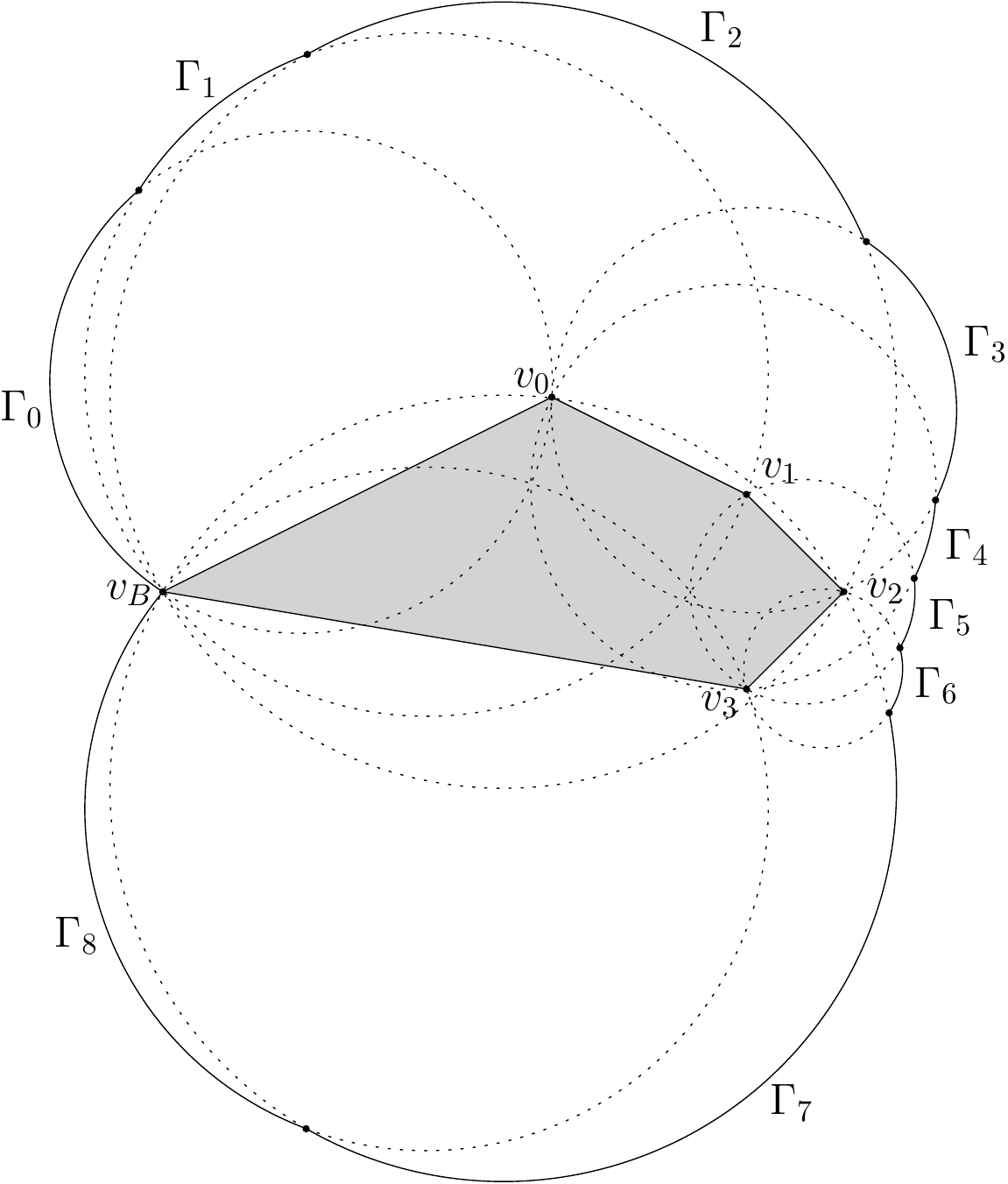}
        }%
        \hspace{0.1cm}%
        \subfigure[$\obj$ has two narrow vertices. The triangle $\triangle (v_{B_1}, v_{B_2}, u)$ may or may not contain vertices of $\obj$, other that $v_{B_1}$ and $v_{B_2}$.]{
            \label{fig:two_B-case}
            \includegraphics[width=0.45\textwidth]{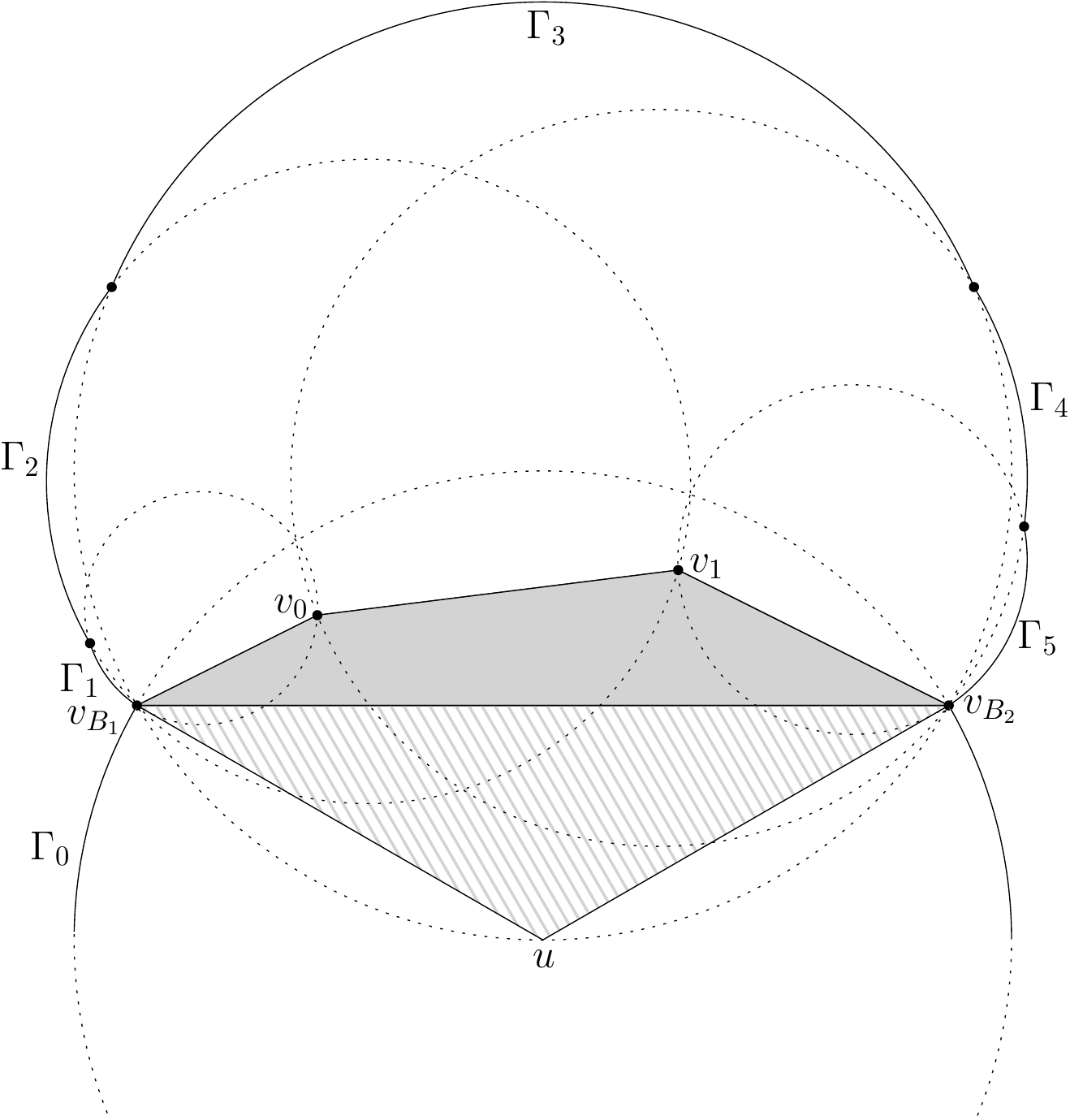}
        }% 
    \end{center}
    \vspace{-1.5em}
    \caption{Case studies of narrow vertices of $\obj$. In both examples, $\omega = 60^\circ$.}
\label{fig:B-cases}
\end{figure}

\vspace{0.5em}
Look at the example of Figure~\ref{fig:one_B-case}. Assume that we have found the vertices $v_B$ and $v_0$, but the vertices $v_1$, $v_2$ and $v_3$ are still unknown. Suppose that our next step is to check whether $\overline{v_Bv_0}$ is an edge. Since we have no control over the rotation of the arms, the probe in the direction $\overrightarrow{L_{v_Bv_0}}$ might give us no new information other than to confirm that $v_B$ is a narrow vertex (in case we did not know already). The solution is to probe in the opposite direction, namely $\overrightarrow{L_{v_0v_B}}$. This will confirm the edge $\overline{v_Bv_0}$ (and reveal $v_2$). 

\vspace{0.5em}
Consider again the example of Figure~\ref{fig:one_B-case}, but now assume that we have found $v_B$ and $v_1$, but the vertices $v_0$, $v_2$ and $v_3$ are still unknown. Our goal is to deny the existence of the edge $\overline{v_Bv_1}$ by revealing the vertex $v_0$. But we cannot rely on the probing in the direction $\overrightarrow {L_{v_Bv_1}}$. In the worst case it will not return $v_0$. As in the previous example, the probe in the opposite direction will work. The probe along $\overrightarrow {L_{v_1v_B}}$ returns two new vertices, namely $v_0$ and $v_2$.

\vspace{0.5em}
We may conclude, that having one narrow vertex in the polygon is not a problem. Once a narrow vertex is identified, all other elements of $\obj$ can be found by using algorithm from Section~\ref{subsec:algorithm} while probing in the directions other then those whose first point of intersection with $\obj$ is the narrow vertex. Note that we may discover a narrow vertex by probing in different directions. But only the probe along $\overrightarrow {L_{v_Ba}}$ (where $a$ is any point interior to $\obj$) can identify $v_B$ as a narrow vertex. Hence, we may spend one additional probe per narrow vertex to reconstruct the polygon $\obj$. The algorithm that reconstructs convex polygon with exactly one narrow vertex is given below in the Section~\ref{subsec:Algorithm_B_vertex}. The algorithm reconstructs $\obj$ by using not more than $2n-1$ probes. This upper bound is proved in Theorem~\ref{theo:B-case_sum-up_Upper_Bound}.

\subsection{Polygons with Two or Three Narrow Vertices}
\label{subsec:Polygons_with_many_B-vertices} 

The problem arises when a polygon has more than one narrow vertex. Refer to Figure~\ref{fig:two_B-case}. Assume we have found two narrow vertices $v_{B_1}$ and $v_{B_2}$, but all other elements of $\obj$ are still unknown. Our previous strategy to shoot a probe along $\overrightarrow {L_{v_{B_1}v_{B_2}}}$ or in the opposite direction (along $\overrightarrow {L_{v_{B_2}v_{B_1}}}$) does not work.

If there is a vertex $v$ of $\obj$ that is between $v_{B_1}$ and $v_{B_2}$ in a clockwise order around the boundary of $\obj$, then there are at least two arcs $\Gamma_v$ and $\Gamma_v'$ (supported by $v$) that belong to the part of the $\Omega$-cloud enclosed between two pivots $v_{B_1}$ and $v_{B_2}$ in a clockwise order around the cloud. (Recall from Definition~\ref{def:omega_cloud} that a pivot point is an intersection point of two consecutive circular arks of the $\Omega$-cloud). If we probe in the direction toward the arc $\Gamma_v$ then we will find $v$. We know that $\Gamma_v$ is on the cloud between $v_{B_1}$ and $\Gamma_v'$ (clockwise), but we do not know how big it is. In other words, we do not know how close to $v_{B_1}$ we should aim.

\vspace{0.5em}
Let $\varepsilon > 0$ be a fixed real number. We suppose, that for every pair of narrow vertices $v_{B_1}$ and $v_{B_2}$ of $\obj$, if $v_{B_1} v_{B_2}$ is not an edge that belongs to $\obj_{v_{B_1} v_{B_2}}$\footnote{For the rest of the paper, $\obj_{v_1 v_2}$ denotes the part of the boundary of $\obj$ in a counter-clockwise order from $v_1$ to $v_2$ including $v_1$ and $v_2$.}, then there exist a vertex $v \in \obj_{v_{B_1} v_{B_2}}$ such that $\angle (v_{B_2}, v, v_{B_1})  \leq \pi-\varepsilon$. According to this hypothesis, one of the angles $\angle (v, v_{B_1}, v_{B_2})$ or $\angle (v_{B_1}, v_{B_2}, v)$ is at least $\varepsilon / 2$. Assume, without loss of generality, that $\angle (v_{B_1}, v_{B_2}, v) \geq \varepsilon / 2$. Thus, if we shoot a probe along the line $\overrightarrow {L_{x v_{B_2}}}$ that makes a positive angle $\leq \varepsilon / 2$ with the line $L_{v_{B_1}v_{B_2}}$, we will hit some vertex (not necessarily $v$) that belongs to $\obj_{v_{B_1} v_{B_2}}$. Knowing that vertex allows us to discover all other vertices on the boundary $\obj_{v_{B_1} v_{B_2}}$ by using the same strategy as before when we had only one narrow vertex.

Since we do not know which one of the two angles $\angle (v_{B_1}, v_{B_2}, v)$ or $\angle (v, v_{B_1}, v_{B_2}) $ is bigger than $\varepsilon / 2$, an additional probe (along the line $\overrightarrow {L_{x v_{B_1}}}$ that makes a negative angle $\leq \varepsilon / 2$ with the line $L_{v_{B_2}v_{B_1}}$) may be required to identify $v$. If both probes return no new information, then we confirm an edge from $v_{B_2}$ to $v_{B_1}$ in the counter clockwise  order around the boundary of $\obj$.

\vspace{0.5em}
But what if $\varepsilon$ is unknown? We may try to guess $\varepsilon$. We may choose an initial (very small) value for $\varepsilon$ and probe along the line $\overrightarrow {L_{x v_{B_2}}}$ that makes a positive angle $\leq \varepsilon / 2$ with the line $L_{v_{B_1}v_{B_2}}$. And also probe along the line $\overrightarrow {L_{x' v_{B_1}}}$ that makes a negative angle $\leq \varepsilon / 2$ with the line $L_{v_{B_2}v_{B_1}}$. If no vertex is discovered, we reduce the value of $\varepsilon$ by a factor of two and we repeat the above procedure until we hit some vertex that belongs to $\obj_{v_{B_1} v_{B_2}}$. This strategy, unfortunately, can take an infinite number of probes in the worst case if $\overline{v_{B_1}v_{B_2}}$ is an edge of $\obj$ that belongs to $\obj_{v_{B_1} v_{B_2}}$. Look at the example of Figure~\ref{fig:two_B-case} and assume that the only vertices of $\obj$ inside the triangle $\triangle (v_{B_1}, v_{B_2}, u)$ are $v_{B_1}$ and $v_{B_2}$ and we probe along a line $\overrightarrow {L_{x v_{B_2}}}$ that approaches the edge $\overline{v_{B_1}v_{B_2}}$ from below. 

\vspace{0.5em}
In other words, if $\varepsilon$ is unknown, the edge $\overline{v_{B_1}v_{B_2}}$ cannot be verified and thus $\obj$ cannot be reconstructed precisely.

Now we are ready to present an algorithm for the reconstruction of a convex polygon whose internal angles may be smaller than $\omega$.

\subsection{Algorithm}
\label{subsec:Algorithm_B_vertex}

In this section we show an algorithm that reconstructs general convex polygons. For polygons with two or three narrow vertices the algorithm requires a real number $\varepsilon > 0$ as a part of an input. Otherwise the polygon may not be reconstructed precisely. Upper bound on the number of probes spent by the algorithm is given in Theorem~\ref{theo:B-case_sum-up_Upper_Bound}.

\vspace{0.5em}
\rule{0.88\textwidth}{1pt}

\noindent 
{\bf Invariant:} 
\begin{enumerate} 
\item $Q$ is a convex polygon. 
\item Each vertex of $Q$ is a vertex of $\Ob$ (and, therefore, $Q$ is contained in $\Ob$). 
\item Each vertex $u$ of $Q$ stores a Boolean variable $\flag(u)$; if $\flag(u) = \TRUE$, then $u$ and its counter-clockwise successor in $Q$ form an edge of $\Ob$. 
\item Each vertex $u$ of $Q$ stores a Boolean variable $Bvertex(u)$; if $Bvertex(u) = \TRUE$, then $u$ is a narrow vertex in $\obj$. 
\item $F$ is a variable with value $F = | \{ u \in Q : \flag(u) = \FALSE \} |$. 
\end{enumerate}  

\vspace{0.5em} 

\noindent 
{\bf Algorithm:} 

To initialize the algorithm, choose an arbitrary line 
$\overrightarrow{L}$ that contains $p$. Let $(q,H_1,H_2,p_1,p_2)$ be the 
outcome of the probe along $\overrightarrow{L}$. 
\begin{itemize}
\item If $p_1 \neq p_2$, initialize $Q$ as the polygon consisting of the vertices $p_1$ and $p_2$, set $\flag(p_1) = \FALSE$, $\flag(p_2) = \FALSE$, $Bvertex(p_1) = \FALSE$, $Bvertex(p_2) = \FALSE$ and $F = 2$. 
\item Otherwise, if $p_1 = p_2 = q$, initialize $Q$ as the 
      polygon consisting of one vertex $q$, set $\flag(q) = \FALSE$, $Bvertex(q) = \TRUE$, and $F = 1$. Let $(q',H'_1,H'_2,p'_1,p'_2)$ be the outcome of the probe along the bisector of the first probe in the direction from inside the first wedge toward $q$. 
      \begin{itemize}
      \item If $p'_1 \neq p'_2$ then: 
            \begin{itemize}
            \item if $p'_1 \neq q$, replace $Q$ by the convex hull of $Q \cup \{p'_1\}$ and set $\flag(p'_1) = \FALSE$, $Bvertex(p'_1) = \FALSE$ and $F = F + 1$.
             \item if $p'_2 \neq q$, replace $Q$ by the convex hull of $Q \cup \{p'_2\}$ and set $\flag(p'_2) = \FALSE$, $Bvertex(p'_2) = \FALSE$ and $F = F + 1$.    
            \end{itemize}
      \item Otherwise, if $p'_1 = p'_2 = q'$, do the following: if $q = q'$ set $F=0$. Otherwise, replace $Q$ by the convex hull of $Q \cup \{q'\}$ and set $\flag(q') = \FALSE$, $Bvertex(q') = \TRUE$ and $F = F + 1$.
      \end{itemize} 
\end{itemize}

\vspace{0.5em} 

\noindent 
While $F \neq 0$, do the following:    
\begin{enumerate} 
\item 
      \begin{enumerate}
      \item If there is a vertex $u$ in $Q$ such that $\flag(u)=Bvertex(u)=\FALSE$, let $u$ be such a vertex (give priority to vertices whose internal angles in $Q$ are bigger than $\omega$), and let $v$ be the counter-clockwise successor of $u$ in $Q$.      
      \item Else, if there is a vertex $u$ in $Q$ for which $Bvertex(u)=\FALSE$ and whose clockwise successor $v$ in $Q$ satisfy the following properties: $\flag(v)=\FALSE$ and $Bvertex(v)=\TRUE$, then let $u$ be such a vertex, and let $v$ be its clockwise successor in $Q$.
      \item Else, do the following:
      
      If we are not given an $\varepsilon$, stop the algorithm and return $Q$. Otherwise, take a pair of arbitrary adjacent vertices $u$ and $v$ in $Q$ for which $\flag(u)=\FALSE$, $Bvertex(u) = Bvertex(v) =\TRUE$ and $v$ is the counter-clockwise successor of $u$ in $Q$. 
      Let $\overrightarrow{L_{uv}}$ be the line through $u$ and $v$ that is directed from $u$ to $v$. Let $\overrightarrow{L}$ be the line $\overrightarrow{L_{uv}}$ that is rotated to the left by $\varepsilon /2$ degrees around vertex $v$. Let $(q,H_1,H_2,p_1,p_2)$ be the outcome of the probe along $\overrightarrow{L}$.
      \begin{enumerate}
      \item If $p_1 \neq v$, then insert $p_1$ into $Q$ between $u$ and $v$, and set $\flag(p_1) = \FALSE$, $Bvertex(p_1) = \FALSE$ and $F=F+1$. 
      \item If $p_1 = v$, then let $\overrightarrow{L'}$ be the line $\overrightarrow{L_{vu}}$ that is rotated to the right $\varepsilon /2$ degrees around vertex $u$. Let $(q',H'_1,H'_2,p'_1,p'_2)$ be the outcome of the probe along $\overrightarrow{L'}$.
            \begin{itemize}
            \item If $p'_2 \neq u$ then insert $p'_2$ into $Q$ between $u$ and $v$, and set $\flag(p'_2) = \FALSE$, $Bvertex(p'_2) = \FALSE$ and $F=F+1$.                         
            \item If $p'_2 = u$ then set $\flag(u) = \TRUE$ and $F=F-1$.
            \end{itemize}
      \end{enumerate}  
      Skip steps $2$, $3$ and $4$ and continue to the next iteration of the while-loop.
      \end{enumerate}
\item Let $\overrightarrow{L}$ be the line through $u$ and $v$ that is directed from $u$ to $v$. Let $(q,H_1,H_2,p_1,p_2)$ be the outcome of the probe along $\overrightarrow{L}$. 
      
      If $\flag(u)=\TRUE$, rename $p_1$ into $p_2$ and $p_2$ into $p_1$.
      
      If $p_1 = p_2 = q = u$, then set $Bvertex(u) = \TRUE$ and go to step 1.
\item \begin{enumerate} 
      \item If $p_1 = u$, then 
      \begin{itemize}
      \item if $\flag(u) = \TRUE$, then set $\flag(v) = \TRUE$ and $F=F-1$.
      \item otherwise, set $\flag(u) = \TRUE$ and $F=F-1$.
      \end{itemize} 
      \item If $p_1 \neq u$, insert $p_1$ into $Q$ between $u$ and $v$, and set $\flag(p_1) = \FALSE$, $Bvertex(p_1) = \FALSE$ and $F=F+1$. 
      \end{enumerate}  
\item \begin{enumerate} 
      \item If $p_2 = u$, set $\flag(v) = \TRUE$ and $F=F-1$. 
      \item If $p_2 \neq u$ and $p_2$ is not a vertex of $Q$, then replace $Q$ by the convex hull of $Q \cup \{p_2\}$, and set $\flag(p_2) = \FALSE$, $Bvertex(p_2) = \FALSE$ and $F=F+1$. 
      \end{enumerate}
\end{enumerate} 

\rule{0.88\textwidth}{1pt} 
\vspace{1em} 

\noindent 
It follows from the algorithm that the invariant is correctly maintained. We show that the algorithm terminates, by proving the following theorem.

\begin{theorem}[Upper Bound]
\label{theo:B-case_sum-up_Upper_Bound}
Given $\omega$-wedge, with $0 < \omega \leq \pi/2$, a convex polygon $\obj$, and $\varepsilon > 0$ (such that for every pair of narrow vertices $v_{B_1}$ and $v_{B_2}$ of $\obj$, if $v_{B_1} v_{B_2}$ is not an edge that belongs to $\obj_{v_{B_1} v_{B_2}}$, then there exists a vertex $v \in \obj_{v_{B_1} v_{B_2}}$ such that $\angle (v_{B_2}, v, v_{B_1})  \leq \pi-\varepsilon$). Let $N_B$ be the number of narrow vertices of $\obj$ and let: 

\begin{equation}
\label{equation P_B}
P_B = 
\begin{cases}
-1, &  N_B = 0, \\
-1, &  N_B = 1, \\
2, &  N_B = 2, \\
3, &  N_B = 3.
\end{cases}
\end{equation}

The above algorithm reconstructs $\obj$ using at most $2n - 1 + N_B + P_B$ $\omega$-probes. 
\end{theorem}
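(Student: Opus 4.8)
The plan is to reuse the amortized, potential-function argument of Theorems~\ref{theo:model_1_sum-up_theorem_upper_bound} and~\ref{theo:model_1_sum-up_theorem_upper_bound_90}, carrying the extra bookkeeping forced by narrow vertices. As before I take the potential $\Phi = 2|Q| - F$. By the invariant, every vertex of $Q$ is a vertex of $\obj$, so $|Q| \le n$ and $F \ge 0$; hence $\Phi \le 2n$ at all times, and when the algorithm halts we have $F = 0$ and, by the invariant, $Q = \obj$, so $\Phi = 2n$. The engine of the proof is to show that $\Phi$ never decreases and to classify each probe by how much it raises $\Phi$, so that the total number of probes is governed by the total rise $2n - \Phi_{\mathrm{init}}$ together with a bounded number of probes that raise $\Phi$ by $0$.

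Concretely, I would partition the probes into three kinds. \emph{Productive} probes are those taken in Steps~2--4 after a selection in Step~1(a) or~1(b); tracing the sub-cases of Steps~3 and~4 shows each such probe raises $\Phi$ by at least $1$ (by confirming an edge, which lowers $F$, or by inserting a new vertex through $p_1$ or $p_2$, which raises $2|Q|$ faster than $F$), and the first productive probe raises $\Phi$ by $2$ as argued for the no-narrow case in Theorem~\ref{theo:model_1_sum-up_theorem_upper_bound} --- this is exactly the $-1$ saving of that case. \emph{Detection} probes are the ones returning $p_1 = p_2 = q = u$ (the apex landing on a vertex in Step~2, or the opening probe of the initialization); these leave $\Phi$ unchanged but set $Bvertex(u) = \TRUE$. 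Since a vertex is flagged narrow at most once and, by Observation~\ref{observ:N_B_general}, $N_B \le 3$, there are at most $N_B$ detection probes, contributing the $+N_B$ term. \emph{Auxiliary} probes are the second probe of Step~1(c); that iteration still raises $\Phi$ by exactly one (it finds a vertex of $\obj_{v_{B_1} v_{B_2}}$ or confirms the edge $\overline{v_{B_1}v_{B_2}}$), but it spends one probe more than its rise, so each carries an overhead of one that I charge to the pair of adjacent narrow vertices it separates.

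The heart of the argument, and where the piecewise $P_B$ is born, is bounding the number of Step~1(c) invocations. Step~1(c) fires only once Steps~1(a) and~1(b) are both inapplicable, i.e.\ every $\FALSE$-flagged vertex of $Q$ is narrow; the chosen pair $u,v$ are then two narrow vertices consecutive on $Q$. Each cyclic gap of $\obj$ bounded by two consecutive narrow vertices triggers Step~1(c) at most once: a single invocation either exposes a vertex strictly between them --- after which that gap is handled entirely by Steps~1(a)/1(b) and never returns to~1(c) --- or, when the gap is the single edge $\overline{v_{B_1}v_{B_2}} \in \obj_{v_{B_1} v_{B_2}}$, confirms that edge; the guarantee that the first probe of Step~1(c) reaches inside the gap is precisely the role of the hypothesis on $\varepsilon$. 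Since a convex polygon with $N_B$ narrow vertices has exactly $N_B$ such cyclic gaps, there are at most $N_B$ auxiliary probes when $N_B \ge 2$, and none when $N_B \le 1$ (Step~1(c) cannot fire without two narrow vertices). I would then do the four-way split on $N_B$: for $N_B \le 1$ no auxiliary probe occurs and the first-probe $+2$ saving survives, giving $2n - 2 + N_B$ and hence $P_B = -1$; for $N_B \in \{2,3\}$ the $N_B$ auxiliary probes appear and the more constrained initialization forfeits the $+2$ saving, so the count becomes $2n - 1 + 2N_B$, i.e.\ $P_B = N_B$. Summing the three kinds of probes against the total potential rise $2n - \Phi_{\mathrm{init}}$ yields the claimed $2n - 1 + N_B + P_B$, and the finiteness of all three counts proves termination.

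The step I expect to be the main obstacle is the exact charging in the last paragraph: verifying that Step~1(c) is charged \emph{at most once} per narrow-vertex gap (so the auxiliary probes total $N_B$ rather than a larger number arising from repeated near-edge probing), and pinning down, case by case, whether the initial $+2$ productive step is available. This requires a careful reading of the termination condition of Step~1(c) together with the $\varepsilon$-hypothesis, and a convexity argument showing that once an interior vertex of a gap is revealed the gap never again satisfies the Step~1(c) precondition.
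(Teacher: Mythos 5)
Your proposal is correct and takes essentially the same route as the paper's own proof: an amortized potential argument over the stated algorithm, where the paper uses $\Phi = 2|Q| + N_B' - F$ (so that identifying a narrow vertex is itself a rise-one event, with ceiling $2n + N_B$), while you keep $\Phi = 2|Q| - F$ with ceiling $2n$ and instead budget the at most $N_B$ zero-rise detection probes and at most $N_B$ extra Step 1(c) probes separately --- an equivalent bookkeeping. The two delicate points you flag at the end (that each gap between consecutive narrow vertices triggers Step 1(c) at most once, and the case-by-case availability of the initial $+2$ rise that yields $P_B = -1$ for $N_B \leq 1$) are precisely the points the paper's case analysis settles.
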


\begin{proof}
Let $n$ denote the number of vertices of $\Ob$. Consider the quantity 
\[ \Phi = 2|Q| + N_B' - F , 
\]
where $|Q|$ denotes the number of vertices of $Q$, $N_B'$ denotes the number of narrow vertices of $Q$, $0 \leq N_B' \leq 3$ (refer to Observation~\ref{observ:N_B_general}). The quantity $\Phi$ represents the number of units of information we need to obtain in order to successfully reconstruct $\obj$. These units of information are vertices, edges and narrow vertices of $\obj$ that were identified as such by the algorithm. At the initial step of the algorithm, after the first probe, we have $\Phi = 2$: either the probe returned two vertices (in which case $|Q| = 2$, $N_B' = 0$, $F = 2$), or it returned one narrow vertex ($|Q| = 1$, $N_B' = 1$, $F = 1$). In the first case, the algorithm proceeds to the while-loop step, while in the second case, one more probe is required. This additional probe reveals either two new vertices, or one new vertex, or one new narrow vertex. 

\vspace{0.5em}
During every iteration of the while-loop, the algorithm reveals at least one vertex or one edge of $\obj$ or marks an already known vertex as a narrow vertex. Therefore, in every iteration, the value of $\Phi$ increases by at least one. Notice also that if $\epsilon$ is known, then the step $(1.c)$ of the algorithm increases the value of $\Phi$ by exactly one, but may use up to two probes. Fortunately, this step of the algorithm cannot be executed more than three times (refer to Observation~\ref{observ:N_B_general}). 

Since at any moment we have $\Phi \leq 2n + N_B$ (where $N_B$ stands for the number of narrow vertices of $\obj$, $0 \leq N_B \leq 3$ (refer to Observation~\ref{observ:N_B_general})), it follows that the while-loop makes at most $2n - 2 + N_B$ iterations and thus, the algorithm terminates.  

\vspace{0.5em} 
After the very first probe $\Phi = 2$. If this probe revealed two vertices, then the first iteration of the while-loop increases the value of $\Phi$ either by two (in which case two new vertices or one new vertex and one new edge were revealed), or by one (in which case an already known vertex was marked as a narrow vertex, making the number of non-revealed narrow vertices of $\obj$ to decrease by one). Thus, the algorithm terminates after at most $2n - 3 + N_B$ additional iterations, making a total of at most $2n - 1 + N_B + P_B$ probes. Notice, that if $\obj$ has no narrow vertices, then the first iteration of the while-loop always increases $\Phi$ by two, and thus the while-loop makes at most $2n - 4$ additional iterations, that results in at most $2n - 2$ probes. This is why $P_B$ is set to $-1$ for polygons without narrow vertices. Similarly, if the polygon has exactly one narrow vertex, that was discovered by the first probe and marked as a narrow vertex during the first iteration of the while-loop, then the second iteration of the loop reveals two new pieces of information. Those two new pieces can be either two new vertices or one new vertex and one new edge. Thus, the while-loop makes at most $2n-4$ additional iterations. Hence, the algorithm in this case makes at most $2n-1$ probes to reconstruct the polygon with one narrow vertex. 

\vspace{0.5em} 
If the first probe of the algorithm revealed a narrow vertex, then an additional probe is required during the initialization step. This second probe may reveal:
\begin{itemize}
\item two new vertices, making $\Phi = 4$ ($|Q| = 3$, $N_B' = 1$ and $F = 3$). 
\item one new vertex, making $\Phi = 3$ ($|Q| = 2$, $N_B' = 1$ and $F = 2$).
\item one new narrow vertex, making $\Phi = 4$ ($|Q| = 2$, $N_B' = 2$ and $F = 2$).
\end{itemize} 

Notice, that in either case, the number of non-revealed narrow vertices of $\obj$ decreases by $N_B'$. 

It then requires at most $2n - 3 + N_B$ iterations of the while-loop to reconstruct $\obj$, which results in a total of at most $2n - 1 + N_B + P_B$ probes. Notice that if $\obj$ has exactly one narrow vertex that was discovered and marked by the first probe, then one of the following probes will discover two new pieces of information. This probe can be either the second probe of the initialization step or the first iteration of the loop. Two new pieces of information can be either two new vertices or one new vertex and one new edge. Thus, the algorithm makes at most $2n-4$ additional probes, and reconstructs the polygon with one narrow vertex by using at most $2n-1$ probes. 
 
After termination, we have $F=0$. It then follows from the invariant 
that, at that moment, $Q$ is equal to $\Ob$. Finally, the number of 
probes made by the algorithm is at most $2n - 1 + N_B + P_B$. 
\qed
\end{proof}

\subsection{Lower Bound for General Polygons}
\label{subsec:Lower_Bound_General_Polygons} 

We show a lower bound on the number of probes needed to reconstruct a convex $n$-gon by presenting an adversarial argument. In our proof, the adversary sets the number of narrow vertices in $\obj$ before the first probe of the algorithm and cannot change its decision during the run of the algorithm. Thus, the adversary have to induce the predetermined number of narrow vertices in $\obj$ despite the probing strategy of the algorithm. The algorithm, in its turn, may correctly guess the number of narrow vertices and use its best strategy for that specific number of narrow vertices. Thus it may reconstruct $\obj$ by using less probes, than required for any other algorithm that does not know about constraints on $\obj$. 

\vspace{0.5em}
We prove that, if the algorithm knows in advance the exact number of narrow vertices $0 \leq N_B \leq 3$ in $\obj$, the lower bound on the number of probes required to reconstruct $\obj$ is 
\begin{list}{•}{•}
\item $2n - 2$ for $N_B = 0$,
\item $2n - 1$ for $N_B = 1$,
\item $2n + 2$ for $N_B = 2$ and  $N_B = 3$.
\end{list}

The lower bound for zero or one narrow vertex is tight with the upper bound (refer to Theorems~\ref{theo:model_1_sum-up_theorem_upper_bound},~\ref{theo:model_1_sum-up_theorem_upper_bound_90},~\ref{theo:model_1_sum-up_theorem_lower_bound} and~\ref{theo:B-case_sum-up_Upper_Bound}). The lower bound for $N_B = 2$ (respectively $N_B = 3$) is smaller by one probe (respectively, by 3 probes) than the upper bound given in Theorem~\ref{theo:B-case_sum-up_Upper_Bound}. We will show later in this section how to improve our algorithm for the case when the number of narrow vertices of $\obj$ is provided to the algorithm. In this case, our lower bounds are tight. Alternatively, if the adversary is allowed to change its decision about the number of narrow vertices in $\obj$ during the run of the algorithm, it can make the lower bound equal to the upper bound for any number of narrow vertices.

\vspace{0.5em}
For now, assume that the adversary sets the number of narrow vertices in $\obj$ before the first probe of the algorithm and does not change its decision during the execution of the algorithm. The adversary's strategy that forces at least $2n - 2$ $\omega$-probes from any algorithm which reconstructs $\obj$ with no narrow vertices is given in Subsection~\ref{subsec:lower_bound_model1}. Here we describe strategies for polygons with one, two or three narrow vertices. We assume, as before, that the algorithm is deterministic and does not repeat the same probe during same reconstruction session. 

\subsubsection{Lower Bound for Polygons with One Narrow Vertex}
\label{subsubsec:Lower_Bound_NB=1}

${}$

\vspace{0.5 em}

We begin with the case where $N_B = 1$. We will prove a lower bound of $2n - 1$ $\omega$-probes. The adversary defines a circle $\Psi$, a point $p$ as a center of $\Psi$, chooses $n \geq 3$ and sets $N_B = 1$. The adversary maintains a closed convex curve $A$, that is the boundary of an intersection of all $\omega$-probes made by the algorithm and the circle $\Psi$. Initially, $A$ is the circle $\Psi$. The vertices of $\obj$ revealed during the execution of the algorithm remain fixed, defining sections of the curve that cannot be changed. At the end of a query session $A$ is a convex polygon $\obj$.

\vspace{0.5em}
The query made by the algorithm consists of a direction $\overrightarrow{L}$ for the probe. Assume that the probe is valid and $\overrightarrow{L}$  contains $p$. (We treat non-valid probes similarly to the strategy described in Subsection~\ref{subsec:lower_bound_model1}). The adversary stops the probe at any moment after the apex $q$ of the probe enters the interior of $\Psi$, but before it reaches $p$. Let $(q,H_1,H_2,p_1,p_2)$ be the outcome of the probe along $\overrightarrow{L}$ returned by the adversary, such that $H_1$ (respectively $H_2$) makes a negative (respectively positive) angle of $\omega / 2$ with $\overrightarrow{L}$; $q = p_1 = p_2$. In other words, on the first valid probe of the algorithm, the adversary reveals the narrow vertex $v_B = q$ of $\obj$. Refer to Figure~\ref{fig:adversary_NB=1}.

\begin{figure}[h]
\centerline{\resizebox{!}{6.3cm}{\includegraphics{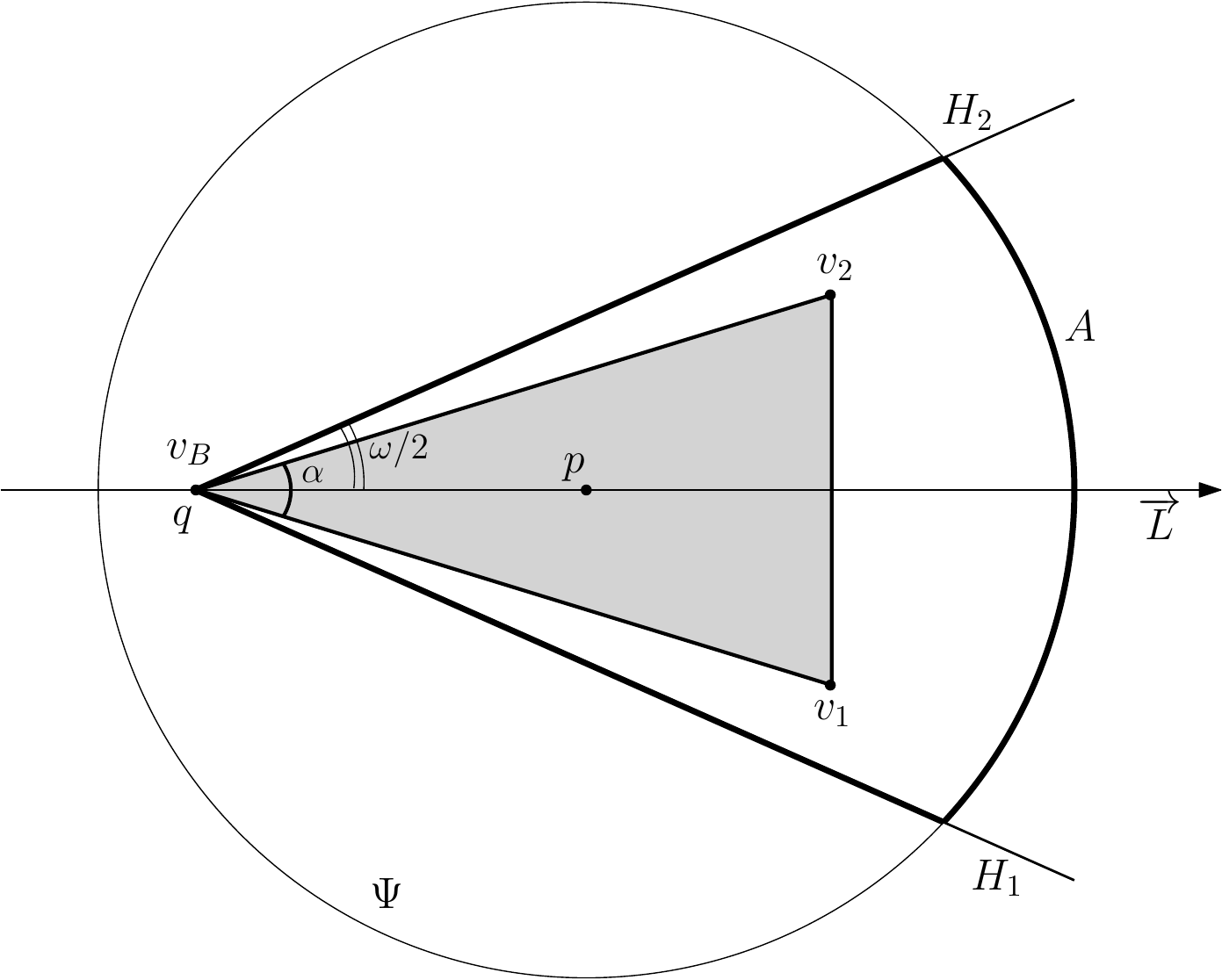}}}
\caption{Initialization step for $0 < \omega < \pi/2$. Adversary induces exactly one narrow vertex in $\obj$ by revealing it at the first valid probe of the algorithm. The closed curve $A$ is highlighted in bold. $\obj$ is fully contained within $A$.}
\vspace{-1.0em}
\label{fig:adversary_NB=1}
\end{figure}

To ensure that $v_B$ is the only narrow vertex of $\obj$, the adversary creates a \textit{frame} polygon (that satisfies $N_B = 1$) enclosed in $\obj$ and reveals it during the first probes of the algorithm. Depending on the value of $\omega$ and probing strategy of the algorithm, three types of frames are possible: isosceles triangle with apex at $v_B$ (for $0 < \omega <\pi/2$), quadrilateral or pentagon (for $\omega = \pi/2$).

\vspace{0.5em}
Let us start with the case when $0 < \omega <\pi/2$. The adversary creates an isosceles triangle with apex at $v_B$ and base vertices $v_1$ and $v_2$. The triangle $\triangle (v_B, v_1, v_2)$ contains $p$; $v_1$ and $v_2$ are inside $A$, but not on the boundary; and the angle $\alpha$ at the apex of the triangle is such that $0 < \alpha < \omega$ and $\alpha < \pi - 2 \omega$. Refer to Figure~\ref{fig:adversary_NB=1}. Notice that $v_1$ and $v_2$ are not fixed yet. The adversary can move them, depending on the the second query of the algorithm, as long as all the above constraints on the triangle hold.

The second valid probe of the algorithm is answered with respect to the $\omega$-cloud of the triangle, unless the direction of the probe coincides with one of its edges. In this case, the adversary slightly moves either $v_1$ or $v_2$, such that this probe will \textbf{not} confirm an edge. The adversary creates an $\omega$-cloud for the corrected triangle and answers the query according to the new $\omega$-cloud. This probe may reveal up to two new vertices of $\obj$. If the current probe does not reveal $v_1$ or $v_2$, the adversary fixes both $v_1$ and $v_2$, but returns only information revealed by the probe to the algorithms. This concludes the initialization step for the case $0 < \omega <\pi/2$.

\vspace{0.5em}
Let us explore the initialization step when $\omega = \pi/2$. The adversary stops the first valid probe at any moment after the apex $q$ of the probe enters the interior of $\Psi$, but before it reaches $p$. Let $(q,H_1,H_2,p_1,p_2)$ be the outcome of the probe along $\overrightarrow{L}$ returned by the adversary, such that $H_1$ (respectively $H_2$) makes a negative (respectively positive) angle of $\omega / 2$ with $\overrightarrow{L}$; $p_1 \neq q$; $p_2 \neq q$; $p \in \overline{p_1 p_2}$. In other words, on the first valid probe of the algorithm, the adversary reveals two vertices of $\obj$. Refer to Figure~\ref{fig:adversary_NB=1_pi}.

\begin{figure}[h]
    \begin{center}
        \subfigure[The adversary reveals two vertices $v_1$ and $v_2$ of $\obj$ on the first valid probe of the algorithm.]{
            \label{fig:adversary_NB=1_pi}
            \includegraphics[width=0.395\textwidth]{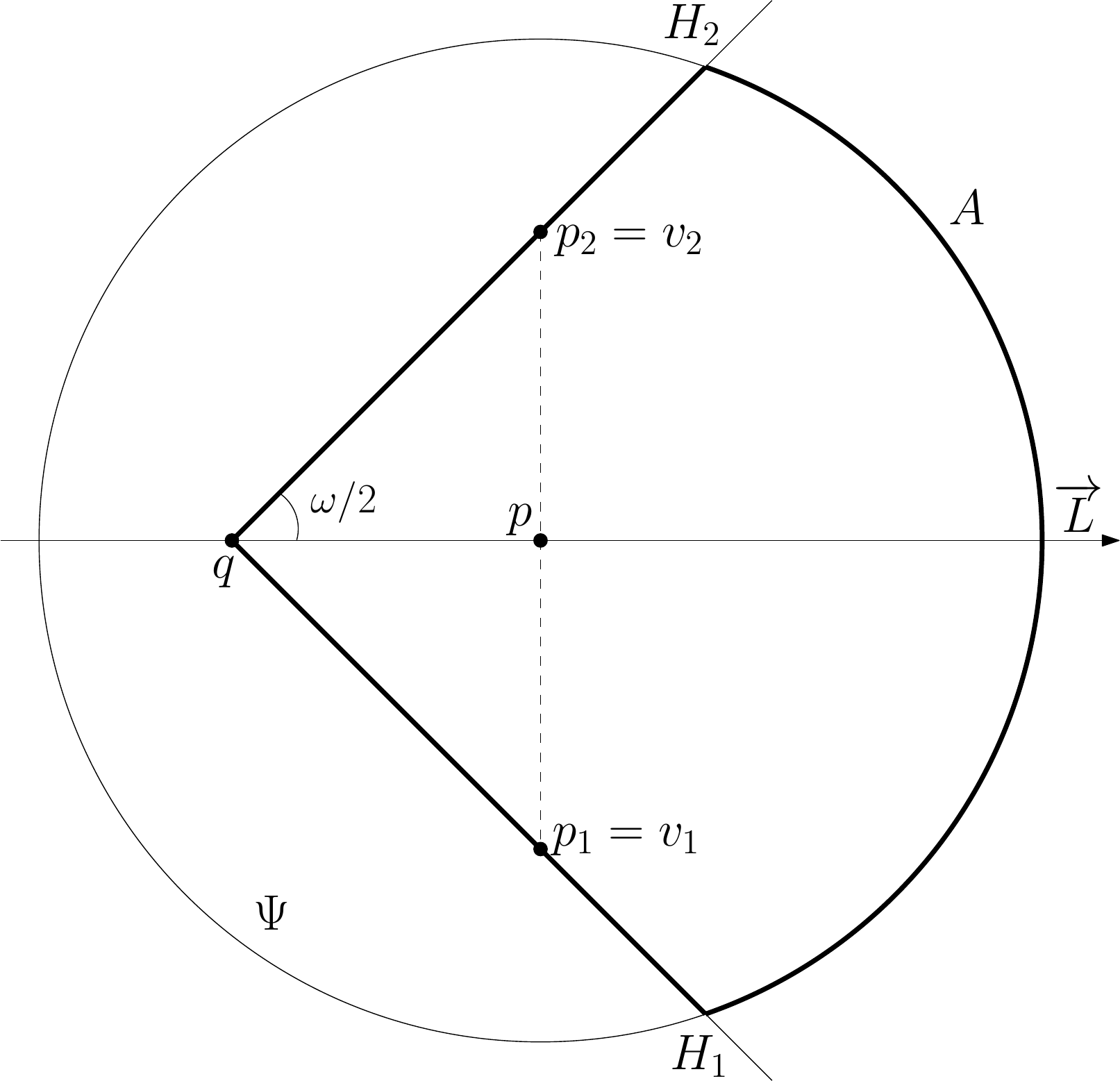}
        }%
        \hspace{0.5cm}%
        \subfigure[The second valid probe of the algorithms is shot along the same line as its first valid probe but in the opposite direction.]{
            \label{fig:adversary_NB=1_pi_square}
            \includegraphics[width=0.4\textwidth]{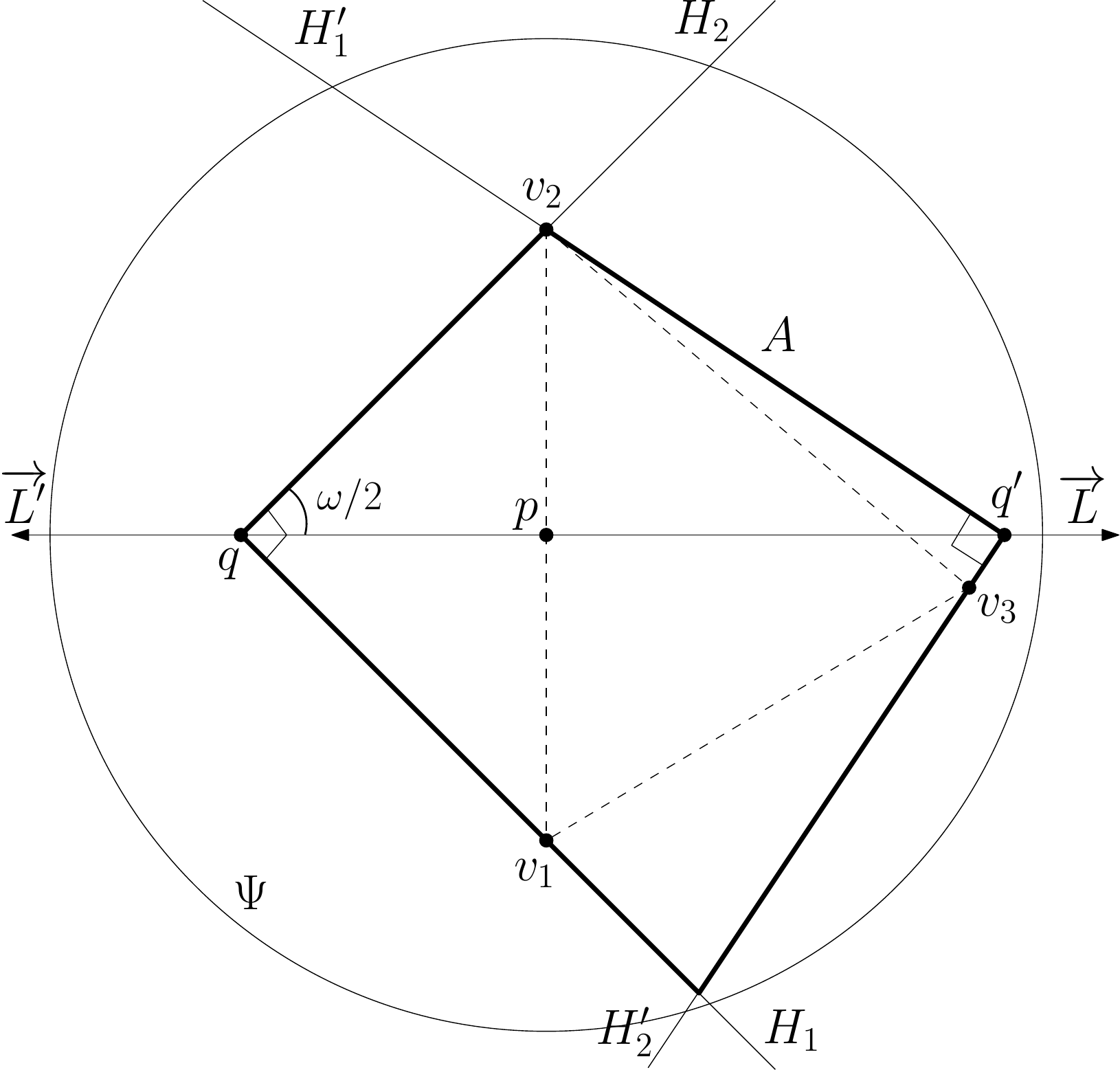}
        }%
    \end{center}
    \vspace{-1.5em}
    \caption{Initialization step for $\omega = \pi/2$. The closed curve $A$ is highlighted in bold.}
    \vspace{-1.0em}
\end{figure}

Let $\overrightarrow{L'}$ be the direction of the second valid probe of the algorithm. Three cases are possible:
\begin{enumerate}
\item If $v_1, v_2 \in \overrightarrow{L'}$, then assume $\overrightarrow{L'}$ intersects $v_1$ before it intersects $v_2$ (the other case is symmetrical). Let $(q',H_1',H_2',p_1',p_2')$ be the outcome of the probe along $\overrightarrow{L'}$ returned by the adversary, such that $q' = p_1' = p_2' = v_1$, $H_1'$ (respectively $H_2'$) makes a negative (respectively positive) angle of $\omega / 2$ with $\overrightarrow{L'}$.  In other words, the adversary declares the already revealed vertex $v_1$ as a narrow vertex of $\obj$ and positions the arms of the second probe such that they are not coincide with the line segment $\overline{v_1v_2}$.

\item Else, if $\overrightarrow{L'} \neq \overleftarrow{L}$, return the already discovered vertices $v_1$ and $v_2$ of $\obj$. In other words, the adversary constructs the $\omega$-cloud of the only vertices $v_1$ and $v_2$ and returns the outcome of the probe along the given $\overrightarrow{L'}$ on this cloud. Assume, that the apex of the second probe is closer to $v_2$ than to $v_1$ (the other case is symmetrical). The adversary marks $v_1$ as a narrow vertex (the algorithm may correctly guess this fact). 

\item If $\overrightarrow{L'} = \overleftarrow{L}$ ($\overrightarrow{L'}$ coincides with $\overrightarrow{L}$ but has the opposite direction), then the adversary reveals the new vertex $v_3$ of $\obj$ and returns the following outcome: $(q',H_1',H_2',p_1',p_2')$, where $p_1' = v_2$; $p_2' = v_3$, $q'$ and $v_3$ are close to each other, but $q' \neq v_3$; $|qp| < |pq'|$ and $|pq'|$ is smaller then the radius of $\Psi$. Refer to Figure~\ref{fig:adversary_NB=1_pi_square}. 
\end{enumerate}

\vspace{0.5em}
Let $\overrightarrow{L''}$ be the direction on the third valid probe of the algorithm. The answer of the adversary depends also on the direction of the previous probe $\overrightarrow{L'}$.

For the first two cases of $\overrightarrow{L'}$, the adversary defines two vertices: $v_3$ and $v_4$ (interior to $A$ but not on the boundary), such that three of the internal angles of the resulting quadrilateral $\{ v_1, v_3, v_2, v_4\}$ are strictly bigger than $\pi/2$ and the fourth internal angle, rooted at the narrow vertex, is strictly smaller than  $\pi/2$. Moreover, none of the edges of the quadrilateral is contained within $\overrightarrow{L''}$. Note that $v_3$ and $v_4$ become fixed after the adversary returns the outcome of the third valid probe of the algorithm (even if that probe does not discover one or both of them). Note also that there are infinitely many possibilities for the adversary to choose $v_3$ and $v_4$ to satisfy the above conditions. Refer to Lemma~\ref{lem:feasible_triangle}.

\vspace{0.5em}
Let us explore the third case of $\overrightarrow{L'}$. Note that in this case, the algorithm already discovered three vertices of $\obj$: $v_1$, $v_2$ and $v_3$. Refer to Figure~\ref{fig:adversary_NB=1_pi_square}. 
\begin{itemize}
\item If $\overrightarrow{L''}$ coincides with one of the edges of the triangle $\triangle (v_1, v_2, v_3)$, the adversary declares the vertex of entrance $\overrightarrow{L''}$ into $A$ as a narrow vertex and places the arms such that they maximize $A$ and do not touch other vertices of $\obj$. For example, if $\overrightarrow{L''}$ contains $\overline{v_1 v_2}$ and is directed from $v_1$ to $v_2$, the outcome of the probe $(q'',H_1'',H_2'',p_1'',p_2'')$ is as follows: $q'' = p_1'' = p_2'' = v_1$; $v_2, v_3 \notin H_1''$ and $v_2, v_3 \notin H_2''$.
\item Otherwise, the adversary creates an $\omega$-cloud of the triangle $\triangle (v_1, v_2, v_3)$ and answers the query of the algorithm according to this cloud. In other words, the probe does not reveal any new information about $\obj$, but may have changed the shape of $A$. 
\end{itemize}

To finalize the initialization step and to create a frame for $\obj$ with exactly one narrow vertex, the adversary creates one new vertex $v_4$ in the interior of $A$, such that the convex quadrilateral $\{v_1, v_2, v_3, v_4\}$ has exactly one acute angle. It is always possible in the case when $\overrightarrow{L''}$ contains edge of the triangle $\triangle (v_1, v_2, v_3)$. Unfortunately, it is not always true for general $\overrightarrow{L''}$. The adversary may require one more vertex $v_5$ (in the interior of $A$), such that the resulting convex pentagon $\{v_1, v_2, v_3, v_4, v_5\}$ has exactly one acute angle (and $v_4$ and $v_5$ are not adjacent on the boundary of the pentagon). During the following probes the algorithm may not discover $v_4$ or $v_5$. As soon as discovered part of $\obj$ satisfies frame constraints (frame is a polygon with exactly one acute angle), the not discovered helping vertices can be discarded. Thus, during the second stage of his strategy, the adversary places a new vertex in a way that the internal angles of $\obj$ are maximized (as long as the internal angle corresponding to the narrow vertex is smaller than $\omega$). Notice that the vertices $v_4$ and $v_5$ can be discovered by one probe of the algorithm. The vertex at the acute angle of the pentagon or quadrilateral frame is marked as a narrow vertex (whether or not the algorithm discovered this fact).  
This concludes the initialization stage for $N_B = 1$ and $0 < \omega \leq \pi/2$.

\vspace{0.5em}
The rest of the adversary's strategy is similar to the one described in the  Subsection~\ref{subsec:lower_bound_model1}. The only difference is that when a new vertex is inserted between two existing ones (one of which is a narrow vertex), the angle at the narrow vertex should stay smaller than $\omega$. These precautions are taken since the narrow vertex may have been discovered by the  algorithm but not necessarily identified as a narrow vertex, and thus the internal angle at the corresponding vertex of $A$ (where the boundary of $A$ touches the narrow vertex) can be bigger than $\omega$.

\vspace{0.5em}
We analyze the number of probes spent by the algorithm in the following theorem.

\begin{theorem}[Lower Bound $N_B=1$]
\label{theo:B-case_sum-up_Lower_Bound_NB=1}
Let an $\omega$-wedge with $0 < \omega \leq \pi/2$ be given. For every algorithm, there exists a convex polygon $\obj$ with one narrow vertex, such that $2n - 1$ $\omega$-probes are required to determine its shape. 
\end{theorem}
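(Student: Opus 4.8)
The plan is to mirror the potential-function adversary argument of Theorem~\ref{theo:model_1_sum-up_theorem_lower_bound}, adjusting the potential to account for the single narrow vertex exactly as the upper-bound proof of Theorem~\ref{theo:B-case_sum-up_Upper_Bound} does. I would track
\[
\Phi = 2|P| + N_B' - F,
\]
where $N_B'$ is the number of vertices the adversary has already declared to be narrow. At the start $P$ is empty and $F=0$, so $\Phi=0$, and at termination I will argue $\Phi = 2n+1$; since each probe after initialization raises $\Phi$ by at most one while initialization supplies a two-unit head start, the count $2n-1$ follows.

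First I would settle the initialization. For $0<\omega<\pi/2$ the first valid probe returns $q=p_1=p_2=v_B$, so $|P|=1$, $N_B'=1$, $F=1$ and $\Phi=2$ (using the paper's convention $F=1$ for a lone narrow vertex); the second probe, answered against the $\omega$-cloud of the isosceles triangle frame, reveals $v_1,v_2$, giving $|P|=3$, $F=3$ and $\Phi=4$. Thus two probes raise $\Phi$ to $4$. For $\omega=\pi/2$ I would carry out the case split on the second probe direction $\overrightarrow{L'}$ (and, in the third subcase, on $\overrightarrow{L''}$) exactly as in the text, using the quadrilateral or pentagon frame, and verify in each branch that the number of probes spent together with the resulting $\Phi$ lead to the same final total; this parallels the two-outcome bookkeeping already used for $N_B=0$.

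Next I would establish the per-probe bound: every probe issued after initialization increases $\Phi$ by at most one. This is where the adversary's three later stages do the work. As long as an edge $\overline{uv}$ of the current hull is unconfirmed, Lemma~\ref{lem:feasible_triangle} guarantees a nonempty feasible triangle over $\overline{uv}$, so the adversary can insert a single new vertex $v'$ (raising $|P|$ and $F$ each by one, hence $\Phi$ by one) and deny the edge; a probe along a direction that forces confirmation verifies exactly one edge (dropping $F$ by one, raising $\Phi$ by one); and every other probe, answered with already-fixed vertices, leaves $\Phi$ unchanged. The point specific to $N_B=1$ is that the adversary inserts each new vertex so as to \emph{maximize} the internal angles away from $v_B$ while keeping the angle at $v_B$ below $\omega$ (the fifth invariant condition, adapted to ``exactly one internal angle $\le\omega$''); this keeps a consistent $n$-gon with a single narrow vertex alive throughout, so $N_B'$ never changes after initialization.

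Finally I would close with the termination analysis, where the analogues of Claim~1 and Claim~2 carry over almost verbatim: if $F\ge 1$ at termination the adversary hides one further vertex in the feasible triangle of an unconfirmed edge and the algorithm's output differs from a consistent object, so $F=0$; and if $|P|<n$ the invariant still exhibits a consistent $n$-gon on which the same probes were issued, so $|P|=n$. With $N_B'=1$ fixed from initialization this gives $\Phi=2n+1$, and summing $2+(n-4)+(n-2)+3=2n-1$ (with the matching arithmetic for the $\omega=\pi/2$ branches) yields the bound. I expect the main obstacle to be the $\omega=\pi/2$ initialization: there the first probe reveals two ordinary vertices rather than $v_B$, so the narrow vertex must be declared on a later probe, and I must check across all subcases of $\overrightarrow{L'}$ and $\overrightarrow{L''}$ that the frame can always be completed to a single-narrow-vertex polygon without ever letting a probe gain two units outside the initialization, so that the two-unit head start is exactly what separates $2n-1$ from the $2n-2$ of the $N_B=0$ case.
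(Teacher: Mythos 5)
Your plan follows the paper's own route: the same staged adversary with the isosceles-triangle frame for $0<\omega<\pi/2$ and the quadrilateral/pentagon frame for $\omega=\pi/2$, the same use of Lemma~\ref{lem:feasible_triangle} and of the two termination claims, and the same potential-function bookkeeping that the paper already employs in Theorems~\ref{theo:model_1_sum-up_theorem_lower_bound} and~\ref{theo:B-case_sum-up_Upper_Bound}. For $0<\omega<\pi/2$ your accounting ($\Phi=2$ after the first probe, $\Phi=4$ after the second, at most $+1$ per probe thereafter, and $2+(n-4)+(n-2)+3=2n-1$) is exactly the paper's count, so that half of your proposal is correct as stated.

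The gap is in the $\omega=\pi/2$ branch, and it is precisely the invariant you single out for verification: that no probe outside the initialization gains two units of $\Phi$. Under the adversary you are mirroring, this is simply false. Once the helping frame vertices $v_4$ (and possibly $v_5$) are fixed --- and the adversary must fix them right after the third valid probe so that a consistent polygon with exactly one narrow vertex always exists --- they become ordinary, committed points of $\obj$ that are unknown to the algorithm, and a single later probe can touch both of them with its two arms; the paper says this explicitly (``the vertices $v_4$ and $v_5$ can be discovered by one probe of the algorithm''). The paper's proof does not exclude such a $+2$ probe; it absorbs it: ``at most one of the consecutive probes may return 2 new pieces of information,'' and that one $+2$ probe is paid for by the earlier probe on which the algorithm learned nothing (the probe answered with already-known vertices, on which the adversary silently declares the narrow vertex --- a $+1$ to your $N_B'$ term carrying zero algorithm-visible information). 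In your potential language, the invariant you can actually enforce is that the \emph{total} excess of $\Phi$-gain over one unit per probe, summed over the whole run, is at most $2$; demanding that every post-initialization probe gain at most one is too strong, and insisting on it would leave the $\omega=\pi/2$ case unprovable by your scheme. With the weaker, correct budget the arithmetic closes in every branch: $\Phi_{\mathrm{final}}=2n+1$ and excess at most $2$ give at least $2n-1$ probes, matching the paper.
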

\begin{proof}
If $0 < \omega < \pi/2$, the initialization step results in $3$ vertices of $\obj$ (one of which is a narrow vertex) and requires at least $2$ probes. The rest of the polygon is reconstructed in at lest $2n-3$ probes. (Refer to the Subsection~\ref{subsec:lower_bound_model1}.) Thus, in total, $2n-1$ probes are required to reconstruct $\obj$.

If $\omega = \pi/2$, the algorithm spends at least one probe to discover the first $2$ vertices. Depending on the strategy of the algorithm, the initialization step terminates after at least one more probe, that returns no new vertices or edges; or after at least $2$ additional probes that result in one new vertex. Notice that because of the constraints induced by the frame, at most one of the consecutive probes may return 2 new pieces of information. The rest of $\obj$ requires at least one probe per edge and at least one probe per vertex. (Refer to the Subsection~\ref{subsec:lower_bound_model1}). This results in at least $2n-1$ probes and proves the following theorem.
\qed
\end{proof}

\subsubsection{Lower Bound for Polygons with Two or Three Narrow Vertices}
\label{subsubsec:Lower_Bound_NB>=2}

${}$

\vspace{0.5em}

For $N_B \geq 2$ the adversary should provide the algorithm with a small real number $\varepsilon > 0$. For every pair of narrow vertices of $\obj$: $v_{B_1}$ and $v_{B_2}$, the following is true: if $\overline{v_{B_1} v_{B_2}}$ is not an edge that belongs to $\obj_{v_{B_1} v_{B_2}}$, then there exists a vertex $v \in \obj_{v_{B_1} v_{B_2}}$ such that $\angle (v_{B_2}, v, v_{B_1} ) \leq \pi-\varepsilon$. This ensures that one of the angles $\angle (v, v_{B_1}, v_{B_2})$ or $\angle (v_{B_1}, v_{B_2}, v)$ is at least $\varepsilon / 2$. Without $\varepsilon$, the polygon may not be reconstructed. 

\vspace{0.5em}
Let us describe the adversary's strategy for convex polygons with exactly two narrow vertices. The adversary chooses $\varepsilon = \omega / 10$, defines a circle $\Psi$, a point $p$ as a center of $\Psi$ and reveals all this information to the algorithm. It then chooses $n \geq 3$, sets $N_B = 2$, initialises $A$ to be $\Psi$ and sets $\flag = \FALSE$.

\begin{figure}
    \begin{center}
        \subfigure[The apex of the first valid probe of the algorithms coincides with $p$.]{
            \label{fig:adversary_NB=2}
            \includegraphics[width=0.4\textwidth]{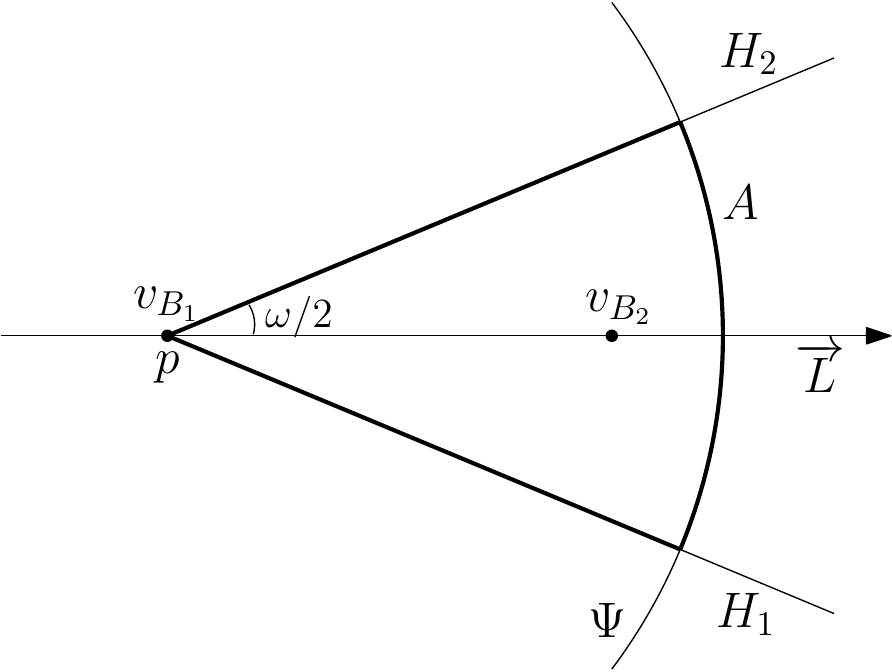}
        }%
        \hspace{0.3cm}%
        \subfigure[Possible result of the second valid probe. For the sake of demonstration the size of the angle $\angle (q', p, v_{B_2})$ is exaggerated. The vertex $v$ is the possible new position for $v_{B_2}$.]{
            \label{fig:adversary_NB=2_secondprobe}
            \includegraphics[width=0.4\textwidth]{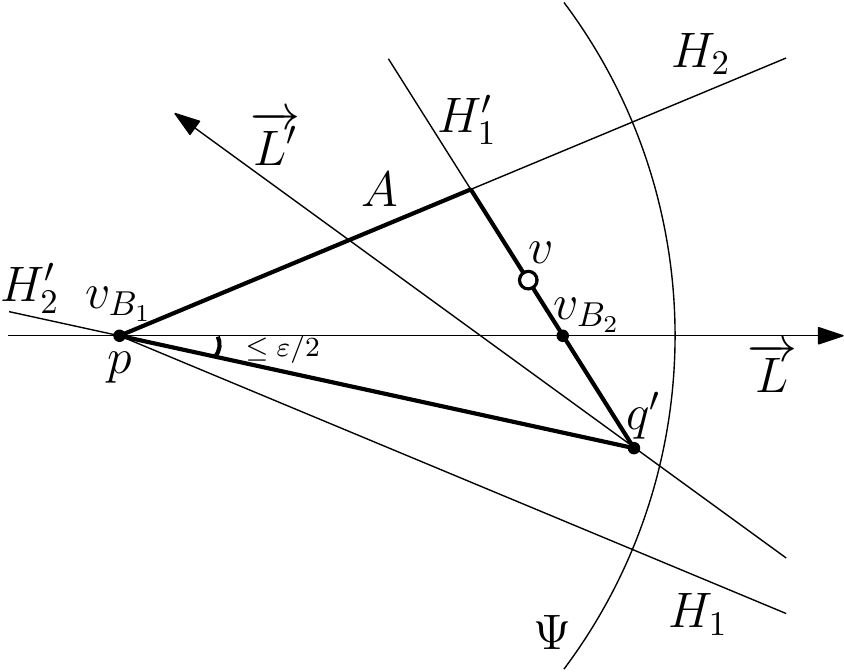}
        }%
    \end{center}
    \vspace{-1.5em}
    \caption{Initialization step for $N_B = 2$. The closed curve $A$ is highlighted in bold.}
    \vspace{-1.0em}
\end{figure}

Let $\overrightarrow{L}$ be the direction of the first valid probe of the algorithm such that $p \in \overrightarrow{L}$. Otherwise, we let this probe miss the object and perform updates to our working space and $A$ as described in Subsection~\ref{subsec:lower_bound_model1}. The adversary stops the probe when its apex coincides with $p$. Let $(q,H_1,H_2,p_1,p_2)$ be the outcome of the probe along $\overrightarrow{L}$ returned by the adversary, such that: $q = p = v_{B1} = p_1 = p_2$, $H_1$ (respectively $H_2$) makes a negative (respectively positive) angle of $\omega / 2$ with $\overrightarrow{L}$. Refer to Figure~\ref{fig:adversary_NB=2}. Notice that the shape of the polygon with two narrow vertices can be skinny. To ensure $p \in \obj$, we let $v_{B1} = p$. 

\vspace{0.5em}
We create another narrow vertex $v_{B2}$. Initially, the adversary places it such that $v_{B2} \in \overrightarrow{L}$ and $v_{B2} \in A$, but not on the boundary of $A$, which also ensures that $v_{B1} \neq v_{B2}$. Let $\overrightarrow{L'}$ be the direction of the second valid probe made by the algorithm, such that $\overrightarrow{L'}$ intersects the line segment $\overline{v_{B_1} v_{B_2}}$. Let $(q',H_1',H_2',p_1',p_2')$ be the outcome of the probe on the polygon $\{v_{B1}, v_{B2}\}$. If $\overrightarrow{L'}$ does not intersect $\overline{v_{B_1} v_{B_2}}$, then the adversary does the following: declares the probe as not valid (meaning, the probe missed the object), shrinks its workspace such that the updated $A$ does not have common points with $\overrightarrow{L'}$, and creates $v_{B2}$ as described earlier.
 
\begin{itemize}
\item If $q' = v_{B_2}$ or $q' = v_{B_1}$, we rotate the $\omega$-wedge around its apex such that $\overrightarrow{L}$ bisects the wedge and $\obj$ stays inside the wedge. The adversary fixes $v_{B_2}$ on its current position, marks it as a narrow vertex and returns the result of the second probe to the algorithm.
\item Otherwise, if none of the arms $H_1'$ or $H_2'$ creates an angle of $\varepsilon/2$ or smaller with $\overrightarrow{L}$, then fix $v_{B_2}$ on its current position and return the output of the second probe to the algorithm. The adversary marks $v_{B_2}$ as a narrow vertex (notice that in this case, the algorithm did not discover this fact, but may correctly guess it).
\item Otherwise, if $H_1'$ or $H_2'$ creates an angle of $\varepsilon/2$ or smaller with $\overrightarrow{L}$, then two cases are possible:
\begin{enumerate}
\item If the angle between $\overrightarrow{L}$ and $\overrightarrow{L'}$ is bigger than $\pi/2$ (refer to Figure~\ref{fig:adversary_NB=2_secondprobe}) the adversary moves $v_{B_2}$ along the supporting arm of the second probe away from $q'$, such that $\angle (q', p, v_{B_2}) > \varepsilon/2$ and the angles between the line segment $\overline{v_{B_1} v_{B_2}}$ with $H_1$ and $H_2$ are bigger than $\varepsilon/2$. (Consider, for example, Figure~\ref{fig:adversary_NB=2_secondprobe}. The possible new position for $v_{B_2}$ is marked by $v$.)

\item If the angle between $\overrightarrow{L}$ and $\overrightarrow{L'}$ is smaller than $\pi/2$, we consider the two possible values of $\flag$: 
\begin{itemize}
\item If $\flag = \FALSE$, the adversary declares the current probe as non valid, shrinks its working space, updates $A$, creates $v_{B_2}$ and sets $\flag = \TRUE$. 
\item If $\flag = \TRUE$, the adversary fixes $v_{B_2}$ on its current position and returns the output of the current probe to the algorithm.
\end{itemize}

\end{enumerate}
\end{itemize}

This concludes the initialization step. 

\vspace{0.5em}
The other stages of the adversary's strategy are similar to those used for $N_B = 0$ (refer to Subsection~\ref{subsec:lower_bound_model1}) and $N_B = 1$ (refer to the beginning of the current subsection). The only difference is due to the use of $\varepsilon$.

\vspace{0.5em}
As in the previous strategies, during the second stage, we answer the probes with already revealed vertices, unless the direction of the probe contains two known vertices and the probe does not enter $\obj$ through a narrow vertex. In this case, we reveal one new vertex provided that the number of already revealed vertices is smaller than $n-2$. Similarly to the strategy used for $N_B=1$, all newly created vertices in this stage should not change the number of narrow vertices of $Q \subseteq \obj$. 

The adversary proceeds to the third stage of its strategy when the algorithm knows $n-1$ vertices of $\obj$. During this stage the adversary confirms edges (without revealing any new vertices), until $n-2$ edges are confirmed. Every probe of the algorithm is answered with already revealed vertices of $\obj$, unless the direction of the probe $\overrightarrow{L}$ contains two vertices adjacent on $A$ such that $\overrightarrow{L}$ does not enter $\obj$ through a narrow vertex. In this case, an edge between the pair of vertices is confirmed.

During the final, fourth stage of the strategy, the last unknown vertex is revealed to the algorithm and its two adjacent edges are confirmed.

Without knowing $\varepsilon$, the algorithm may not be able to reconstruct $\obj$ with $N_B \geq 2$ precisely. Let $\overrightarrow{L_{v_{B_1}v_{B_2}}}$ be the line through a pair of narrow vertices $v_{B_1}$ and $v_{B_2}$, directed from $v_{B_1}$ to $v_{B_2}$. Suppose that there are no revealed vertices of $\obj$ to the left of $\overrightarrow{L_{v_{B_1}v_{B_2}}}$. Let $F$ be a feasible region of $Q \subseteq \obj$ to the left of $\overrightarrow{L_{v_{B_1}v_{B_2}}}$.
Assume that the probe of the algorithm stops when $v_{B_1} \in H_1$ and $v_{B_2} \in H_2$ (notice, that the apex of the probe is to the left of $\overrightarrow{L_{v_{B_1}v_{B_2}}}$). If the angle between $H_2$ (respectively, $H_1$) and $\overrightarrow{L_{v_{B_1}v_{B_2}}}$ is smaller than $\varepsilon/2$ and the angle at $v_{B_1}$ (respectively, $v_{B_2}$) of $F$ is smaller than $\varepsilon/2$, then:

\begin{enumerate}
\item If we are in the second stage of the strategy, one new vertex $v$ is created in $F$ (but not on its boundary). The adversary makes the probe touch $v$ and returns the output to the algorithm. Notice that in the second stage, $\obj$ has no confirmed edges.
\item If we are in the third stage, the adversary outputs the result as it is and the algorithm can interpret it as an edge between $v_{B_1}$ and $v_{B_2}$. In this case, $F$ is the line segment $\overline{v_{B_1} v_{B_2}}$.
\item If we are in the fourth stage and the algorithm did not reveal the last $n$-th vertex of $\obj$, the adversary proceeds similarly to Step $1$. If the algorithm knows all the vertices but did not confirm all the edges, we proceed according to Step $2$. 
\end{enumerate}

\vspace{1em}
We conclude the adversary's strategy with the case when $N_B = 3$. Notice that this case can happen only when $\pi/3 < \omega \leq \pi/2$. The last three stages of the strategy are identical to the case when $N_B = 2$, so we describe in short the initialization step only. 

The adversary chooses $\varepsilon = \frac{1}{10}(\omega - \pi/3)$, defines a circle $\Psi$, a point $p$ as a center of $\Psi$ and reveals all this information to the algorithm. It then chooses $n \geq 3$, sets $N_B = 3$ and initialises $A$ to be $\Psi$.

On the first valid probe of the algorithm, whose direction $\overrightarrow{L}$ contains $p$, the adversary returns two vertices of $\obj$: $v_{B_1}$ and $v_{B_2}$. Let $(q,H_1,H_2,p_1,p_2)$ be the outcome of the probe along $\overrightarrow{L}$ returned by the adversary, such that: $p_1 = v_{B_1}$, $p_2 = v_{B_2}$, $q \neq p_1$, $q \neq p_1$, $|q p_1| = |q p_2|$, $H_1$ (respectively $H_2$) makes a negative (respectively positive) angle of $\omega / 2$ with $\overrightarrow{L}$. Refer to Figure~\ref{fig:adversary_NB_3}. The adversary marks $v_{B_1}$ and $v_{B_2}$ as narrow vertices.

\begin{figure}[ht]
\centerline{\resizebox{!}{8cm}{\includegraphics{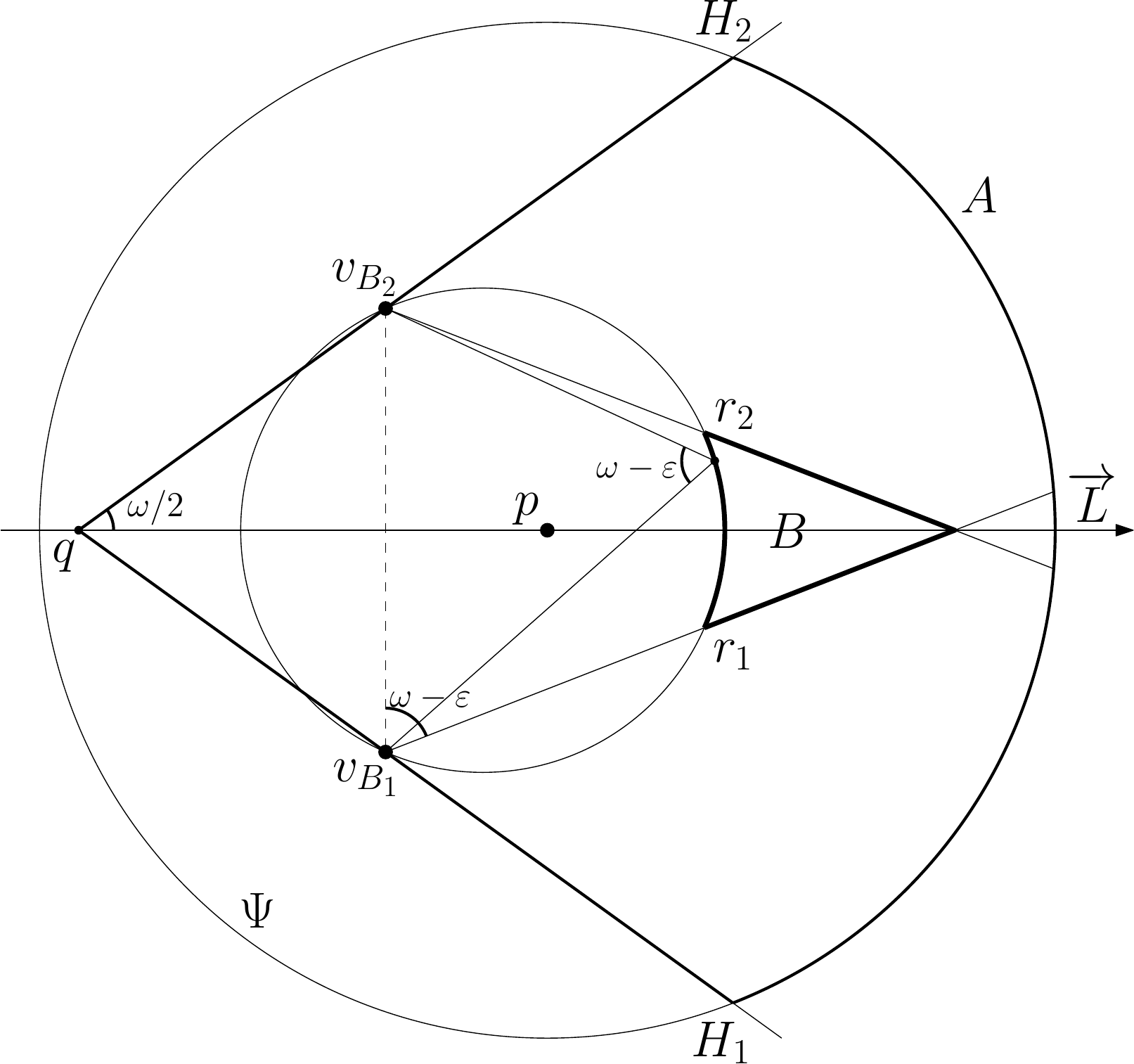}}}
\caption{Initialization step for $N_B = 3$. The closed curve $A$ is highlighted in bold. The third narrow vertex $v_{B_3}$ can be positioned inside the region $B$, marked in extra-bold.}
\label{fig:adversary_NB_3}
\end{figure}

Let $r_1$ (respectively, $r_2$) be a ray emanating from $v_{B_1}$ (respectively, $v_{B_2}$) and making a negative (respectively, positive) angle of $\omega - \varepsilon$ with the line segment $\overline{v_{B_1} v_{B_2}}$.
Let $p'$ be a point interior to the $\omega$-wedge of the first valid probe, such that $\angle (v_{B_2}, p', v_{B_1}) = \omega - \varepsilon$, and let $C$ be a disk defined by $v_{B_1}$, $v_{B_2}$ and $p'$. Let $H_{r_1}$ (respectively, $H_{r_2}$) be a half-plane containing $r_1$ (respectively, $r_2$) on its boundary and containing $q$. Let $H'$ be a half-plane containing $\overline{v_{B_1} v_{B_2}}$ on its boundary and not containing $q$. The region $B$, where the third narrow vertex $v_{B_3}$ can be positioned, is defined as $\Psi \cap H_{r_1} \cap H_{r_2} \cap H' \setminus C$. Refer to Figure~\ref{fig:adversary_NB_3}.

The adversary stops the first valid probe in such a way that: $B$ is non-empty; $v_{B_1}$ and $v_{B_2}$ are interior to $\Psi$; and for every point $t \in B$, the triangle $\triangle (v_{B_1}, v_{B_2}, t)$ contains $p$.

\vspace{0.5em}
Let $L'$ be the direction of the next valid probe of the algorithm.  The vertex $v_{B_3}$ is created and revealed when it is possible to position the $\omega$-wedge of the probe along $L'$ such that one of its arms, or the apex itself, intersects $B$. The vertex $v_{B_3}$ should be created such that none of the arms of the probe makes an angle of $\varepsilon / 2$ or smaller with the line segment $\overline{v_{B_1} v_{B_2}}$, $\overline{v_{B_2} v_{B_3}}$ or $\overline{v_{B_1} v_{B_3}}$. Since $v_{B_1}$ and $v_{B_2}$ are narrow vertices, $v_{B_3}$ cannot be revealed simultaneously (via the same probe) with a new vertex or an edge. 

Notice that the region $B$ can shrink as a result of the vertex $v$ being revealed between $v_{B_1}$ and $v_{B_2}$ on the boundary of $\obj$ in a counter-clockwise direction from $v_{B_2}$ to $v_{B_1}$. We prevent $B$ from disappearing by positioning $v$ such that $\pi - 2 \varepsilon \leq \angle (v_{B_1}, v, v_{B_2}) \leq \pi - \varepsilon$.

\begin{theorem}[Lower Bound $N_B=2$, $N_B=3$]
\label{theo:B-case_sum-up_Lower_Bound}
Given $\omega$-wedge ($0 < \omega \leq \pi/2$), convex polygon $\obj$, and a real number $\epsilon > 0$ (such that for every pair of narrow vertices of $\obj$: $v_{B_1}$ and $v_{B_2}$, if $\overline{v_{B_1} v_{B_2}}$ is not an edge that belongs to $\obj_{v_{B_1} v_{B_2}}$, then there exists a vertex $v \in \obj_{v_{B_1} v_{B_2}}$ such that $\angle (v_{B_2}, v, v_{B_1}) \leq \pi-\varepsilon$). For every algorithm, there exists a convex polygon $\obj$ with two or three narrow vertices such that $2n + 2$ $\omega$-probes are required to determine its shape.
\end{theorem}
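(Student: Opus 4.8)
The plan is to give an adversarial argument that parallels the lower-bound proofs of Theorems~\ref{theo:model_1_sum-up_theorem_lower_bound} and~\ref{theo:B-case_sum-up_Lower_Bound_NB=1}, reusing the four-stage strategy (initialization, vertex-revealing, edge-confirming, and the final stage) together with the $\varepsilon$-controlled responses to probes that approach a narrow--narrow line, exactly as set up in the paragraphs preceding the statement. As in the upper-bound proof (Theorem~\ref{theo:B-case_sum-up_Upper_Bound}), I would track the potential
\[
\Phi = 2|P| + N_B' - F,
\]
where $P$ is the set of revealed vertices, $N_B'$ the number of them marked as narrow, and $F$ the number of not-yet-confirmed edges maintained by the adversary. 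First I would fix the terminal value: by arguments identical to Claims~$1$ and~$2$ in the proof of Theorem~\ref{theo:model_1_sum-up_theorem_lower_bound}, any correct algorithm must finish with $|P|=n$, $F=0$ and $N_B'=N_B$ (otherwise the adversary exhibits two distinct convex $n$-gons consistent with the same probe sequence), so $\Phi$ must rise from $0$ to $2n+N_B$.

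The heart of the count is a probe-by-probe bound on $\Delta\Phi$, with separate bookkeeping for the two non-generic kinds of probe the adversary controls. A single probe raises $\Phi$ by at most one, \emph{except} when it reveals a narrow vertex: revealing one narrow vertex gives $+2$, and, only when $\pi/3<\omega\le\pi/2$, the first probe of the $N_B=3$ strategy reports both $v_{B_1}$ and $v_{B_2}$ at once for $+4$. On the other side, the $\varepsilon$-mechanism forces genuinely wasted probes with $\Delta\Phi=0$: resolving the chain between two narrow vertices requires aiming within $\varepsilon/2$ of the narrow--narrow line, and since the algorithm cannot know from which side the guaranteed $\ge\varepsilon/2$ angle opens, the adversary answers the wrong-side attempt with no new information (via the denial/$\flag$ rule and the shrinking of the feasible region) and concedes only on the second attempt. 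Since $\sum_i(\Delta\Phi_i-1)$ equals (total excess from narrow reveals) minus (number of wasted probes), the number of probes is $\Phi_{\mathrm{gain}}$ minus the excess the adversary must concede plus the wasted probes it forces.

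Carrying this out, for $N_B=2$ the excess is exactly $2$ (two single narrow reveals, the adversary never conceding a regular double-reveal), and the two boundary chains joining the narrow pair each force one wasted probe, giving $(2n+2)-2+2=2n+2$; for $N_B=3$ the excess is $4$ (the $+4$ double narrow reveal plus the $+2$ third-narrow reveal) and the three chains of the narrow triangle each force one wasted probe, giving $(2n+3)-4+3=2n+2$. To make every step rigorous I would, stage by stage, display the adversary's responses and verify the invariant, leaning on Lemma~\ref{lem:feasible_triangle} so that a denying vertex can always be placed in $F_{Q,e}\setminus l_e$ and hence each ordinary vertex-query is answered with exactly one new vertex, and leaning on the region $B$ of Figure~\ref{fig:adversary_NB_3} so that the third narrow vertex can always be planted (this is where $\omega>\pi/3$ is used). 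I would also include the $\omega=\pi/2$ variant of the initialization, as in Theorem~\ref{theo:B-case_sum-up_Lower_Bound_NB=1}, to cover that endpoint.

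The main obstacle I anticipate is twofold. First, one must show the wasted probes are unavoidable for \emph{every} deterministic algorithm, not merely a naive one: at the instant the algorithm commits to a side of a narrow--narrow line, the adversary must still possess the freedom (an unfixed narrow vertex, or unused $\varepsilon$-slack on both sides, as in the $\flag$ analysis and the constraint $\pi-2\varepsilon\le\angle(v_{B_1},v,v_{B_2})\le\pi-\varepsilon$) to declare that side uninformative while keeping some convex $n$-gon with exactly $N_B$ narrow vertices alive and consistent with all past answers. Second, and more delicately, one must verify that the three bookkeeping effects of an extra narrow vertex cancel precisely---the $+1$ it adds to the terminal potential and the $+2$ excess from a forced double narrow-reveal against the one additional wasted probe---so that $N_B=2$ and $N_B=3$ land on the identical bound; establishing that the $N_B=3$ initialization genuinely forces the double narrow-reveal (the first valid probe must report both $v_{B_1}$ and $v_{B_2}$, which is possible exactly because $\omega>\pi/3$) is the crux that makes the two cases coincide at $2n+2$.
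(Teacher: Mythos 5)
Your proposal is correct and takes essentially the same route as the paper: the same four-stage adversary with the $\varepsilon$-mechanism forcing one wasted probe per narrow--narrow chain (two for $N_B=2$, three for $N_B=3$), the same special accounting for narrow-vertex reveals (including the double reveal on the first valid probe when $N_B=3$), and arithmetic that matches the paper's count of $2n+2$ in both cases. The only difference is presentational: you phrase the count via the potential $\Phi = 2|P| + N_B' - F$ (which the paper itself uses in Theorems~\ref{theo:model_1_sum-up_theorem_lower_bound} and~\ref{theo:B-case_sum-up_Upper_Bound}), whereas the paper's proof of this theorem is a direct probe-by-probe tally.
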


\begin{proof}
Suppose that $N_B = 3$. During the initialization step, at least one probe is used and two vertices are revealed. One of the subsequent probes reveals $v_{B_3}$ and no additional information. The algorithm wastes at least one probe per every pair of narrow vertices, which results in at least 3 probes and no new information. The rest of the probes return each at most one new piece of information about $\obj$, making it at least $2n-3$ probes. Thus, a total of $2n+2$ probes are necessary to reconstruct $\obj$.

The analysis of the strategy for polygons with $N_B = 2$ is similar. The initialization step requires at least two probes and results in two vertices of $\obj$. At least two probes bring no new information due to the usage of $\varepsilon$. Not less than $2n-2$ additional probes are required to reveal $n-2$ vertices and to confirm $n$ edges. This adds up to at least $2n+2$ probes.
\qed
\end{proof}

\section{Conclusion and Future Work}
\label{sec:Conclusion}

In this paper, we presented an $\omega$-wedge device together with optimal probing algorithms for convex polygons with no acute angles or when at most one angle is smaller then $\omega$. When the number of such angles is more than one (possibly only two or three), the performance is weaker, yet we show it to be {\it almost} optimal. The main reason why our results are not tight is due to the fact that the number of {\it bad} angles is assumed a priory unknown. Knowing this extra information about polygon $\obj$ beforehand would improve our strategy and bring down the upper bounds. In particular, Lemma~\ref{lem:every_probe_touches_B} and Observation~\ref{observ:different} show that it is possible to detect that a particular vertex is narrow without specifically confirming this fact, that is, saving one probe. This means that we could save up to $3$ probes (for polygons with exactly $3$ narrow vertices), which would make our strategy optimal.

\begin{lemma}
\label{lem:every_probe_touches_B}
Assume that we are given a convex polygon $\obj$ with exactly $3$ narrow vertices. Every valid $\omega$-probe of $\obj$ touches a narrow vertex.
\end{lemma}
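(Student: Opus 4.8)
The plan is to argue by contradiction. Suppose there exists a valid $\omega$-probe $(\omega, q, H_1, H_2)$ of $\obj$ whose apex $q$ does not coincide with any narrow vertex. Since $\obj$ has three narrow vertices and is therefore not a rectangle, by Observation~\ref{observ:N_B_general} we have exactly $N_B = 3$, and by Observation~\ref{observ:N_B_small_omega} this forces $\pi/3 < \omega \le \pi/2$. Let $v_{B_1}, v_{B_2}, v_{B_3}$ be the three narrow vertices. The key geometric fact I would exploit is that each narrow vertex has internal angle at most $\omega$, so the triangle $\triangle(v_{B_1}, v_{B_2}, v_{B_3})$ has all three of its angles bounded below in a way that constrains the apex: the interior of $\obj$ (and in particular the triangle spanned by the three narrow vertices) must lie inside the wedge $W = \wedge(\omega, q, H_1, H_2)$.

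First I would establish that the apex $q$, together with the two contact points $p_1 \in H_1$ and $p_2 \in H_2$, subtends the angle $\omega$ at $q$, i.e. $\angle(p_1, q, p_2) = \omega$. Next I would consider the three narrow vertices as a triangle $T = \triangle(v_{B_1}, v_{B_2}, v_{B_3})$ contained in $\obj \subseteq W$. The central claim is that one cannot simultaneously (i) enclose all three narrow vertices in a wedge of opening exactly $\omega$, (ii) have the apex strictly off all three narrow vertices, and (iii) respect that the internal angle of $\obj$ at each narrow vertex is at most $\omega$. The cleanest route is probably an angle-sum argument: for a wedge with apex $q \notin \{v_{B_1},v_{B_2},v_{B_3}\}$ to contain $T$ with both arms touching $\obj$, the apex $q$ would have to see the three narrow vertices within an angular span of $\omega$; but since the sum of the three internal angles at the narrow vertices already accounts for ``most'' of the angular budget (each being at most $\omega$, and their exterior angles summing with the rest to $2\pi$), I expect to derive that the wedge arms are forced to pass exactly through narrow vertices, contradicting (ii).

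The main obstacle I anticipate is making the ``angular budget'' argument precise rather than merely suggestive: I need to show rigorously that a wedge of opening $\omega$ whose apex avoids all narrow vertices necessarily fails to contain $T$, or else fails to have both arms in contact with $\obj$. A promising concrete approach is to use the $\omega$-cloud: by Definition~\ref{def:omega_cloud}, the apices of all valid probes form $\Omega$, and the narrow vertices are precisely the pivot points of $\Omega$. Since there are only three narrow vertices, the cloud $\Omega$ consists of arcs whose pivots are exactly $v_{B_1}, v_{B_2}, v_{B_3}$. I would then show that when $N_B = 3$, the three pivot points are mutually close enough (forced by $\omega \le \pi/2$ together with the three internal angles each being at most $\omega$) that \emph{every} arc of $\Omega$ degenerates or has one of the narrow vertices lying on the supporting pair of any valid probe. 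Concretely, for any apex $x$ on an arc $\Gamma_j$, the two supporting points of $\Gamma_j$ are vertices of $\obj$, and I would argue that the geometric constraints force at least one supporting point of every arc to be a narrow vertex, so that every probe contacts a narrow vertex through one of its arms. Verifying that no arc of $\Omega$ can have both supporting points being non-narrow vertices when $N_B = 3$ is exactly the delicate case analysis I expect to occupy the bulk of the proof.
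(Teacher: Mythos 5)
There is a genuine gap: your proposal never actually delivers the decisive argument. You set up a contradiction, state a ``central claim,'' and then, in your own words, defer its verification (``no arc of $\Omega$ can have both supporting points being non-narrow vertices'') as ``delicate case analysis'' that ``would occupy the bulk of the proof.'' But that deferred statement \emph{is} the lemma; everything before it is reformulation, so nothing is proved. Worse, the logical frame is broken. First, your negation is too weak: ``touches a narrow vertex'' means a narrow vertex lies on $H_1 \cup H_2$ (possibly at the apex), so the correct contradiction hypothesis is that \emph{neither contact point nor the apex} is narrow, not merely that the apex avoids the narrow vertices. This matters because your intended conclusion --- ``the wedge arms are forced to pass exactly through narrow vertices, contradicting (ii)'' --- contradicts nothing: arms passing through narrow vertices is entirely consistent with the apex being off all narrow vertices. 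In fact your ``central claim'' that (i), (ii), (iii) are incompatible is false: there are infinitely many valid probes whose arms touch narrow vertices while the apex lies on an arc of the cloud, away from every vertex. Second, the $\omega$-cloud facts you lean on are misstated: narrow vertices are the pivot points that lie \emph{on} the polygon, but a cloud generally has $O(n)$ pivot points, not three, so ``the narrow vertices are precisely the pivot points of $\Omega$'' cannot be used.

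For comparison, the paper closes exactly the step you left open, with a short quadrilateral angle-sum argument. Assume $p_1$ and $p_2$ are both non-narrow; since narrow vertices lie on the $\omega$-cloud, none of them can lie on the chain of $\Ob$ between $p_2$ and $p_1$ facing the apex, so all three narrow vertices $v_{B_1}, v_{B_2}, v_{B_3}$ lie on the far chain. Let $v$ be the intersection of $H_2$ with the line through $v_{B_1}$ and $v_{B_2}$, and $u$ the intersection of $H_1$ with the line through $v_{B_1}$ and $v_{B_3}$. In the quadrilateral $\{q, u, v_{B_1}, v\}$ the angle at $q$ is $\omega$, the angles at $u$ and $v$ are strictly less than $\omega$ (forced by the internal angles of $\Ob$ at $v_{B_3}$ and $v_{B_2}$ being at most $\omega$), and the angle at $v_{B_1}$ is at most $\omega$; hence the angle sum is strictly less than $4\omega \leq 2\pi$, contradicting that a quadrilateral's angles sum to $2\pi$. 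Your ``angular budget'' intuition points in this direction, but without constructing the auxiliary points $u$ and $v$ and running the angle count, the intuition never becomes a proof.
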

\begin{proof}
Assume to the contrary that the outcome ($q$, $H_1$, $H_2$, $p_1$, $p_2$) of the valid probe returns two vertices $p_1$ and $p_2$ such that none of them is a narrow vertex. Refer to Figure~\ref{fig:lemma_6}. By definition, a narrow vertex belongs to the $\omega$-cloud of $\obj$, meaning that there is no arc of the $\omega$-cloud above the narrow vertex. Thus, there are no narrow vertices between $p_1$ and $p_2$ in a counter-clockwise direction from $p_2$ to $p_1$ around the boundary of $\obj$.

Let $v_{B_1}$, $v_{B_2}$ and $v_{B_3}$ be the three narrow vertices of $\obj$. Let $v$ be an intersection of $H_2$, and the line through $v_{B_1}$ and $v_{B_2}$; and let $u$ be an intersection of $H_1$, and the line through $v_{B_1}$ and $v_{B_3}$. Consider the quadrilateral $\{q, u, v_{B_1}, v\}$. By definition, the internal angles of $\obj$ corresponding to the narrow vertices are smaller or equal to $\omega$. Thus, the internal angles of the quadrilateral at $v$ and $u$ are strictly smaller than $\omega$. The angle at $q$ equals $\omega$. Thus, the sum of the internal angles of the quadrilateral is smaller then $4 \omega$ and thus smaller then $2 \pi$, which is a contradiction.
\qed
\end{proof}

\begin{figure}[h]
\centerline{\resizebox{!}{4cm}{\includegraphics{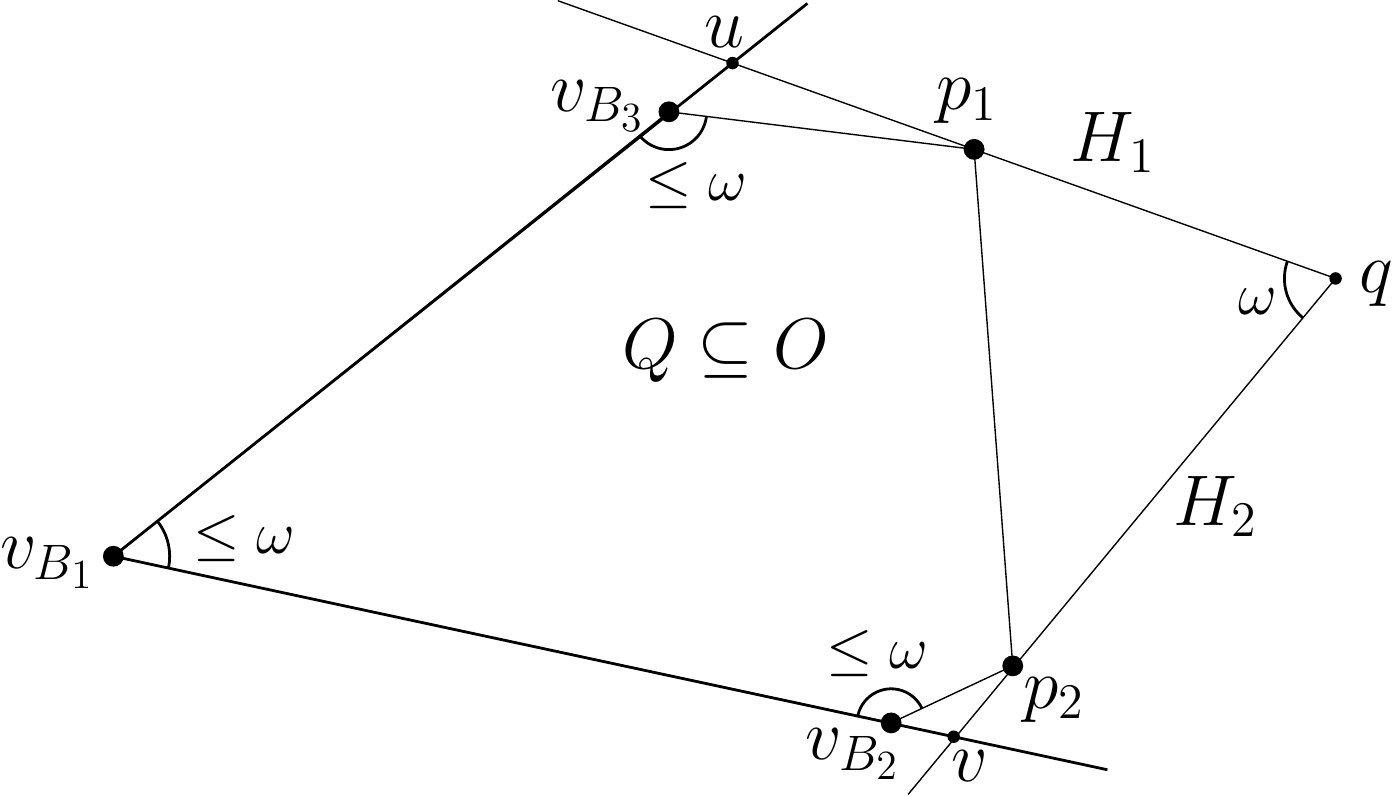}}}
\caption{Illustration of the proof of Lemma~\ref{lem:every_probe_touches_B}. $v_{B_1}$, $v_{B_2}$ and $v_{B_3}$ are narrow vertices.}
%\vspace{-1.5em}
\label{fig:lemma_6}
\end{figure}

The following observation follows from Lemma~\ref{lem:every_probe_touches_B}.

\begin{observation}
\label{observ:different}
Assume that we are given a convex polygon $\obj$ with exactly $3$ narrow vertices. Let ($q$, $H_1$, $H_2$, $p_1$, $u$) and ($q'$, $H'_1$, $H'_2$, $u$, $p_2$) be two outcomes of two valid $\omega$-probes of $\obj$, respectively. Suppose that $q$ and $q'$ belong to different arcs of $\omega$-cloud of $\obj$ and $p_1 \neq p_2$. If $\angle (p_1, u, p_2) > \omega$, then $p_1$ and $p_2$ are narrow vertices. Otherwise, if $\angle (p_1, u, p_2) \leq \omega$, $u$ is a narrow vertex.
\end{observation}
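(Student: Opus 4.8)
The plan is to reduce everything to a single equivalence, namely that (under the stated hypotheses) $u$ is a narrow vertex if and only if $\angle(p_1,u,p_2)\le\omega$, and then read off the two assertions. Throughout I would lean on Lemma~\ref{lem:every_probe_touches_B} as the workhorse, supplemented by one elementary fact about angles subtended at a vertex of a convex polygon.

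First I would dispatch the preliminaries. Because $q$ and $q'$ lie in the \emph{interior} of two arcs of the $\omega$-cloud rather than at pivot points, neither apex lies on $\obj$; consequently each probe has two \emph{distinct} contact points and Lemma~\ref{lem:every_probe_touches_B} applies to both. It gives that at least one of $\{p_1,u\}$ is narrow and at least one of $\{u,p_2\}$ is narrow. Since narrow vertices are exactly the pivots of the cloud, the geometry of the two probes also fixes the counter-clockwise order of the three points along $\partial\obj$ as $p_2,u,p_1$, with no narrow vertex strictly between $p_2$ and $u$ or between $u$ and $p_1$. Finally, since $p_1,p_2,u$ are all vertices of the convex polygon $\obj$, the chords $\overline{up_1}$ and $\overline{up_2}$ lie inside the internal cone at $u$, so $\angle(p_1,u,p_2)\le\theta_u$, where $\theta_u$ is the internal angle of $\obj$ at $u$.

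One direction of the equivalence is then immediate: if $u$ is narrow then $\theta_u\le\omega$, hence $\angle(p_1,u,p_2)\le\theta_u\le\omega$. For the converse I would argue the contrapositive. Assume $u$ is not narrow, so $\theta_u>\omega$; the two applications of Lemma~\ref{lem:every_probe_touches_B} above then force both $p_1$ and $p_2$ to be narrow. I would now combine the two probes: probe~$1$ subtends the chord $\overline{up_1}$ at its apex at angle exactly $\omega$ (as $p_1\in H_1$, $u\in H_2$), probe~$2$ subtends $\overline{up_2}$ at angle $\omega$, and $u$ lies between $p_1$ and $p_2$ on the near boundary with $q,q'$ sitting on the two flanking arcs. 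Mirroring the quadrilateral angle-sum argument in the proof of Lemma~\ref{lem:every_probe_touches_B}, I would build a quadrilateral from $q,q',u$ and the narrow contacts: the two apex angles contribute $\omega$ each, the narrowness of $p_1$ and $p_2$ controls the angles at those vertices, and $\theta_u>\omega$ then forces $\angle(p_1,u,p_2)>\omega$, completing the contrapositive. With the equivalence in hand the two assertions follow at once: if $\angle(p_1,u,p_2)>\omega$ then $u$ is not narrow, whence $p_1$ and $p_2$ are narrow; and if $\angle(p_1,u,p_2)\le\omega$ then $u$ is narrow.

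The main obstacle is the converse direction of the equivalence. One cannot simply reverse $\angle(p_1,u,p_2)\le\theta_u$, because $p_1$ and $p_2$ need not be the neighbours of $u$, so the strict lower bound $\angle(p_1,u,p_2)>\omega$ must genuinely exploit that each of $p_1,p_2$ is the \emph{opposite} contact of a valid probe whose other arm supports $u$ (equivalently, that $q$ and $q'$ lie on the two arcs flanking $u$ rather than half a revolution apart). Making the auxiliary angle-sum argument watertight is where the real care is needed: in particular verifying that the constructed quadrilateral is simple, so that its internal angles sum to $2\pi$, and handling the degenerate sub-case in which one arm becomes flush with an edge incident to $u$.
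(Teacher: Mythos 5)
Your reduction to the equivalence ``$u$ narrow $\iff \angle(p_1,u,p_2)\le\omega$'', the forward implication ($u$ narrow $\Rightarrow \angle(p_1,u,p_2)\le\theta_u\le\omega$), and the deduction of the first claim from Lemma~\ref{lem:every_probe_touches_B} are all correct, and this matches the paper, which offers no written proof beyond asserting that the observation follows from that lemma. The genuine gap is exactly where you flagged difficulty: the converse. Your proposed angle-sum argument uses only the data ``$p_1,p_2$ narrow, $u$ not narrow, two valid probes with apex angle $\omega$ flanking $u$'', and this data \emph{cannot} imply $\angle(p_1,u,p_2)>\omega$. Take $\omega=\pi/2$ and the convex hexagon with vertices, in counter-clockwise order, $p_2=(1,0)$, $a=(0.2,1.682)$, $u=(0,\sqrt3)$, $b=(-0.2,1.682)$, $p_1=(-1,0)$, $w'=(0,-0.2)$. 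Here $p_1,p_2$ are narrow (internal angles $\approx 76^\circ$), $u$ is not (internal angle $\approx 152^\circ$), the probe with apex $q=(-1,\sqrt3)$ and arms along the lines $x=-1$ and $y=\sqrt3$ has outcome $(q,H_1,H_2,p_1,u)$, the probe with apex $q'=(1,\sqrt3)$ has outcome $(q',H_1',H_2',u,p_2)$, the apices lie on different arcs, and yet $\angle(p_1,u,p_2)=60^\circ<\omega$. This hexagon has only \emph{two} narrow vertices, so it does not contradict the observation itself, but it defeats any proof that never invokes the third narrow vertex --- and yours never does. Indeed, in your polygon $p_2,q',u,q,p_1$ the vertex $u$ is reflex, and narrowness of $p_1$ does not bound the angle at $p_1$, because that angle is measured against the ray $p_1\to q$, which points outside $\obj$; in the example above that angle equals exactly $\omega$, and the angle sum $3\pi$ is satisfied with $\angle(p_1,u,p_2)=60^\circ$, so no contradiction can be extracted.

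The repair is to run the quadrilateral argument on the three \emph{narrow vertices}, not on the apices. If $u$ is not narrow, Lemma~\ref{lem:every_probe_touches_B} makes $p_1$ and $p_2$ narrow, and the hypothesis of exactly three narrow vertices supplies a third one, $w$. No vertex strictly inside either near chain (counter-clockwise from $p_2$ to $u$, or from $u$ to $p_1$) can be narrow: such a vertex lies strictly inside the triangle formed by the relevant apex and its two contacts, hence sees those contacts at an angle greater than $\omega$, so its internal angle exceeds $\omega$. Therefore $w$ lies on the chain from $p_1$ to $p_2$ avoiding $u$, and $p_2,u,p_1,w$ is a convex quadrilateral in this counter-clockwise order. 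Its angles at $p_2$, $p_1$, $w$ are each at most the corresponding internal angle of $\obj$, hence at most $\omega$, so $\angle(p_1,u,p_2)\ge 2\pi-3\omega$, which is strictly greater than $\omega$ whenever $\omega<\pi/2$; for $\omega=\pi/2$, equality would force $p_2,u,p_1,w$ to be a rectangle equal to $\obj$, which would have four narrow vertices, contradicting the hypothesis. This strict inequality is the missing step, and it is also the precise point where the hypothesis ``exactly $3$ narrow vertices'' (and not merely Lemma~\ref{lem:every_probe_touches_B}) is indispensable.
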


\vspace{0.5em}
Alternatively, we can take advantage of composite probing strategy that incorporates usage of up to $3$ finger probes. Whenever we have a pair of narrow vertices $v_{B_1}$, $v_{B_2}$ (with no additional information discovered between them on the boundary of $\obj$), we can shoot a finger probe orthogonally to $\overline{v_{B_1} v_{B_2}}$. For convex polygons with two or three narrow vertices we will no longer require the help of $\varepsilon$ and no probes will be wasted on every pair of narrow vertices. This will result in optimal strategy without the advantage of knowing number of narrow vertices beforehand.

\vspace{0.5em}
We would like to explore different models of $\omega$-probing. For instance, one in which the outcome of a probe does not give us the contact points with the polygon. We know that $\obj$ belongs to the interior of a valid $\omega$-probe. We are given the orientation of the arms $H_1$ and $H_2$ and we know that they touch $\obj$, but we do not know the exact location of these contact points.

\vspace{0.5em}
In our current model, we reconstruct $\obj$ starting from within of $\obj$. We form a polygon $Q$ by connecting known vertices of $\obj$, so $Q$ completely resides inside $\obj$. We expand $Q$ by adding newly discovered vertices of $\obj$, so at the end of the reconstruction, $Q$ equals $\obj$.

In the model we just described, we should use a different approach. The convex polygon $Q$ is defined as a boundary of an intersection of the circle $\Psi$ and all $\omega$-probes made so far, implying that $\obj$ completely resides inside $Q$. By performing additional probes, we cut away parts of $Q$ so it shrinks closer to $\obj$.

\vspace{0.5em}
Another, interesting model of $\omega$-probing to explore would be one where the outcome of a probe is simply the apex.

\vspace{0.5em}
Our paper investigates $\omega$-probing with small $\omega$'s: $0 < \omega \leq \pi/2$. It would be interesting to consider cases when $\pi /2 < \omega < \pi$. In the latter case, the structure of the $\omega$-cloud is more complicated, since we no longer have the fact that each arc participates in $\omega$-cloud at most once.

\bibliographystyle{alpha}
\bibliography{AngularProbingOrientedArms}

%\newpage
 
%\section*{Appendix}
%\label{sec:appendix}

\end{document}